\documentclass[amsmath,amssymb,twocolumn,aps,prl,superscriptaddress,nofootinbib]{revtex4-2}

\usepackage[utf8]{inputenc}
\usepackage{algorithm2e}
\usepackage{graphicx}
\usepackage{amssymb}
\usepackage{enumitem}
\usepackage{amsmath}
\usepackage{amsthm}
\usepackage{qcircuit}
\usepackage{braket}
\usepackage{booktabs}
\usepackage{tabularx}
\usepackage{xcolor}
\usepackage{tikz}
\usetikzlibrary{fit,backgrounds,positioning,arrows.meta}
\usepackage{tikz-cd}
\usepackage{hyperref}

\newcommand{\rank}{\mathrm{rank}}
\newcommand{\rs}[1]{\langle #1 \rangle}
\newcommand{\ns}[1]{\langle #1 \rangle^\perp}
\newcommand{\apmtx}[1]{\mathrm{APM}(#1)}
\newcommand{\ord}[1]{\mathrm{ord}(#1)}
\newcommand{\pr}{\mathrm{Proj}}
\newcommand{\wt}[1]{\mathrm{wt}(#1)}
\newcommand{\supp}[1]{\mathrm{supp}(#1)}

\newcommand{\gth}[1]{\mathrm{girth}(#1)}
\newcommand{\CSS}[1]{\mathrm{CSS}(#1)}
\newcommand{\Aut}{\mathrm{Aut}}
\newcommand{\Z}{\mathbb{Z}}
\newcommand{\F}{\mathbb{F}}

\newcommand{\lead}[1]{\textit{#1.---}}

\newtheorem{theorem}{Theorem}
\newtheorem{corollary}[theorem]{Corollary}
\newtheorem{lemma}[theorem]{Lemma}
\newtheorem{proposition}[theorem]{Proposition}
\theoremstyle{definition}
\newtheorem{definition}{Definition}
\newtheorem{remark}{Remark}
\newtheorem{fact}[remark]{Fact}

\begin{document}

\title{Designer Codes from GALA:\\Compact, Self-Dual, and Rate-1/2 QEC on Reconfigurable Atom Arrays}

\author{Willers Yang}
\affiliation{
  QuEra Computing Inc., 1380 Soldiers Field Road, Boston, MA 02135, USA
}
\affiliation{
  Department of Computer Science, University of Chicago, Chicago, IL 60615, USA}
\author{Casey Duckering}
\author{Arpit Dua}
\affiliation{
  QuEra Computing Inc., 1380 Soldiers Field Road, Boston, MA 02135, USA
}

\begin{abstract}
High rate quantum low-density parity-check codes on reconfigurable neutral-atom arrays can reduce the overhead of quantum error correction, but near-term devices support only hundreds of qubits with limited reconfigurability from a few crossed acousto-optic deflectors (AOD). A practical code must be compact in addition to low-overhead, with checks and logical gates mapping onto hardware-compatible physical instructions. We introduce the \emph{GALA} codes, or \textit{G}roup-\textit{A}ction \textit{L}ifts with \textit{A}ctive orthogonality, that lifts over a product group $G=H_k\times C_m$ (or $H_k\ltimes C_m^{\,k}$). The small non-abelian factor $H_k$ supplies active orthogonality, reaching $1/2$ rate with above-weight distance, while the large abelian factor $C_m$ supplies symmetries that give code automorphisms and explicit, simple AOD move schedules. \emph{Hardware compatibility and logical capability thereby become customizable inputs to a code search} rather than properties verified post-hoc, making GALA designer codes by construction. The GALA family contains several previously discovered rate-1/2 Kasai codes of Ref.~\cite{kasai2026breakingorthogonalitybarrierquantum,zhao2026ultrahighratequantumerrorcorrection} while exposing simpler parameter bounds, logical operations, and ZX-dual variants with AOD-compatible fold-transversal Clifford gates. Our search yields a compact self-dual $[[132,30,12]]$ with a small number of $4$-cycles (almost girth-6) and below $10^{-8}$ logical error rate (LER) for memory at $10^{-3}$ physical error rate, $3.1\,$ms syndrome-extraction cycle and transversal Clifford gates; a girth-6, rate-$1/2$ $[[672,336,12]]$ with a $6.76\,$ms cycle with below $10^{-10}$ LER (extrapolated) and rate-$1/2$ barrier-breaking $[[1752,880,14]]$ and $[[2232,1120,16]]$ with exactly certified distances greater than check weights and all smaller than previously known hardware compatible rate-1/2 codes.
\end{abstract}

\maketitle 

\begin{figure}[t]
    \centering
    \includegraphics[width=\linewidth]{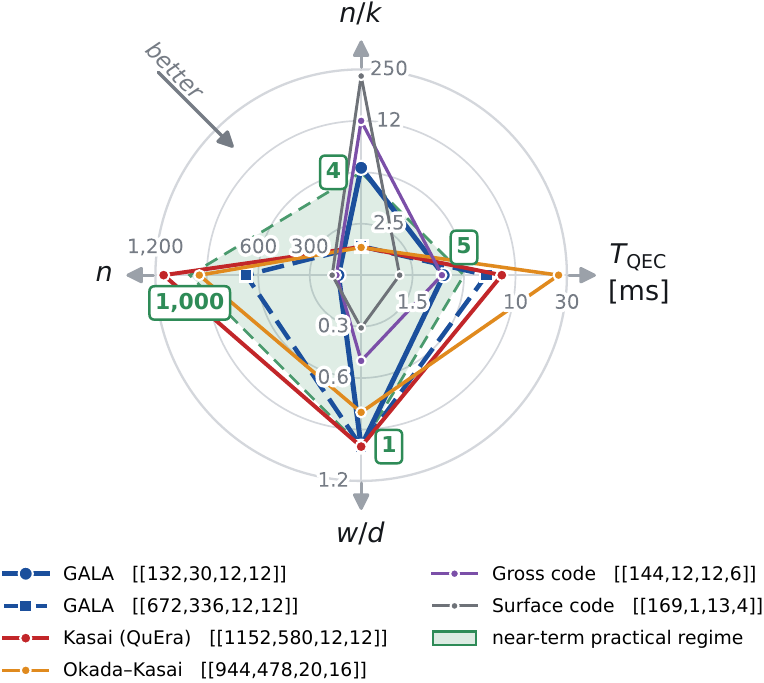}
    \caption{GALA codes simultaneously achieve low qubit overhead ($n/k$, top axis), compact block length ($n$, left axis), short syndrome-extraction cycles ($T_{QEC}$, right axis) and low ratio of stabilizer weight $w$ to distance $d$ ($w/d$, bottom axis) on reconfigurable atom arrays. The code parameters in the legend are labeled as $[n,k,d,w]$.}
    \label{fig:hero}
\end{figure}

Quantum low-density parity-check (qLDPC) codes can encode logical information at constant rate with bounded-weight checks, and are the leading route to low-overhead fault-tolerant quantum computation~\cite{Bravyi_2024,xu2024constant, yoder2025tourgrossmodularquantum,ismail2026fastparallelhighratestar,cain2026shorsalgorithmpossible10000}. Reconfigurable neutral-atom arrays (RNAA) have emerged as a natural hardware for them~\cite{bluvstein2024logical}: physical qubits are atoms held in optical tweezers, and long-range interactions are implemented using acousto-optic deflectors (AODs) that transport rows and columns of atoms to bring distant qubits into Rydberg range for parallel entangling operations~\cite{evered2023high}. In this setting the feasibility of implementing a code and its logical primitives rests less on gate depth, count, or locality than on how regularly its stabilizer checks and automorphisms compile onto AOD-compatible physical reconfiguration. Moreover, present-day devices can hold only a few hundred to a thousand atoms~\cite{bluvstein2024logical} in the entangling zone, so a feasible code on such devices must also be compact and high-rate to get us to fault-tolerance faster. The relevant question regarding a code's near-term practicality is therefore not only whether it has good parameters, but also whether there exist space- and time-efficient implementations of memory and logical primitives on current qubit-constrained hardware.

\begin{figure*}
    \centering
    \includegraphics[width=0.99\linewidth]{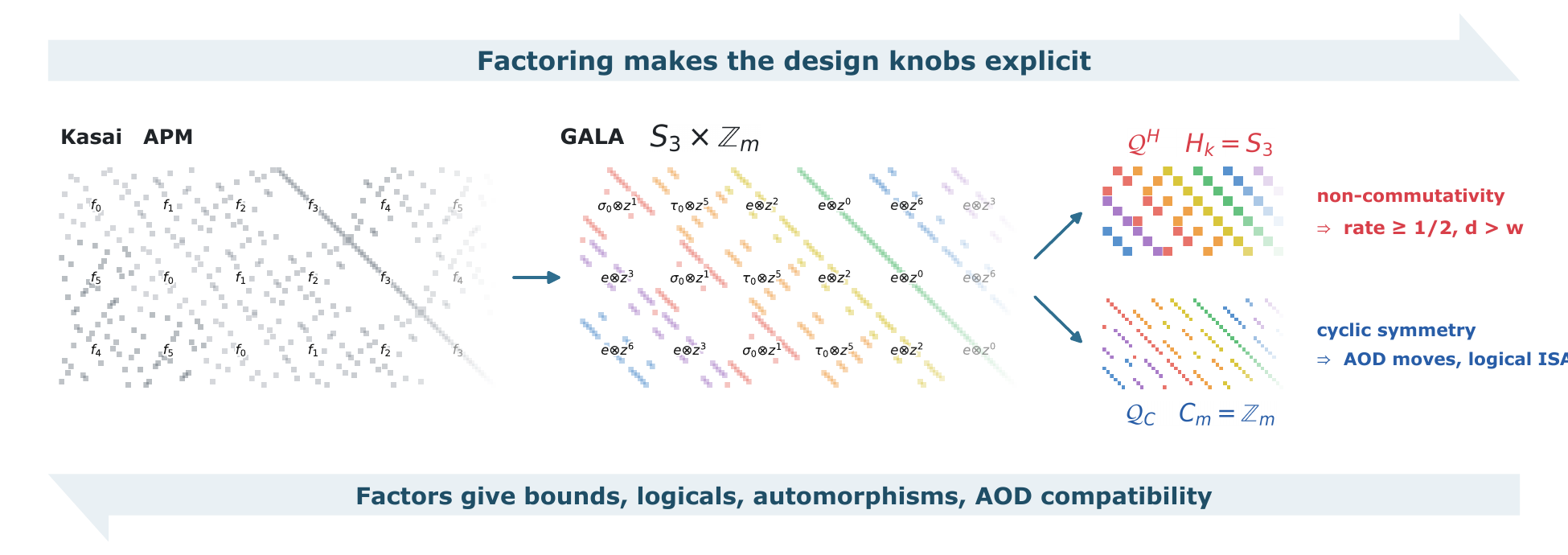}
    \caption{GALA: a family of two-block codes lifted over a product group $G=H_k\times C_m$: the small non-abelian $H_k$ supplies \emph{active orthogonality}, achieving high rate and above-check-weight distance, while the large abelian $C_m$ supplies symmetries that give both code automorphisms and parallel AOD row/column moves explicitly.}
    \label{fig:idea}
\end{figure*}

Recent years have produced many strong qLDPC codes for RNAA, yet none had simultaneously achieved ultra-high rate and logical operations with a practical implementation on current hardware. Leading proposals often prioritize hardware compatibility and settle for codes with modest encoding rate such as the  $[[144,12,\leq 12]]$ Bivariate-Bicycle (BB) codes with $r\leq \frac {1}{12}$ \cite{Bravyi_2024}. A high-rate code, on the other hand, can lead to order-of-magnitude reduction in space-overhead, but usually pays the cost in longer logical cycle times and block length \cite{cain2026shorsalgorithmpossible10000}; see Fig.~\ref{fig:hero} and Table~\ref{tab:comparison}. This trade-off has become more promising with recent breakthroughs from Kasai that showed greater space-overhead reduction is possible, with discoveries of girth-$8$ and check weight-$12$ codes such as $[[9216,4612,\leq48]]$ achieving ultra-high rate exceeding $\frac12$ \cite{kasai2026breakingorthogonalitybarrierquantum}. Using non-commuting affine permutation lifts, Kasai's construction achieves rate $\frac 12$ by selectively enforcing the Calderbank-Shor-Steane (CSS) orthogonality condition only on an \emph{active} subset of block rows. The remaining \emph{latent} rows, which are no longer valid stabilizers, are then removed to introduce new logical degrees of freedom, whose distance is not necessarily limited by the latent rows' bounded weights. The utility of this construction is blocked on two fronts---first, $n =9216$ along with the ancillas is too large to fit in any current or near-term device (at least not entirely in the entangling zone, which is necessary for low-depth); and second, it is not immediately clear how to implement the logical QEC primitives on hardware. A recent work from Okada and Kasai on pair-partition constructions over cyclic lifts improves on the former, reaching comparable rates at smaller length, but still does not address hardware mapping, logical gates, or decoding~\cite{okada2026pairpartitionconstructionscpmbasedquantum}. On the other front, hardware co-design of Kasai's high rate codes in Ref.~\cite{zhao2026ultrahighratequantumerrorcorrection} improved AOD compatibility by requiring the transition permutations of the syndrome-extraction schedule to commute with a reference affine permutation, leading to a smaller $[[1152,580,\le12]]$ code with an $8.3\,$ms error-correction (QEC) cycle assuming four crossed AOD pairs, or $13.3\,$ms with two. 

Several important gaps remain in making ultra-high-rate codes practical for RNAA.  On construction, the algebraic structures of these codes cannot be studied easily with existing tools, making various properties of interest such as girth, distance, and the structure of its logical space challenging to analyze without resorting to informal greedy heuristics or expensive exhaustive enumeration. Since its protograph is a circulant matrix restricted on a set of active rows, it has neither the product structure of codes such as hypergraph product~\cite{tillich2014quantum}, lifted product~\cite{panteleev2021degenerate}, or balanced product codes~\cite{breuckmann2021balanced}, whose logical observables and (fold)-transversal gates inherit from the structures of the seed classical codes~\cite{Quintavalle_2023,berthusen2025automorphismgadgetshomologicalproduct}, nor the group-algebra formalization~\cite{panteleev2021degenerate} due to the removed latent rows. As such, the availability of symmetries that lead to efficient logical operations and AOD compatibility is not guaranteed, but is instead a criterion for rejection sampling during code search. It also remains open whether one can obtain concrete instances equipped with an \emph{addressable} logical instruction set, as opposed to global permutations acting on every encoded block at once, and with a complete set of Clifford generators.

In this work, we close some of these gaps by viewing existing instances of ultra-high rate codes as special cases of an algebraically transparent and customizable construction, which we call \emph{GALA}---\textit{G}roup-\textit{A}ction \textit{L}ifts with \textit{A}ctive-orthogonality. GALA is a family of two-block CSS codes lifted over a product group $G=H_k\times C_m$ (a direct product, DPG) or $G=H_k\ltimes C_m^{\,k}$ (a semi-direct product, SPG). The factors carry two distinct physical roles. The small non-abelian factor $H_k$ replaces the APM in Kasai's construction, which supplies the controlled non-commutativity for active orthogonality and leads to ultra-high rate codes with above-check-weight distances. Unlike large APMs, however, we can choose $H_k$ to be small enough (e.g.\ $S_3$) where the commutativity pattern can be enumerated exhaustively. The large abelian factor $C_m$ supplies the code's symmetry and makes the algebra of the code more explicit, which serves both as (i) a tool to organize the code's logical operators and (ii) the AOD move schedule for syndrome extraction. See Fig.~\ref{fig:idea} for a direct comparison against the original APM formulation. As a consequence, GALA codes guarantee AOD compatibility by design and admit provably fault-tolerant logical operations via code automorphisms and chain-complex homomorphisms made apparent by the group-product formulation. Hardware compatibility and the logical instruction set architecture (ISA) hence become inputs to the code search, guaranteed by choices of $H_k$, $C_m$, and the group product, instead of emergent properties to be post-selected for. 

This factorization also lets us study the properties and limitations of high rate codes; derivations for the statements below are given in the Supplemental Material, and the distance and girth of every concrete instance are certified numerically. It allows us to derive closed-form rate, distance, and girth bounds from the group data; we can construct low-weight logical bases and transversal CNOTs from the quotient groups of the lift; the syndrome-extraction schedule is prescribed exactly by the list of group elements chosen; and by imposing a fold symmetry on the group elements, we can obtain codes with ZX-dualities that admit fold-transversal Clifford gates. Crucially, a preliminary search over more than $10^5$ codes returns instances small enough for present-day hardware, while retaining rate $\ge1/2$, girth $\ge6$, and QEC cycles of a few milliseconds; we report the full set below in Fig.~\ref{fig:results} and in Table~\ref{tab:frontier-nondual}-\ref{tab:frontier-other}

\lead{Construction} Fix integers $L$, $J\le L/2$ and an \emph{active set} $\Gamma\subseteq[\tfrac L2]\times[\tfrac L2]$ containing the diagonal band $\Gamma_J=\{(i,j):i+j\equiv r \ (\mathrm{mod}\ \tfrac L2),\ r<J\}$. Choose lifts $\mathcal F=\{(F_i,f_i)\}$, $\mathcal G=\{(G_i,g_i)\}\subset H_k\times C_m$ with $F_i,G_i\in H_k$, $f_i,g_i\in C_m$, and $[F_i,G_j]=0\iff(i,j)\in\Gamma$. Writing $z_i$ for the cyclic shift on $\Z_{L/2}$, form the block-circulant parents $\hat H_X=[F\,|\,G]$ and $\hat H_Z=[G^{T}\,|\,F^{T}]$ from
\begin{equation}\label{eq:lift}
F=\sum_{i} z_i\otimes F_i\otimes f_i,\qquad
G=\sum_{i} z_i\otimes G_i\otimes g_i,
\end{equation}
and keep the first $J$ block rows as the stabilizers $H_X,H_Z$. More generally the lift may be taken in the group ring $\F_2[G]$, with each entry $F_i$ a \emph{sum} of group elements rather than a single one; the stabilizer weight is then the total number of terms, $w=\sum_i(|F_i|+|G_i|)$, reducing to $w=L$ in the \emph{monomial} case of Eq.~\eqref{eq:lift}. Both regimes matter below: the monomial family is the one that contains the co-designed Kasai codes and for which the sharpest bounds hold, while the polynomial family decouples $w$ from $L$ and is what yields our most compact instances. A direct computation gives $[\hat H_X\hat H_Z^{T}]_{ij}=\Psi_{j-i}$ with the anticommutator sum $\Psi_r=\sum_{u}[F_u,G_{r-u}]$. The active constraint forces $\Psi_r=0$ for $r<J$, so the $J$ active rows commute; since $\Gamma\subsetneq[\tfrac L2]^2$, $\Psi_J\neq0$ and the $\tfrac L2-J$ latent rows do not---these are the degrees of freedom that can become above-weight logicals. The commutativity condition on the product group is itself a simple algebraic test: for a direct product $[F_i,G_j]=[F^H_i,G^H_j]\otimes f_i g_j$ vanishes iff the small non-abelian factors commute, so the entire orthogonality pattern is decided inside $H_k$. Taking $H_k=S_3$ with the ansatz $\mathcal F^{H}=\{\sigma_j,\tau_i,e,e,e,e\}$, $\mathcal G^{H}=\{e,e,\sigma_k,\tau_i,e,e\}$ and $C_m=\Z_{32}$ reproduces the co-designed $[[1152,580,\le12]]$ code of Ref.~\cite{zhao2026ultrahighratequantumerrorcorrection} exactly. More generally, requiring the transition permutations to commute with a reference affine permutation forces the lift to factor through a (semi-)direct product, so GALA contains the AOD-compatible Kasai codes of \cite{zhao2026ultrahighratequantumerrorcorrection} as a subclass while spanning a richer, more controllable search space; the semi-direct (SPG) version simply replaces the tensor product by the wreath product, which permutes the $k$ blocks by $H_k$ and $C_m$ acts within each block. We give more details on the construction Section~\ref{sec:construction} of SM.
\begin{figure}
    \centering
    \includegraphics[width=0.99\linewidth]{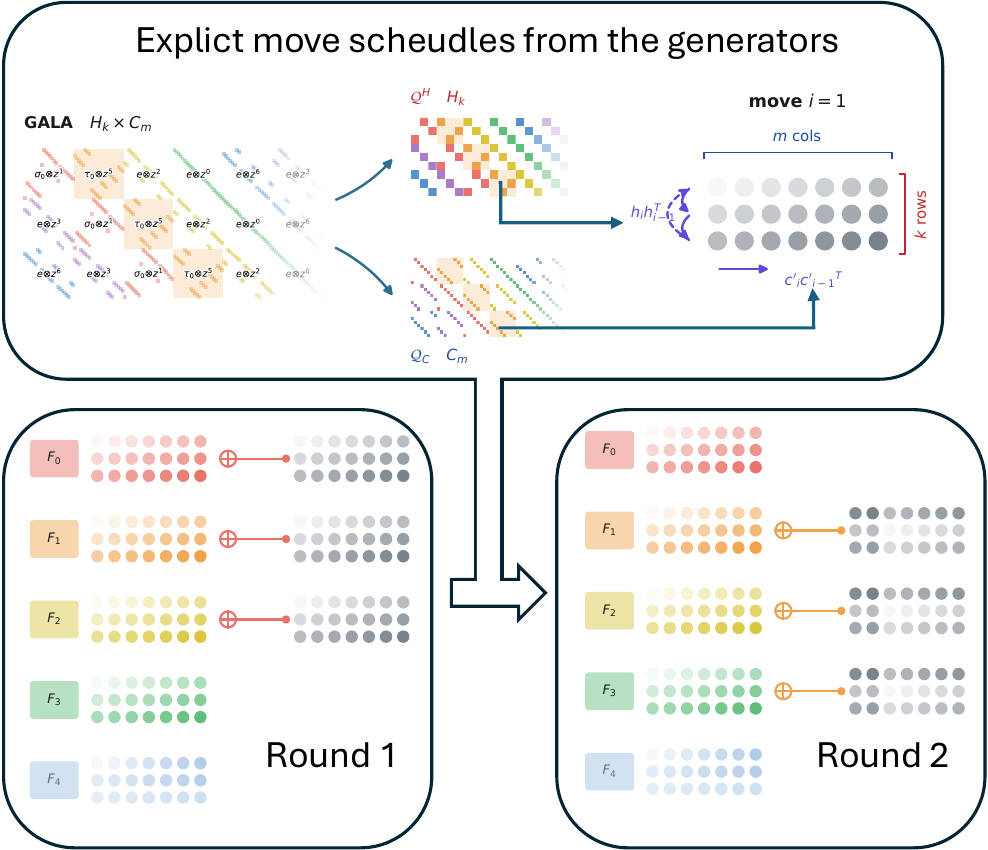}
    \caption{Layout and syndrome-extraction schedule read off from the generators. Data atoms occupy a $kp\times Lq$ grid and check atoms a $kp\times Jq$ grid, writing $C_m=C_p\times C_q$; the row coordinate carries the $H_k$ index and the $C_p$ component, the column coordinate carries the block index and the $C_q$ component. Round $i$ pairs each check with the data atom selected by the generator at that position of the block circulant, so the rearrangement between consecutive rounds is the transition element $F_iF_{i-1}^{T}$, which for a direct product factors into one rigid row translation $\sigma_{r,i}$ and one rigid column translation $\sigma_{c,i}$ (Eq.~\eqref{eq:moves})---a single pair of AOD sweeps per round.}
    \label{fig:move}
\end{figure}

\lead{Bounds from the group data} Because each generator factors into two groups, we can infer bounds on the code parameters from the codes assembled from the quotient groups that are easier to analyze. Each claim in this section is backed by a proof given in Section~\ref{sec:bounds} of SM. Every rate is at least $1-2J/L$, so $J=L/4$ yields rate $\ge1/2$, and for \emph{monomial} lifts the count is sharper,
\begin{equation}\label{eq:rate}
\frac kn\ \ge\ 1-\frac{2J}{L}+\frac{2(J-1)}{n},
\end{equation}
the correction term counting the $J-1$ all-ones relations among the active rows of each sector; these relations die mod $2$ for a polynomial lift, and our $L=8$ instances sit at exactly $1/2$. For monomial lifts the distance is likewise capped by the stabilizer weight, $d\le w=L$, whenever $J>L/4$ (full orthogonality returns), and at $L<12$ unless $n\ge L(L-1)^{d/2-1}$, so the barrier can only be broken in the window $L\ge12$, $2<J\le L/4$ or at exponentially large $n$. Intuitively, once too many rows are kept active the parent check matrix commutes with itself and weight-$L$ latent rows become logicals, while at $J\le2$ every data qubit meets only two checks and the shortest Tanner-graph cycle is already a logical operator; active orthogonality survives only between these two failure modes. Polynomial lifts evade the second mechanism, which is precisely how our $L=8$, $w=12$ instances reach $d=10$ and $d=12$ at much more compact lengths.

The product structure adds a second, complementary family of bounds. A logical operator of a quotient code, the code obtained by collapsing one group factor to a point, can be inflated back to a logical of the full code by tensoring with the all-ones vector on the collapsed coordinate, multiplying its weight by the size of that factor. Consequently
\begin{equation}\label{eq:quotient}
d\ \le\ \min\!\big[\,m\, d_{\mathcal Q^{H}},\ k\, d_{\mathcal Q_{C}}\,\big],
\end{equation}
where $\mathcal Q^{H}$ and $\mathcal Q_C$ are the top ($H_k$) and bottom ($C_m$) quotient codes; see Fig.~\ref{fig:idea}. This also yields explicit weight-$3L$ and weight-$2m$ logical representatives for the $S_3\times C_m$ family, to which the flagship co-designed Kasai instances belong \cite{zhao2026ultrahighratequantumerrorcorrection}. A covering-space argument shows girth is inherited from the bottom quotient, and a greedy choice of bottom shifts reaches girth $\ge6$ once $m>2(J-1)L$. Thus $n,k,d$, and girth are all bounded directly by $(L,J,k,m)$, turning a hyperparameter sweep into a targeted design---and, by keeping $L,k,m$ as small as possible, into a search for compact hardware-compatible codes.
\begin{figure*}[t]
\centering
\includegraphics[width=0.48\linewidth]{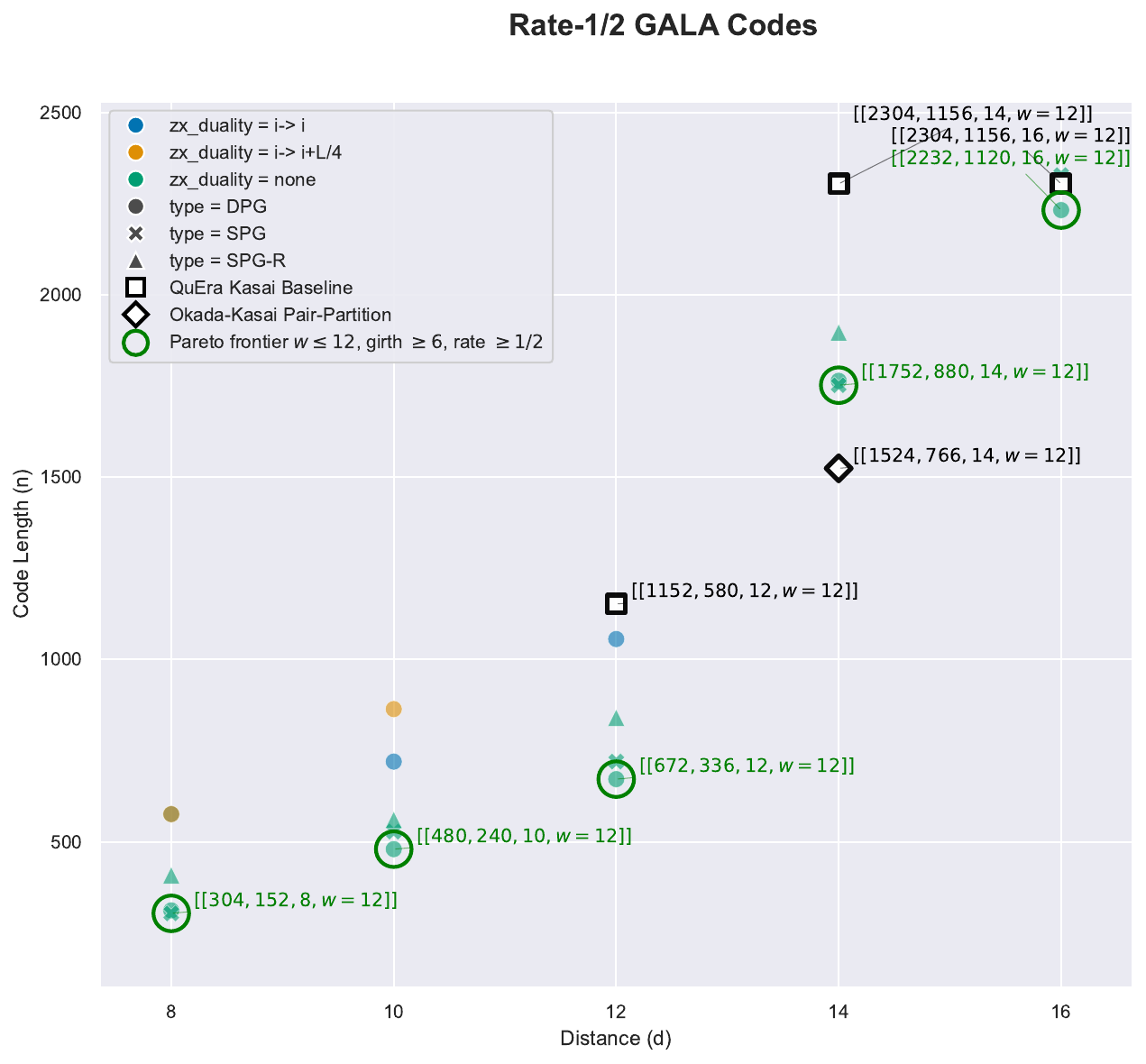}\hfill
\includegraphics[width=0.48\linewidth]{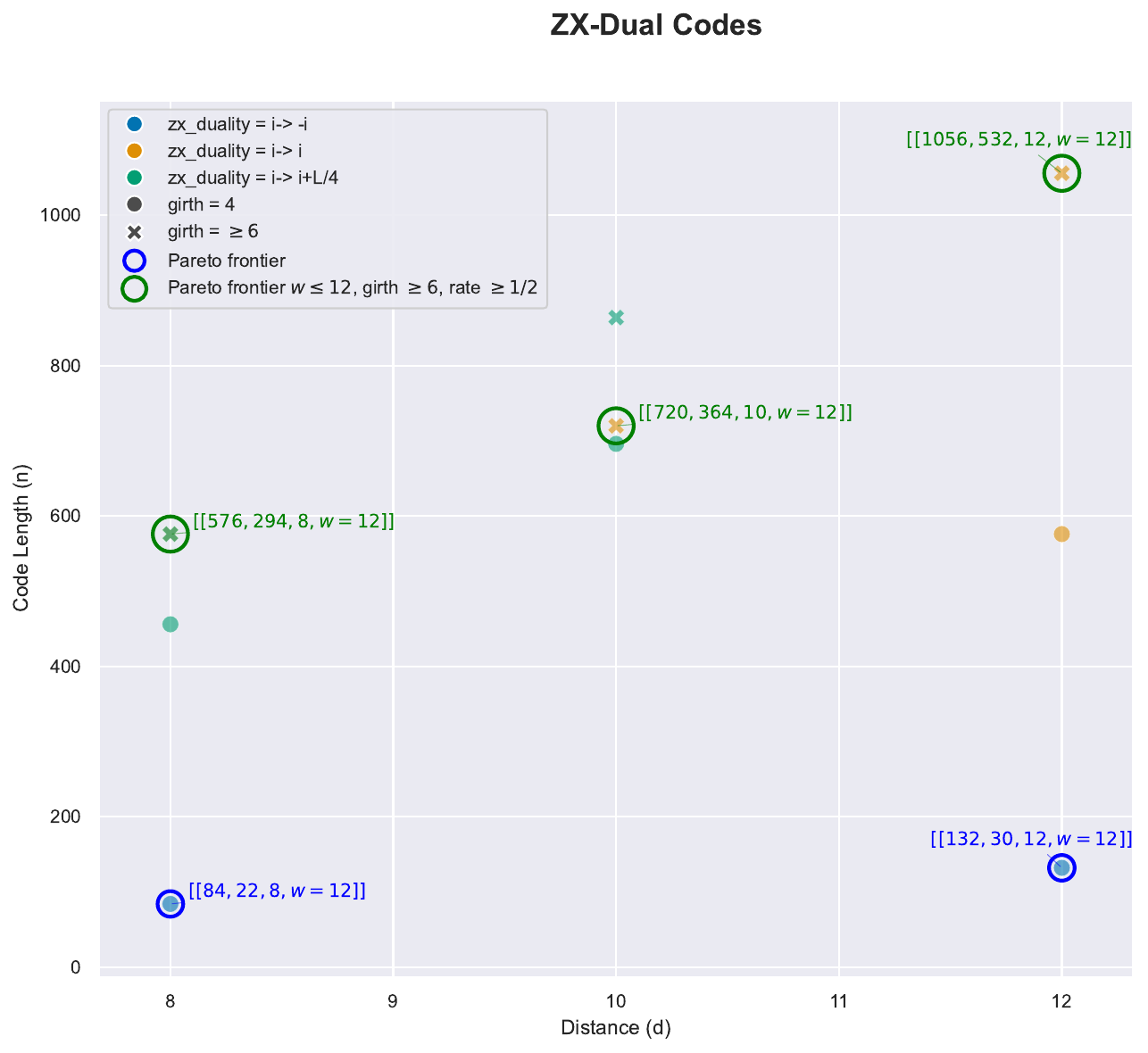}
\caption{GALA codes (lower-right is better). \textit{Left:} compact codes with $n\le2500$, rate $\ge1/2$, weight $\le12$, girth $\ge6$. Best instances are highlighted in green, including the compact $[[480,240,10]]$ and $[[672,336,12]]$ and the above-weight $[[1752,880,14]]$ and $[[2232,1120,16]]$, all shorter than the co-designed Kasai baselines (black)~\cite{zhao2026ultrahighratequantumerrorcorrection}.  Rate-$1/2$ pair-partition codes of Ref.~\cite{okada2026pairpartitionconstructionscpmbasedquantum} are shown for comparison. \textit{Right:} ZX-dual codes with fold-transversal gates, notably a girth-4 $[[132,30,12]]$ at rate $0.227$ and a girth-6, rate-$1/2$ $[[1056,532,12]]$. Every distance shown for a GALA instance is certified exactly, by complete exclusion of all lower-weight logical operators together with an explicit verified witness of weight $d$.}
\label{fig:results}
\end{figure*}

\begin{table}
    \centering
    \begin{tabular}{c|ccl}
         &  [[132,30,12]]& [[672,336,12]]& [[1152,580, 12]]\\\hline
         4 Crossed AODs&  $3.100\,\, [ms]$& $6.755\,\,[ms]$ & $ 8.317\,\,[ms]$ \\
         2 Crossed AODs&  $5.257\,\, [ms]$& $ 8.868\,\,[ms]$ & $13.267\,\, [ms]$ \\
    \end{tabular}
    \caption{QEC Cycle time compared to [[1152,580, 12]]\cite{zhao2026ultrahighratequantumerrorcorrection}. }
    \label{tab:movement}
\end{table}

\lead{Moves and automorphisms} The abelian factor describes the symmetries of the code. Arrange the data atoms on a $kp\times Lq$ grid (writing $C_m=C_p\times C_q$) and the check atoms on a $kp\times Jq$ grid. Syndrome extraction then runs in $L$ rounds, and round $i$ is a permutation element from the lift $H_k\star (C_p\times C_q)$ applied to the check atoms, which for a direct-product code factors into an independent row swap and column swap,
\begin{equation}\label{eq:moves}
        \sigma_{r,i} = h_{i}h_{i-1}^T\otimes c_ic_{i-1}^T,\quad
        \sigma_{c,i} = c'_i{c_{i-1}'}^T.
\end{equation}
i.e. rigid cyclic translations that AODs perform in parallel, as illustrated in Fig.~\ref{fig:move}, leading to ms-scale QEC cycle times as reported in Table~\ref{tab:movement}.  Cyclic permutations between the ancilla and $L$ data blocks occur in parallel with these permutations and the arrangement and movement of atoms is described in Section \ref{sec:hardware}. Taking $S_3\times\Z_{32}$ and $S_3\times\Z_2\times\Z_{32}$ reproduces the layouts and move structure reported in Ref.~\cite{zhao2026ultrahighratequantumerrorcorrection} for their $P=96$ and $P=192$ instances. 

The very same elements are code automorphisms:
\begin{equation}\label{eq:aut}
I_{kL}\times C_m\ \le\ \Aut(\mathcal Q).
\end{equation} Here $C_m$ denotes the diagonal copy of the abelian factor, which is central in $H_k\ltimes C_m^{\,k}$ and hence acts for both DPG and SPG codes. The $C_m$ translations therefore act as logical permutations, and sweeping a representative logical over its $C_m$-orbit organizes the logical space into disjoint hypercubes, as described in \ref{sec:logic}.

\lead{ZX-dual codes} Our design also has the flexibility to enforce fold symmetries between the blocks that lead to ZX-dual code, in which $H_X$ and $H_Z$ are permutation-equivalent, $H_Z= H_X\cdot\tau$. More specifically, by requiring $F_i = G_{r(i)}^T\cdot t$ for some involution $r\in S_{\frac L2}$ that preserves the $J\times L$ proto-matrix row-wise and $t\in H_k\ltimes C^k_m$, we get a ZX-duality
\begin{equation}\label{eq:fold}
\tau = \iota\otimes t 
\end{equation}
that turns the first J rows of $[F|G]$ into that of $[G^{T}|F^{T}]\cdot r\otimes r$, hence mapping $H_X\leftrightarrow H_Z$. Details are given in Section~\ref{ssec:zxdual} $\tau$ decomposes into a few a rigid, AOD-compatible motion of the $kp\times Lq$ array. The cost of requiring additional symmetries is quantified in Section~\ref{ssec:zx-bounds}. In practice, we obtain girth-6 ZX-dual instances such as $[[1056,532,12]]$ that are still shorter than the baseline $[[1152,580,\le12]]$ (See Figure~\ref{fig:codes} and Table~\ref{tab:frontier-selfdual-good}-\ref{tab:frontier-other}). It follows from \cite{Breuckmann_2024} that the ZX-duality implements a fold-transversal Clifford gate with a single layer of physical Clifford gates and one parallel atom rearrangement---the same primitives and cost as a QEC round.  The logical action of these gates depend on the logical basis chosen, which we show explicit examples of in an end-to-end case study for the self-dual $[[132,30,12]]$ code. In particular,  For $[[132,30,12]]$ the fold is the identity, so both gates degenerate to a single transversal single-qubit layer with no rearrangement; at $132$ data atoms and a $3.1\,$ms QEC cycle it is realizable on present-day devices.

\lead{Code results} We searched more than $10^5$ girth-$\ge6$ codes with $n\le2500$ across the DPG, SPG-restricted, SPG-full, and polynomial families, estimating distance with \texttt{QDistRnd}~\cite{Pryadko_2022} in a cascade of $8$, $10^3$, $10^4$, and $10^5$ shots, with $2\times10^5$ further shots on the Pareto frontier. Every instance reported below is then certified \emph{exactly}: we exhaustively exclude all logical operators of weight below $d$ and verify an explicit witness of weight $d$, so no GALA parameter we quote carries a ``$\le$''. The results are plotted in Fig.~\ref{fig:results}. Beyond rediscovering the known Kasai parameters, we find, all at stabilizer weight $12$ and girth $\ge6$: (i) compact rate-$1/2$ instances $[[480,240,10]]$, $[[672,336,12]]$, and $[[720,360,12]]$, the latter two at $58\%$ and $62\%$ the length of the co-designed $[[1152,580,\le12]]$~\cite{zhao2026ultrahighratequantumerrorcorrection}---these sit at the structural ceiling $d=w$, and $[[672,336,12]]$ in particular has a trivial top, showing that the abelian sector of the framework combined with a polynomial lift already suffices for compact rate-$1/2$ codes at the weight ceiling; (ii) above-weight instances $[[1752,880,14]]$ and $[[2232,1120,16]]$, the second both shorter than and of higher certified distance than the co-designed $[[2304,1156,\le14]]$; and (iii) the ZX-dual $[[1056,532,12]]$ with fold-transversal $H/S$, alongside a $[[132,30,12]]$ that trades rate ($0.227$) and girth ($4$) for a tiny $132$-atom footprint and an identity fold. Within this weight and rate class these are, to our knowledge, the shortest instances known at $d=10$, $12$, and $16$; at $d=14$ the pair-partition code $[[1524,766,14]]$ of Ref.~\cite{okada2026pairpartitionconstructionscpmbasedquantum} is shorter than ours. The compact instances sit within reach of present-day atom-array processors.

\lead{Towards Logical Compilation} Combined with the $C_m$ shift-automorphisms, a self-dual GALA code carries logical $H$, $S$, and logical shifts implemented entirely by moves. Using the $C_m$ shift-automorphisms, we can generate structured logical Pauli basis with logical permutations, whereas the ZX-duality $\tau$ supplies transversal $H$ and $S$. Furthermore, the abelian factor is $C_m=\Z_{p_1}\times\cdots\times\Z_{p_q}$ is a product of cyclic groups, each factor contributes an independent shift generator, and collapsing any subset of factors yields a chain homomorphism onto the corresponding quotient code. See Fig.~\ref{fig:gate} for an illustration.  In principle, the available gadgets are sufficient to implement arbitrary Clifford operations, but additional work is required to ensure fault tolerance. For example, since the quotient operation is not distance presverving in general, additional concatenation or post selection may be required. We leave the detailed analysis of this for future work. 

\lead{Outlook} GALA reframes qLDPC code discovery for atom arrays as co-design: the group product fixes the algebra, the non-abelian factor improves code parameters, and the abelian factor determines the symmetries that yield parallel AOD moves and transversal Cliffords. Several future directions follow. First, there remains gaps we can use it for computation, such as existence and discoveries of disjoint, minimum weight, orthonormal logical basis and fault tolerance of the logical gates in practice. It is worth noting that while the constructive quotient logical operators we give do achieve minimum weight for the majority of logical representatives for the $[[132,30,12]]$, it in general is at least $k\times$ larger due to inflation. Secondly, there is also no inherent restriction for the abelian groups to be quasi-cyclic; generalizing may yield more interesting logical automorphisms. Alternatively, we can also consider different proto-matrices. More broadly, because every logical primitive here is a parallel atom move or a depth-1 transversal gate, GALA offers a path to compiling ultra-high-rate fault-tolerant logical circuits on today's few-hundred-qubit devices whose cost is dominated by the same rearrangements that implement error correction. 

\clearpage
\lead{Acknowledgments}
We thank Victor Albert, Hossein Dehghani, Michael Gullans, Eric Huang, Maryam Mudassar, Chen Zhao, and Hengyun Zhou for discussions and comments. We thank Hossein Dehghani also for help with decoding simulations and analysis. 

\lead{Note} A provisional patent associated with this work was filed on 7/24/2026. After completion of this work, an independent preprint, Ref.~\cite{lu2026cornucopia} introduced ``Cornucopia'' codes, built from affine permutations on a $\Z_3\times\Z_q$ grid---the direct-product sector of the present framework with $H_k=S_3$, $C_m=C_q$. Our work is complementary: we give a general construction framework which also includes semidirect products, polynomial and ZX-dual generalizations, and provides closed-form bounds, and fold-transversal Cliffords. Their $[[1044,526,12]]$ and $[[1764,886,14]]$ codes are close in parameters to our self-dual variants $[[1056,532,12]]$ and $[[1752,880,14]]$. Their depth-$12$ syndrome schedule is the $J=L/4$ special case of our schedule condition, in which the two halves of the lift decouple; our general condition also covers instances outside their construction, such as the self-dual $[[132,30,12]]$ at $J=5$.

\bibliography{references}

\clearpage
\onecolumngrid

\setcounter{secnumdepth}{3}
\setcounter{section}{0}
\setcounter{equation}{0}
\setcounter{figure}{0}
\setcounter{table}{0}
\renewcommand{\thesection}{S\arabic{section}}
\renewcommand{\thesubsection}{\thesection.\arabic{subsection}}
\renewcommand{\thesubsubsection}{\thesubsection.\arabic{subsubsection}}
\makeatletter
\renewcommand{\p@subsection}{}
\renewcommand{\p@subsubsection}{}
\makeatother
\renewcommand{\theequation}{S\arabic{equation}}
\renewcommand{\thefigure}{S\arabic{figure}}
\renewcommand{\thetable}{S\arabic{table}}
\renewcommand{\theHequation}{S\arabic{equation}}
\renewcommand{\theHfigure}{S\arabic{figure}}
\renewcommand{\theHtable}{S\arabic{table}}
\setcounter{tocdepth}{2}

\newcommand{\GALAsmheader}{%
  \clearpage                       
  \ifdefined\onecolumngrid\onecolumngrid\fi   
  %
  \setcounter{section}{0}%
  \setcounter{equation}{0}%
  \setcounter{figure}{0}%
  \setcounter{table}{0}%
  \renewcommand{\thesection}{S\arabic{section}}%
  \renewcommand{\thesubsection}{\thesection.\arabic{subsection}}%
  \renewcommand{\theequation}{S\arabic{equation}}%
  \renewcommand{\thefigure}{S\arabic{figure}}%
  \renewcommand{\thetable}{S\arabic{table}}%
  %
  \providecommand{\theHsection}{}\renewcommand{\theHsection}{SM.\arabic{section}}%
  \providecommand{\theHsubsection}{}\renewcommand{\theHsubsection}{SM.\arabic{section}.\arabic{subsection}}%
  \providecommand{\theHequation}{}\renewcommand{\theHequation}{SM.\arabic{equation}}%
  \providecommand{\theHfigure}{}\renewcommand{\theHfigure}{SM.\arabic{figure}}%
  \providecommand{\theHtable}{}\renewcommand{\theHtable}{SM.\arabic{table}}%
  %
  \begin{center}
    {\large\bfseries Supplemental Material}
  \end{center}
  \vspace{10pt}
  \ifdefined\onecolumngrid\onecolumngrid\fi   
}

\GALAsmheader

\vspace{0.5em}
\begin{center}
\begin{minipage}{0.88\linewidth}
In this supplementary paper we give a detailed construction of the GALA codes, prove bounds on their rate, distance, and girth, describe the algebraic structure of their logical operators, and give fault-tolerant logical gadgets and AOD-compatible move schedules derived from symmetries that are design inputs to the construction. We then report the code search and its numerical results, and work one instance out end to end: the self-dual $[[132,30,12]]$ code, including explicit low-weight logical representatives, its cyclic shift automorphisms, and its fold-transversal Clifford gates.
\end{minipage}
\end{center}
\vspace{1em}

\tableofcontents
\clearpage

\section{Backgrounds}\label{sec:bakcgournd}
We will begin by covering some basics on quantum error correcting codes (QECC) and their logical operators, as well as describe Kasai's ultra-high rate codes. A glossary of notations is given in Section~\ref{ssec:notations}. 

\subsection{Quantum Error Correcting Codes}\label{ssec:qec-basics}

We say a quantum error correcting code (QECC) is a $[[n,k,d]]$ code if it encodes $k$ logical qubits in the entangled states of $n$ physical qubits, and no Pauli operator of weight less than $d$ maps one logical state to a different logical state. A QECC protects logical information against noise by repeatedly measuring a set of \textit{stabilizers}, which projects the joint state of the physical qubits onto a logical subspace following Pauli corrections determined by the stabilizer measurements. The \emph{encoding rate}, $\frac{k}{n}$, describes the space efficiency in terms of the number of physical qubits needed to encode a logical qubit, and $d$ describes the error correction capacity.

A Pauli operator can be represented as a binary vector $v\in \mathbb F_2^n$, where the $X$-type Pauli $X(v)$ and $Z$-type Pauli $Z(v)$ correspond to
\[
X(v) = \bigotimes_{i\in[n]} X_i^{v_i}, \qquad
Z(v) = \bigotimes_{i\in[n]} Z_i^{v_i},
\]
and the Hamming weight of $v$, $\wt{v} = |\{i: v_i = 1\}|$, corresponds to the weight of the  Pauli operator. Two Pauli operators of different type, i.e.\ $X(v), Z(u)$, commute if and only if their overlap $\rs{v,u} := \sum_i v_iu_i \bmod 2$ vanishes:
\begin{equation}\label{eq:binary-commutation}
    X(v)Z(u) = (-1)^{\rs{v,u}}\, Z(u) X(v).
\end{equation}

A Calderbank-Shor-Steane (CSS) code is a QECC whose stabilizers are each purely
of $X$- or $Z$-type; by \eqref{eq:binary-commutation}, commutativity reduces to orthogonality of the pair of $X$- and $Z$-stabilizer matrices:

\begin{definition}[CSS Code]\label{def:css}
A CSS code $\mathcal Q = \CSS{H_X, H_Z}$ on $n$ qubits is specified by a pair of binary \emph{stabilizer matrices} $H_X \in \mathbb F_2^{m_X\times n}$ and $H_Z\in \mathbb F_2^{m_Z\times n}$ satisfying the \emph{orthogonality condition}
\begin{equation}\label{eq:css-orthogonality}
    H_X H_Z^T = 0 .
\end{equation}
Each row $x$ of $H_X$ (resp.\ $z$ of $H_Z$) is a Pauli stabilizer $X(x)$ (resp.\ $Z(z)$), and they generate the stabilizer group
\[
\mathcal S = \rs{\, X(x),\, Z(z) \;|\; x = (H_X)_i,\ z = (H_Z)_j,\ 0<i\leq m_X,\ 0<j\leq m_Z\,}.
\]
The logical space of a CSS code is the joint $+1$-eigenspace
$
\mathcal Q = \{\ket \psi : S\ket\psi = \ket \psi \ \forall S \in \mathcal S\},
$
which encodes
\begin{equation}\label{eq:css-k}
    k = n - \rank(H_X) - \rank(H_Z)
\end{equation}
logical qubits.
\end{definition}

The CSS orthogonality condition~\eqref{eq:css-orthogonality} ensures that $\mathcal S$ defines a set of mutually commuting observables, which is necessary for the logical space to be well defined and non-trivial.

Properties of a CSS code can be analyzed using the \emph{Tanner graph} of its check matrices:
\begin{definition}[Tanner Graph Properties]\label{def:qldpc}
Let the \emph{Tanner graph} $\mathcal{T}(H) = G(V_H, E_H)$ of a check matrix $H\in \mathbb F_2^{m\times n}$ be the bipartite graph on vertices $V_H = [n]\sqcup[m]$ with data qubit nodes $[n]$ and check nodes $[m]$, and with edges $E_H = \{(i,j)|H_{i,j} = 1\}$; and let $\mathcal{T}(\mathcal{Q}) =G(V_\mathcal{Q}, E_\mathcal{Q}) =  \mathcal{T}(H_X)\sqcup\mathcal{T}(H_Z)$ denote the Tanner graphs of a CSS code $\mathcal{Q}(H_X,H_Z)$ as in Def.~\ref{def:css}. Then, the CSS code $\mathcal{Q}$ is \emph{qLDPC} if $\deg(v)\leq w$ for all nodes $v\in V_\mathcal{Q}$; or, equivalently, if the row and column weights of both $H_X$ and $H_Z$ are bounded above by a constant $w$. Furthermore, we define the girth of $\mathcal{Q}$ as $\gth{\mathcal Q} = \gth{\mathcal{T}_\mathcal{Q}}$.
\end{definition}
\begin{lemma}[Four-cycle Counts]\label{lemma:girth}
Tanner graphs are bipartite, so girths are even and at least $4$. The number of distinct four-cycles $t_4$ is given by:
\begin{equation}\label{eq:four-cycle-count}
    t_4 = \sum_{0\leq i< j< m_X} \binom{(H_XH_X^T)_{i,j}}{2},
\end{equation}
where the summation is taken over $\mathbb Z$. In particular $\gth{H_X}\geq 6$ if and only if $(H_XH_X^T)_{i,j}\leq 1$ for all $i\neq j$. The same statements hold for $H_Z$.
\end{lemma}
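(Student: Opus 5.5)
Bipartiteness does most of the work here, and the four-cycle count reduces to counting pairs of columns shared by two checks. The first claim follows directly. Every edge of $\mathcal T(H_X)$ joins a qubit node in $[n]$ to a check node in $[m_X]$, so any closed walk alternates sides and has even length. A cycle of length $2$ would need a repeated edge, and the Tanner graph is simple because $H_{i,j}\in\{0,1\}$. Hence every cycle has length at least $4$, and the girth, being the length of the shortest cycle, is even and at least $4$.

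For the count, we set up a bijection between four-cycles and certain configurations in $H_X$. A $4$-cycle in a bipartite graph contains exactly two check nodes $i\neq j$ and two qubit nodes $a\neq b$. Each check is adjacent to both qubits, so
\begin{equation*}
(H_X)_{i,a}=(H_X)_{j,a}=(H_X)_{i,b}=(H_X)_{j,b}=1 .
\end{equation*}
Conversely, any unordered pair $\{i,j\}$ of distinct checks together with any unordered pair $\{a,b\}$ of distinct columns in the common support of rows $i$ and $j$ determines exactly one $4$-cycle, namely $i-a-j-b-i$. The point to check is that the cycle, viewed as an edge set, does not depend on orientation or starting vertex, so each cycle is counted exactly once. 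For fixed $i<j$, the number of common columns is the integer inner product
\begin{equation*}
c_{ij}=\sum_a (H_X)_{i,a}(H_X)_{j,a}=(H_XH_X^T)_{i,j},
\end{equation*}
computed over $\mathbb Z$ rather than $\F_2$. The number of admissible qubit pairs is then $\binom{c_{ij}}{2}$, and summing over $i<j$ gives Eq.~\eqref{eq:four-cycle-count}.

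The girth criterion follows. Since girth is even and at least $4$, $\gth{H_X}\ge 6$ holds if and only if there are no $4$-cycles, that is, $t_4=0$. Every summand $\binom{c_{ij}}{2}$ is nonnegative and vanishes exactly when $c_{ij}\le 1$. So $t_4=0$ if and only if $(H_XH_X^T)_{i,j}\le 1$ for all $i\ne j$. The argument never uses anything specific to $H_X$, so the same statements hold verbatim for $H_Z$.

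The only step that needs care is the bijection: we must confirm that no four-cycle is double-counted or missed, and that the inner product is taken over the integers, since reducing mod $2$ would lose the count.
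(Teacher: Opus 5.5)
Your proof is correct and follows the paper's argument essentially step for step. Bipartiteness plus simplicity gives even girth at least $4$; four-cycles are put in bijection with (pair of checks, pair of common qubits), so the integer Gram entry $(H_XH_X^T)_{i,j}$ yields $\binom{(H_XH_X^T)_{i,j}}{2}$ cycles per check pair; and nonnegativity of the summands gives the girth-$6$ criterion. The paper is a bit more explicit on one point you only assert as ``exactly one $4$-cycle'': the vertex set $\{c_i,c_j,q_a,q_b\}$ carries a unique four-cycle because bipartiteness leaves only the four check--qubit edges, and any four-cycle through all four vertices must use all of them.
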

\begin{proof}
Let $c_0,\dots,c_{m_X-1}$ be the check nodes of $\mathcal T(H_X)$ and write $N(c_i) = \supp{(H_X)_i}$. A bipartite graph has no odd cycle and a simple graph has no $2$-cycle, so every cycle length is even and at least $4$. Over $\mathbb Z$,
\begin{equation}\label{eq:gram-overlap}
    (H_XH_X^T)_{i,j} = \sum_{a\in[n]}(H_X)_{i,a}(H_X)_{j,a} = |N(c_i)\cap N(c_j)|,
\end{equation}
since each summand is $1$ exactly when $a$ lies in both supports; over $\mathbb F_2$ only the parity of this overlap would survive.

Map a four-cycle to the pair consisting of its two check indices and its two qubit indices. This is well defined: consecutive vertices of a cycle lie in opposite parts of the bipartition, so a four-cycle alternates and carries exactly two checks and two qubits, say $c_i - q_a - c_j - q_b - c_i$; its vertices are distinct, so $i\neq j$ and $a\neq b$, and all four edges are present, so $a,b\in N(c_i)\cap N(c_j)$. It is injective: the image determines the vertex set $\{c_i,c_j,q_a,q_b\}$, and a four-cycle on that set is unique, because bipartiteness leaves only the four edges $c_iq_a$, $c_iq_b$, $c_jq_a$, $c_jq_b$, and a cycle through all four vertices must give each of them degree $2$ and hence use all four. It is surjective onto the pairs $(\{i,j\},\{a,b\})$ with $i\neq j$, $a\neq b$ and $\{a,b\}\subseteq N(c_i)\cap N(c_j)$, since those edges then form a four-cycle. Grouping by the check pair, each $\{i,j\}$ with $i<j$ admits exactly $\binom{|N(c_i)\cap N(c_j)|}{2}$ choices of $\{a,b\}$, and substituting \eqref{eq:gram-overlap} gives \eqref{eq:four-cycle-count}. Finally the summands are non-negative and $\binom N2 = 0$ iff $N\leq 1$, so $t_4 = 0$ iff every off-diagonal Gram entry is at most $1$.
\end{proof}

\subsubsection{Logical Basis}

Logical qubits of a QECC are usually encoded in entangled $n$-qubit stabilizer states that lack a compact description; therefore, we often use the corresponding logical Pauli operators instead to describe the logical space:

\begin{definition}[Logical Operators and Distance]\label{def:logicals}
Let $\mathcal Q = \CSS{H_X, H_Z}$. The non-trivial logical Pauli operators of $\mathcal{Q}$ are given by 
\begin{align*}
    \mathcal{L}_X = \ns{H_Z}\,/\,\rs{H_X}\;\cong\; \mathbb F_2^{k},\qquad
    \mathcal{L}_Z = \ns{H_X}\,/\,\rs{H_Z}\;\cong\; \mathbb F_2^{k}. 
\end{align*}
The distance of $\mathcal{Q}$ is $d = \min\{d_X, d_Z\}$, where
\begin{equation}\label{eq:css-distance}
    d_X = \min\{\wt{v}: v \in \ns{H_Z}\setminus \rs{H_X}\},\qquad
    d_Z = \min\{\wt{u}: u \in \ns{H_X}\setminus \rs{H_Z}\}.
\end{equation}
\end{definition}
 
It is best to use 
 
\begin{definition}[Orthonormal Logical Basis]\label{def:logical-basis}
A pair of matrices $L_X, L_Z \in \mathbb F_2^{k\times n}$ is a \emph{canonical logical basis} for
$\mathcal Q = \CSS{H_X,H_Z}$ if
\[
L_X H_Z^T = 0,\qquad L_Z H_X^T = 0, \qquad L_XL_Z^T = I_k,
\]
and the rows of $L_X$ (resp.\ $L_Z$) represent a basis of $\ns{H_Z}/\rs{H_X}$ (resp. $\ns{H_X}/\rs{H_Z}$). Such a pair always exists by Gaussian Elimination.
\end{definition}

In classical codes, we can often increase the rate of a code by removing rows of the check matrix to release more logical degrees of freedom, also referred to as a code-modification \emph{augmentation}. However, for quantum codes, the immediate consequence of CSS orthogonality is that any removed stabilizer generator becomes a logical operator itself, unless it was redundant to begin with; hence, if the code we start from is LDPC, the distance after augmentation will be bounded by the same constant $w$:

\begin{fact}[Augmentation Barrier]\label{fact:augmentation}
Let $\mathcal Q = \CSS{H_X,H_Z}$ be a qLDPC CSS code with Tanner-graph degrees $\deg (v)\leq w$ for all $v\in V_{\mathcal Q}$, as in Definition~\ref{def:qldpc}, and let
$\mathcal{Q}^S = \CSS{H^S_X, H^S_Z}$ be an \emph{augmentation} of $\mathcal Q$, obtained by deleting a set $S$ of rows from $H_X$ and from $H_Z$. Suppose some deleted row is not in the row space of the retained rows of its own type; say a deleted row $x$ of $H_X$ has $x\notin \rs{H^S_X}$. Then
\begin{equation}\label{eq:augmentation-barrier}
    d_{\mathcal{Q}^S}\leq d_X(\mathcal Q^S)\leq \wt{x}\leq w.
\end{equation}
The same conclusion holds with the roles of $X$ and $Z$ exchanged.
\end{fact}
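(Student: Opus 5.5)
The plan is to show that the deleted row $x$ is itself a nontrivial $X$-logical of the augmented code $\mathcal Q^S$. The inequality chain \eqref{eq:augmentation-barrier} then follows from Definition~\ref{def:logicals} and the degree bound of Definition~\ref{def:qldpc}.

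\emph{Step 1: $x$ commutes with the retained $Z$-checks.} Since $\mathcal Q$ is a valid CSS code, $H_XH_Z^T=0$ by \eqref{eq:css-orthogonality}. So every row of $H_X$, and in particular $x$, is orthogonal to every row of $H_Z$. The retained matrix $H^S_Z$ consists of a subset of those rows, so $x\in\ns{H^S_Z}$. The same observation shows that $H^S_XH^S_Z{}^T=0$, so $\mathcal Q^S$ is a well-defined CSS code.

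\emph{Step 2: $x$ is not a stabilizer.} By hypothesis $x\notin\rs{H^S_X}$. Together with Step 1 this gives $x\in\ns{H^S_Z}\setminus\rs{H^S_X}$. By \eqref{eq:css-distance} we conclude $d_X(\mathcal Q^S)\le\wt{x}$, and $d_{\mathcal Q^S}=\min\{d_X,d_Z\}\le d_X$.

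\emph{Step 3: the weight bound.} The weight $\wt{x}$ equals the degree of the corresponding check node in $\mathcal T(H_X)\subseteq\mathcal T(\mathcal Q)$. This degree is at most $w$ by the qLDPC assumption. The case with $X$ and $Z$ exchanged is handled by the identical argument with transposed roles, using $H_ZH_X^T=(H_XH_Z^T)^T=0$.

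There is no real obstacle here. The only point needing care is that the logical must be nontrivial, meaning outside the retained row space rather than merely outside the original one. That is exactly why the hypothesis is phrased in terms of $\rs{H^S_X}$.
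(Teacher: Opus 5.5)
Your proof is correct and follows the paper's argument essentially step for step. You show $x\in\ns{H^S_Z}$ from $H_XH_Z^T=0$ and get nontriviality from $x\notin\rs{H^S_X}$. You then bound $\wt{x}$ by the degree $w$ of its check node.
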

\begin{proof}
Since $\mathcal Q$ is a CSS code, $H_XH_Z^T = 0$, and $H^S_Z$ is a submatrix of $H_Z$, so $x (H^S_Z)^T = 0$, i.e.\ $x\in \ns{H^S_Z}$: the operator $X(x)$ commutes with every retained $Z$-stabilizer. By hypothesis $x\notin\rs{H^S_X}$, so $x$ represents a non-trivial class of $\ns{H^S_Z}/\rs{H^S_X} = \mathcal L_X(\mathcal Q^S)$, i.e.\ $X(x)$ is a non-trivial logical $X$ operator of $\mathcal Q^S$. Hence $d_X(\mathcal Q^S)\leq\wt x$, and $\wt x$ is the degree of the check node of $\mathcal T(H_X)$ carrying the row $x$, which is at most $w$.
\end{proof}

Note a linearly independence of the stabilizer is necessary--if the deleted row already lies in the row space of the retained rows of its own type, then $\mathcal Q^S$ and $\mathcal Q$ have the same stabilizer group and hence the same distance. We will assume this to be the case in this work unless otherwise specified.

While not inherently tied to the error correction capacity of a QECC, small cycles in the Tanner graph can lead to failures for  message passing based decoders. We therefore often require $\gth{\mathcal{Q}}\geq 6$ for practical decoder performances, which by Lemma~\ref{lemma:girth} is the condition that every off-diagonal entry of $H_XH_X^T$ and of $H_ZH_Z^T$ be at most $1$ over $\mathbb Z$.

\subsubsection{Chain Complexes}\label{sssec:chain-complex}

We can view a CSS code as length-$2$
chain complex with $H_Z^T,H_X $ as boundary maps with CSS orthogonality corresponding to the boundary map condition:

\begin{definition}[Chain Complex]\label{def:chain-complex}
A length-$\ell$ chain complex over $\mathbb F_2$ is a collection of vector spaces $\{B_i\}_{i=0}^\ell$ together with boundary maps $\partial_i : B_i \to B_{i-1}$ satisfying $\partial_{i-1}\partial_i = 0$, which we abbreviate as $\partial^2 = 0$. A classical code with check matrix $H$ is the length-$1$ complex $C_1 \xrightarrow{H} C_0$, where the bases of $C_1$ and $C_0$ are labeled by bits and checks. A CSS code $\mathcal Q = \CSS{H_X, H_Z}$ is the length-$2$ complex
\begin{equation}\label{eq:css-complex}
    Z \xrightarrow{\ H_Z^T\ } Q \xrightarrow{\ H_X\ } X,
\end{equation}
where the bases of $Z,Q,X$ are labeled by the $Z$-checks, the data qubits, and the $X$-checks respectively, and $\partial_1\partial_2 = H_XH_Z^T = 0$ is exactly~\eqref{eq:css-orthogonality}.

The co-chain complex is the dual:
\begin{equation}\label{eq:css-cocomplex}
    X \xrightarrow{\ H_X^T\ } Q \xrightarrow{\ H_Z\ } Z,
\end{equation}
with coboundary maps $\delta^{i} := \partial_{i+1}^T$; i.e.\ $\delta^0 = H_X^T$ and
$\delta^1 = H_Z$. 
\end{definition}

Logical operators are then the (co)homology of \eqref{eq:css-complex}:
\begin{equation}
    \mathcal L_Z = \ker \partial_1/\operatorname{im}\partial_2 = \ns{H_X}/\rs{H_Z},
    \qquad
    \mathcal L_X = \ker \partial_2^{T}/\operatorname{im}\partial_1^{T} = \ns{H_Z}/\rs{H_X},
\end{equation}

\subsection{Fault Tolerant Logical Operations}\label{ssec:ft-logic}
In the most general definition, a logical operation is the induced transformation of the logical space from a set of physical operations on the data qubits that preserve the logical space; that is, they map an eigenstate to another eigenstate. The logical operation is trivial if it fixes the eigenstates pointwise, and a logical operation is fault-tolerant if failures of these physical operations can be detected and corrected for up to the same distance $d$. An example of a logical operation that is both fault tolerant and trivial is the application of a Pauli stabilizer. We are interested in logical operations that are both non-trivial and fault-tolerant as means to manipulate the encoded logical information. 

There are two main approaches to enable fault tolerant Clifford operations in qLDPC codes: transversal gates, and code deformation. Transversal gates can typically be executed with $O(1)$ time overhead (at least amortized), though their availability depends on the symmetry of the underlying code. While we can always achieve trivial symmetries such as the transversal CNOTs between two identical codes, it is often challenging to attain a useful set of logical gates via happenstance symmetries alone \cite{yang2026spacetimeefficienthardwarecompatiblecomplexquantum}. On the other hand, while code deformation allows us to obtain arbitrary Pauli product measurements by a targeted addition of fault tolerant checks, it typically incurs space-time overheads on the order of $d$ to ensure reliable measurements of the added stabilizers if not carefully optimized. In this work, we mainly investigate gates of the first flavor to prioritize minimizing time overheads, but the symmetries we describe are not specific to that choice: they pay off for surgery- and extractor-based architectures too.

\subsubsection{Virtual Logical Clifford from Code Automorphisms}
A permutation transversal gate is induced by a permutation of the data qubits that preserves the logical space, also called a code automorphism. They are particularly relevant for RNAA devices, where reconfiguration can be performed with AODs in parallel and with high fidelity. Formally:

\begin{definition}[Code Automorphism]\label{def:code-automorphism}
Let $\mathcal{Q}$ be an $[[n,k,d]]$ CSS code specified by the stabilizers $H_X, H_Z$ and canonical logical basis $L_X, L_Z$. An $n$-dimensional permutation $\sigma$ is said to be a code automorphism of $\mathcal{Q}$ if its right action on the qubit labels preserves the stabilizer group of $\mathcal Q$ sector-wise; that is,
\[
\rs{H_X\sigma} = \rs{H_X},\qquad \rs{H_Z\sigma} = \rs{H_Z},
\]
which we abbreviate as $\mathcal Q\cdot\sigma = \mathcal Q$. Since a permutation preserves overlaps, such a $\sigma$ also maps $\ns{H_Z}$ onto itself, and therefore descends to a linear automorphism of $\mathcal L_X = \ns{H_Z}/\rs{H_X}$ and of $\mathcal L_Z = \ns{H_X}/\rs{H_Z}$. We record the induced logical action as the pair $(g_X, g_Z)$ of $k\times k$ invertible matrices determined by
\[
g_X L_X \equiv L_X \sigma \pmod{\rs{H_X}}, \qquad
g_Z L_Z \equiv L_Z \sigma \pmod{\rs{H_Z}};
\]
the congruences are equalities whenever $\sigma$ happens to fix the chosen representatives, and we then write $g_XL_X = L_X\sigma$.
\end{definition}

We have the following facts:
\begin{fact}[Automorphism Facts]\label{fact:aut-facts}
Let $\mathcal{Q}$ be an $[[n,k,d]]$ CSS code specified by the stabilizers $H_X, H_Z$ and canonical logical basis $L_X, L_Z$. Let $\mathrm{Aut}(\mathcal{Q}) = \{\sigma | \mathcal{Q}\cdot \sigma = \mathcal{Q}\}$ be the set of code automorphisms and let $\mathrm{G}(\mathcal{Q})$ be the set of logical actions $(g_X,g_Z)$ induced by the elements of $\mathrm{Aut}(\mathcal Q)$, as in Definition~\ref{def:code-automorphism}. Then:

\begin{enumerate}
    \item \textbf{Group Homomorphism \cite{berthusen2025automorphismgadgetshomologicalproduct}:} The set of all code automorphism $\mathrm{Aut}(\mathcal{Q})$ and induced logical actions $\mathrm{G}(\mathcal{Q})$ both form a group, and
    \[
    \phi : \mathrm{Aut}(\mathcal{Q}) \to \mathrm{G}(\mathcal{Q}), \qquad \sigma \mapsto (g_X, g_Z)
    \]
    is a group homomorphism.
    \item \textbf{Virtuality:} Suppose $\sigma\in\mathrm{Aut}(\mathcal Q)$ preserves the two stabilizer \emph{matrices} up to a permutation of their rows; that is, suppose there exist row permutations $\rho_X\in S_{m_X}$, $\rho_Z\in S_{m_Z}$ with
    \[
    \rho_X H_X = H_X \sigma, \qquad \rho_Z H_Z = H_Z \sigma.
    \]
    Then the induced logical operation $\phi(\sigma) = (g_X,g_Z)$ is realized by a \emph{virtual relabeling} of the physical qubits: after renaming qubit $i$ as $\sigma(i)$ and renaming the $X$- and $Z$-checks by $\rho_X$ and $\rho_Z$, one recovers the original code together with its original syndrome-extraction schedule, generator by generator. No physical gate is therefore required to implement $\phi(\sigma)$ beyond the rearrangement itself: the gate is a relabeling that can be absorbed into the classical bookkeeping of the decoder, or, on hardware where a fixed physical layout must be restored, into a single atom-rearrangement step.
\end{enumerate}
\end{fact}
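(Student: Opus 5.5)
We prove the two items of Fact~\ref{fact:aut-facts} in order, working directly from Definition~\ref{def:code-automorphism}; no earlier result is needed beyond the basic linear algebra of Definition~\ref{def:logicals}.

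\textbf{Group structure.} Since every permutation matrix is invertible, $\rs{H_X\sigma}=\rs{H_X}$ is equivalent to $\rs{H_X}\sigma=\rs{H_X}$. Closure under composition is then immediate: $\rs{H_X}\sigma\tau=\rs{H_X}\tau=\rs{H_X}$, and likewise for $H_Z$. For inverses, apply $\sigma^{-1}$ to both sides to get $\rs{H_X}=\rs{H_X}\sigma^{-1}$. The identity permutation is clearly an automorphism, so $\mathrm{Aut}(\mathcal Q)$ is a subgroup of $S_n$.

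\textbf{Well-defined logical action.} Because permutations preserve overlaps, $\rs{v\sigma,u\sigma}=\rs{v,u}$. Hence $\ns{H_Z}\sigma=(\rs{H_Z}\sigma)^\perp=\ns{H_Z}$. So $\sigma$ preserves both the numerator $\ns{H_Z}$ and the denominator $\rs{H_X}$, and therefore induces a linear map on the quotient $\mathcal L_X$. This map is invertible, with inverse induced by $\sigma^{-1}$. Writing it in the basis given by the classes of the rows of $L_X$ produces the unique $g_X$ satisfying $g_XL_X\equiv L_X\sigma$, and $g_Z$ is obtained the same way.

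\textbf{Homomorphism.} Take $\sigma,\tau\in\mathrm{Aut}(\mathcal Q)$ and compute $L_X\sigma\tau\equiv (g_X(\sigma)L_X)\tau$. The key point is that right multiplication by $\tau$ respects congruence modulo $\rs{H_X}$, because $\tau$ preserves $\rs{H_X}$. Since $g_X(\sigma)$ acts on the left, it commutes with $\tau$ acting on the right. Therefore $L_X\sigma\tau\equiv g_X(\sigma)\,g_X(\tau)L_X$, so with the convention of right actions $\phi$ is a homomorphism. Uniqueness of the representing matrix, which holds because the rows of $L_X$ form a basis of the quotient, turns this congruence into an equality of the matrices $g_X$. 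The image $\mathrm G(\mathcal Q)$ is then automatically a group. The only real subtlety in the whole proof is bookkeeping the order of composition; if the orders come out reversed, one either passes to the opposite group or uses $\sigma^{-1}$ to make $\phi$ covariant.

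\textbf{Virtuality.} Assume $\rho_XH_X=H_X\sigma$. Then row $i$ of $H_X\sigma$ equals row $\rho_X(i)$ of $H_X$, so each check's support after relabelling qubits by $\sigma$ is literally the support of another check. Renaming checks by $\rho_X$ and $\rho_Z$ therefore gives back the identical Tanner graph with identical edge labels. Any syndrome-extraction schedule, being a function of the edges, is carried to itself as well. The logical action is then entirely a matter of how the labels are read, which is the claimed virtual relabeling, or equivalently a single rearrangement that restores the physical positions.
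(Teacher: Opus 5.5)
\textbf{Item 1.} Your argument for item 1 is correct. It also goes beyond the paper, which states this as a Fact, cites Berthusen et al.\ for the homomorphism, and gives no argument for either item. Your chain $L_X\sigma\tau\equiv g_X(\sigma)L_X\tau\equiv g_X(\sigma)g_X(\tau)L_X \pmod{\rs{H_X}}$, combined with uniqueness of coefficients in the basis $L_X$, already gives $\phi(\sigma\tau)=\phi(\sigma)\phi(\tau)$ for the matrix-product law on both sides. Your closing hedge about reversed orders is therefore unnecessary.

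\textbf{Item 2.} One step does not hold: ``any syndrome-extraction schedule, being a function of the edges, is carried to itself.'' The pair $(\rho_X,\sigma)$ is an automorphism of $\mathcal T(H_X)$, so it induces a bijection $\pi$ of the edge set $E$. A schedule is a map $t:E\to\{1,\dots,T\}$ assigning each CNOT to a layer, and relabelling carries it to $t\circ\pi^{-1}$. That is again a valid schedule for the same check matrix, but it equals $t$ only if $t$ is $\pi$-invariant. Nothing forces a check and its image under $\rho_X$ to touch corresponding qubits in corresponding layers, so recovering the original schedule ``generator by generator'' needs a hypothesis your argument never supplies.

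In the setting where the paper uses the Fact, that hypothesis does hold. A GALA schedule assigns round $i$ to the edges contributed by the $i$-th lifted monomial. The shift $I_{kL}\otimes c$ commutes blockwise with every lifted monomial, because $C_m$ is abelian in the direct product and the diagonal $c$ is central in $H_k\ltimes C_m^{\,k}$. Hence round-$i$ edges are mapped to round-$i$ edges.

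For a general code you should instead drop the schedule claim, since virtuality does not need it. No physical operation is performed, so the physical circuit is literally unchanged. By $\rho_XH_X=H_X\sigma$, ancilla $i$ now reads as measuring row $\rho_X(i)$ of $H_X$. The physical operator that represented $(L_X)_j$ now reads as $(L_X)_j\sigma\equiv(g_XL_X)_j$. That last identification is what actually shows the relabelling realizes $\phi(\sigma)$, and your final sentence leaves it implicit.
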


Furthermore, we assume that qubit permutation do not spread errors.

\subsubsection{Fold-transversal Logical Clifford from ZX-Duality}
ZX duality also results from a permutation symmetry of the check matrices, but rather than mapping $X$-type stabilizers and $Z$-type stabilizers back to themselves, it maps $X$-type checks to $Z$ and vice versa. Formally,
\begin{definition}[ZX Duality]\label{def:zx-duality}
Let $\mathcal{Q}$ be an $[[n,k,d]]$ CSS code specified by the stabilizers $H_X, H_Z$ and canonical logical basis $L_X, L_Z$, with $m_X = m_Z$. An $n$-dimensional permutation $\tau$ is said to be a ZX duality of $\mathcal{Q}$ if $\tau$ exchanges $H_X$ and $H_Z$ by right action; that is,

\[
H_X\cdot \tau = H_Z,  \quad H_Z\cdot\tau  = H_X.
\]
 $\tau$ is an involution:
\[
H_X\tau^2 = H_Z\tau = H_X,\qquad H_Z\tau^2 = H_X\tau = H_Z.
\]
\end{definition}

Codes with a ZX duality can implement a fold-transversal $H$ gate and phase-type gate, as given in \cite{Breuckmann_2024}:

\begin{fact}[Fold-transversal Gates \cite{Breuckmann_2024}]\label{fact:fold-transversal}
Let $\mathcal Q$ be a CSS code with a ZX duality $\tau$. The following physical operations induce fold-transversal logical operations on $\mathcal Q$:
\begin{align}
    H_\tau &= \bigotimes_{i \in [n],i\leq \tau(i)} \mathrm{SWAP}_{i, \tau(i)} \times \bigotimes_{i \in [n]} H_i,\\
    S_\tau &= \bigotimes_{i \in [n],i\leq \tau(i)} \mathrm{CZ}_{i, \tau(i)} \times \bigotimes_{i \in [n], i = \tau(i)} S_i^\dagger.
\end{align}

\end{fact}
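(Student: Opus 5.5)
The plan is to check that conjugation by each circuit maps the stabilizer group $\mathcal S$ of Definition~\ref{def:css} onto itself. Since $\mathcal S$ is generated by the rows of $H_X$ and $H_Z$, it suffices to track the generators $X(x)$ for $x$ a row of $H_X$, and $Z(z)$ for $z$ a row of $H_Z$. A Clifford unitary $U$ with $U\mathcal S U^\dagger=\mathcal S$ maps the joint $+1$ eigenspace $\mathcal Q$ to itself, so it induces a logical operation. Throughout I use only Definition~\ref{def:zx-duality}: $H_X\tau=H_Z$, $H_Z\tau=H_X$, and $\tau$ is an involution. In particular the pairs $\{i,\tau(i)\}$ partition $[n]$ into $2$-cycles and fixed points, so both products in Fact~\ref{fact:fold-transversal} are well defined and consist of commuting gates.

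\textbf{$H_\tau$.} Transversal $H$ sends $X(v)\mapsto Z(v)$ and $Z(v)\mapsto X(v)$. The product of SWAPs along the $2$-cycles of $\tau$ sends $P(v)\mapsto P(v\tau)$ for $P\in\{X,Z\}$. These two layers commute, because the SWAPs only relabel qubits and $H$ is applied uniformly to every qubit. Hence
\[
X(x)\mapsto Z(x\tau),\qquad Z(z)\mapsto X(z\tau).
\]
Since $x\tau$ is a row of $H_X\tau=H_Z$ and $z\tau$ is a row of $H_Z\tau=H_X$, generators go to generators with no signs. So $H_\tau\mathcal S H_\tau^\dagger=\mathcal S$.

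\textbf{$S_\tau$.} Both $\mathrm{CZ}$ and $S^\dagger$ are diagonal, so every $Z(z)$ is fixed. For a generator $X(x)$, use the conjugation rules $\mathrm{CZ}_{ij}: X_i\mapsto X_iZ_j$ and $S^\dagger: X_i\mapsto \pm i\,X_iZ_i$. On each $2$-cycle $\{i,\tau(i)\}$ the $Z$ lands on $\tau(i)$, and on each fixed point it lands on $i=\tau(i)$. The image is therefore
\[
X(x)\mapsto \epsilon_x\, X(x)\,Z(x\tau), \qquad \epsilon_x\in\{\pm1,\pm i\}.
\]
Because $x\tau$ is a row of $H_Z$, the operator $X(x)Z(x\tau)$ lies in $\mathcal S$ up to phase, and it remains to show $\epsilon_x=+1$.

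\textbf{The phase $\epsilon_x$.} I will split the support of $x$ into three parts: $2$-cycles with both endpoints in $x$, $2$-cycles with exactly one endpoint in $x$, and fixed points in $x$. Multiplying out the single-site factors then computes $\epsilon_x$ explicitly.
\begin{itemize}
\item The conjugated operator $S_\tau X(x) S_\tau^\dagger$ is Hermitian. The operator $X(x)Z(x\tau)$ is Hermitian up to the factor $(-1)^{\langle x,x\tau\rangle}$. CSS orthogonality gives $\langle x, x\tau\rangle=0$, because $x$ is a row of $H_X$, $x\tau$ is a row of $H_Z$, and $H_XH_Z^T=0$. So $\epsilon_x=\pm1$.
\item Expanding $\langle x,x\tau\rangle = |x\cap\mathrm{Fix}(\tau)| + 2\,\#\{\text{$2$-cycles inside } x\}$ shows that $|x\cap \mathrm{Fix}(\tau)|$ is even.
\item With this parity in hand, the residual sign is $(-1)^{|x\cap\mathrm{Fix}|/2}$ combined with a $(-1)$ for each $2$-cycle contained in $x$.
\end{itemize}

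\textbf{Main obstacle.} I expect the sign bookkeeping to be the hard step, since nothing above forces $\epsilon_x=+1$. My first approach is to show that the sign map $x\mapsto\epsilon_x$ is a character on $\rs{H_X}$ with quadratic correction term $\langle x,y\tau\rangle$. This correction vanishes for $y\tau\in\rs{H_Z}$, so the map is a genuine homomorphism $\rs{H_X}\to\{\pm1\}$. Any such homomorphism is realized by conjugation with some Pauli $Z(u)$. The fallback, following \cite{Breuckmann_2024}, is therefore that $S_\tau$ preserves $\mathcal Q$ after composing with that Pauli correction, absorbed into the Pauli frame. In the special case where every $\epsilon_x=+1$, for example when each $X$-check meets $\mathrm{Fix}(\tau)$ in a multiple of four and contains an even number of $2$-cycles, no correction is needed.
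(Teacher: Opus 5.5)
The paper gives no proof of this Fact; it only cites \cite{Breuckmann_2024}. Your stabilizer-conjugation route is the natural one, and your $H_\tau$ argument is complete: $X(x)\mapsto Z(x\tau)$ and $Z(z)\mapsto X(z\tau)$ with no phases, and $H_X\tau=H_Z$, $H_Z\tau=H_X$ send generators to generators.

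For $S_\tau$ the gap is in the statement, not your proof. Write $|F|=|x\cap\mathrm{Fix}(\tau)|$ and let $|A|$ be the number of $2$-cycles inside $x$. Your sign $\epsilon_x=(-1)^{|A|+|F|/2}=(-i)^{|x\cap x\tau|}$ is correct, where $|x\cap x\tau|=\sum_j x_jx_{\tau(j)}$ is counted over $\mathbb Z$. The obstacle you flag cannot be removed by better bookkeeping, because $\epsilon_x=-1$ genuinely occurs, so the $S_\tau$ half of the Fact is false as written. Take $H_X=H_Z=\mathbf 1^{1\times 6}$ (the $[[6,4,2]]$ code) with $\tau=\mathrm{id}$. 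Then $(S^\dagger)^{\otimes 6}$ conjugates $X^{\otimes 6}$ to $(-Y)^{\otimes 6}=Y^{\otimes 6}=-X^{\otimes 6}Z^{\otimes 6}$, which is minus a stabilizer. The code space is therefore carried to the $-1$ eigenspace of $X^{\otimes 6}$. Every involution $\tau$ of this code fails the same way, since $|x\cap x\tau|=6$.

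So your ``fallback'' is not a fallback; state it as the theorem. $S_\tau$ preserves $\mathcal Q$ if and only if $|x\cap x\tau|\equiv 0\pmod 4$ for every row $x$ of $H_X$. In general $Z(u)S_\tau$ preserves $\mathcal Q$, where $u$ satisfies $(-1)^{\langle u,x\rangle}=\epsilon_x$ on $\rs{H_X}$, and this Pauli can be absorbed into the Pauli frame. Your identity $\epsilon_{x+y}=\epsilon_x\epsilon_y(-1)^{\langle x,y\tau\rangle}$, with $\langle x,y\tau\rangle=0$ from $H_XH_Z^T=0$, is exactly what guarantees $u$ exists. It also reduces the mod-$4$ test to the rows of $H_X$. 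Your sufficient condition (fixed-point overlap divisible by $4$ and an even number of $2$-cycles) is stronger than needed; for example, $|F|\equiv 2\pmod 4$ with $|A|$ odd also gives $\epsilon_x=+1$.

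The paper's own application is unaffected. In the $[[132,30,12]]$ case study $\tau=\mathrm{id}$ and every check has weight $12\equiv 0\pmod 4$, so $(S^\dagger)^{\otimes 132}$ needs no correction.
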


\subsubsection{Homomorphic CNOTs/Measurements from Chain Homomorphisms}
 
The two symmetries above relate a code to itself. Symmetries relating \emph{two} codes also give
rise to transversal gates, and the chain complex formalism of Section~\ref{sssec:chain-complex} is
the natural language for them: a structure-preserving map between two codes is a map between their
complexes that commutes with the boundary maps.
 
\begin{definition}[Chain Homomorphism]\label{def:chain-map}
Let $\mathcal Q, \mathcal Q'$ be CSS codes with boundary maps $\partial_i, \partial'_i$ as in Definition~\ref{def:chain-complex}. A chain homomorphism
$\mathbf \gamma = \{\gamma_z, \gamma_q,\gamma_x\}$ is a triple of linear maps such the following diagram commute:
\begin{equation}\label{eq:chain-map}
    \begin{tikzcd}[ampersand replacement=\&]
        {Z} \& {Q} \& {X} \\
        {Z'} \& {Q'} \& {X'}
        \arrow["{H_Z^T}", from=1-1, to=1-2]
        \arrow["{H_X}", from=1-2, to=1-3]
        \arrow["{H_Z'^T}", from=2-1, to=2-2]
        \arrow["{H_X'}", from=2-2, to=2-3]
        \arrow["{\gamma_z}", from=1-1, to=2-1]
        \arrow["{\gamma_q}", from=1-2, to=2-2]
        \arrow["{\gamma_x}", from=1-3, to=2-3]
    \end{tikzcd}
\end{equation}
That is, $\gamma_z: Z\to Z'$, $\gamma_q: Q\to Q'$ and $\gamma_x: X\to X'$ are named after the space they map \emph{out of}, and commutativity of the two squares reads
\begin{equation}\label{eq:chain-map-squares}
    \gamma_q H_Z^T = H_Z'^T\gamma_z,\qquad
    \gamma_x H_X = H_X'\gamma_q .
\end{equation}
\end{definition}
 
A chain homomorphism preserves the (co)homology of the complex. That is, $\mathbf\gamma$ induces a map $\mathcal L'_Z \to \mathcal L_Z$ on the $Z$-sector and $\mathcal L_X \to \mathcal L'_X$ in the opposite direction. Note that the previous two subsubsections are the special case $\mathcal Q = \mathcal Q'$ with $\gamma_q$ a permutation: a code automorphism is a chain endomorphism of the code, the right-hand square of \eqref{eq:chain-map-squares} becoming $\gamma_x H_X = H_X\gamma_q$, which is the virtuality condition $\rho_XH_X = H_X\sigma$ of Fact~\ref{fact:aut-facts} under $\gamma_x = \rho_X^{-1}$, $\gamma_q = \sigma^{-1}$ (permutations act on the right of check matrices but on the left of chain groups, so the two conventions differ by an inverse, which for a permutation matrix is a transpose). A ZX duality is instead an isomorphism onto the cochain complex \eqref{eq:css-cocomplex}.
 
These homomorphisms identify correspondences between logical qubits of different code blocks, which results in transversal CNOT operators:
 
\begin{definition}[Homomorphic CNOT~\cite{huang2022homomorphiclogicalmeasurements}]\label{def:homomorphic-cnot}
Let $\mathbf \gamma: \mathcal Q \to \mathcal Q'$ be a chain homomorphism as in Definition~\ref{def:chain-map}. The physical $\mathcal Q$-controlled CNOTs specified by $\gamma_q$, i.e.\ a CNOT controlled on qubit $i$ of $\mathcal Q$ and targeting qubit $j$ of $\mathcal Q'$ applied whenever $\gamma_q[i,j] = 1$, implement a set of $\mathcal Q$-controlled logical CNOTs between the two blocks. Here and below we write $\gamma_q$ as the $n\times n'$ incidence matrix of this CNOT pattern, with rows indexed by the qubits of $\mathcal Q$ and columns by those of $\mathcal Q'$, so that it acts on the right of Pauli row vectors; it is the transpose of the vertical arrow of~\eqref{eq:chain-map}. In particular, let $L_X, L_Z$ and $L'_X, L'_Z$ be canonical logical bases for
$\mathcal Q, \mathcal Q'$. The logical circuit is described by
\[
    \prod_{(i,j)\,:\,M_{i,j} = 1} \mathrm{CNOT}(\bar Q_i, \bar Q'_j),
    \qquad\text{where}\qquad
    M = L_X\,\gamma_q\, L'^T_Z \in \mathbb F_2^{k\times k'}.
\]
\end{definition}

The pattern matrix $M$ follows from propagation, since a logical $\bar X_i$ with representative $(L_X)_i$ picks up $(L_X)_i\gamma_q$ on $\mathcal Q'$, whose class is named by pairing against $L'_Z$.
 
We can use these transversal CNOTs to implement logical Pauli product measurements (PPMs). To measure $Z$-type products on a data code $\mathcal Q$, we prepare the logical qubits of an ancilla code $\mathcal Q'$ in the $Z$ basis, apply the $\mathcal Q$-controlled homomorphic CNOTs, and measure $\mathcal Q'$ transversally in the $Z$ basis; the $X$-type version reverses the control and both bases. What gets measured is the set of products named by the \emph{columns} of $M$ --- measuring ancilla logical $j$ returns $\prod_i\bar Z_i^{\,M_{ij}}$ --- so the gadget performs $k'$ measurements in parallel, with single logical qubit measurements being the special case where each column of $M$ has weight one. The scheme is selective, in the sense that the choice of ancilla determines which products are measured.
 
Furthermore, if either block has non-trivial code automorphisms or ZX dualities, the same physical CNOT pattern gives access to a larger family of measurement patterns by composing $\mathbf\gamma$ with the permutations $\sigma$ or $\tau$. Designing the lift so that these symmetries are available by construction is therefore doing double duty, and we
return to it in Section~\ref{sec:logic}.

\subsection{Kasai Codes}\label{ssec:kasai}
Next, we describe the Kasai code construction from
\cite{kasai2026breakingorthogonalitybarrierquantum}. In broad strokes, it generalizes the quasi-cyclic Hagiwara--Imai codes of \cite{Hagiwara_2007}: we start with a
block-circulant proto-matrix $B$ and lift each entry to a $P \times P$ permutation matrix. The key difference is the choice of lift. Whereas the quasi-cyclic construction lifts with commutative cyclic shifts, Kasai lifts with affine permutation matrices (APMs), which in general do \emph{not} commute.

\begin{definition}[Affine Permutation Matrices]
An affine permutation matrix $\mathrm{APM}(a, b, P)$ with $\gcd(a, P) = 1$ is a binary permutation matrix for the map
\[
A: x \mapsto ax + b \mod P.
\]
We use $\mathbb A_P = \{\apmtx{a,b,P}:\gcd(a,P) = 1, 0\leq a,b<P\}$ to denote the group of affine permutations on $[P]$.
\end{definition}

When multiplying two block-circulant matrices, it is helpful to define the following index set
\begin{equation}\label{eq:activeset}
    \Gamma_J= \bigcup_{|r|< J} \left\{(i,j)\, \middle|\, i+j \equiv_{\frac L2} r,\ 0\leq i,j< \frac L2\right\},
\end{equation}
where $r$ ranges over the $2J-1$ residues $-(J-1),\dots, J-1$ modulo $\frac L2$. These are precisely the pairs of generator indices that the first $J$ block rows of the two parent matrices pair against one another: as computed in~\eqref{eq:psi-r} below, the $(i,j)$ block of $\hat H_X\hat H_Z^T$ depends only on the offset $j-i$ and collects the commutators of all generator pairs whose indices sum to $j - i$, and $j-i$ runs over $\{-(J-1),\dots,J-1\}$ as $i,j$ run over $[J]$.

By picking APMs with commutativity relations specified by some active set $\Gamma:\quad \Gamma_{J}\subseteq \Gamma \subseteq [\frac L2]\times [\frac L2]$, we can selectively enforce the CSS orthogonality condition only on a subset of $J$ (active) rows of $B$ while deliberately breaking it on the remaining (latent) rows. This allows the logical degrees of freedom spanned by the latent rows to achieve distances higher than the stabilizer weight. The resulting codes achieve rate $\geq 1-\frac{2J}{L}$, hence $\geq 1/2$ whenever $J\leq \frac L4$, and the check matrices $\hat H_X, \hat H_Z$ obtained from the lift $\{F_i\}, \{G_i\}$ are made explicit below.

\begin{definition}[Kasai Code]\label{def:kasai}
Given parameters $P, L, J \leq \frac{L}{2} \in \mathbb{N}_+$ and an active set $\Gamma:\, \Gamma_{J}\subseteq \Gamma \subseteq [\frac L2]\times [\frac L2]$, the Kasai code $\mathrm K_{L,J}(\mathcal F,\mathcal G)$ is a CSS code specified by generators $\mathcal F = \{F_i\}_{i\in[\frac L2]}, \mathcal{G} = \{G_i\}_{i\in[\frac L2]}\subset \mathbb A_P$ such that $[F_i, G_j] = 0 \iff (i,j)\in \Gamma$. The stabilizers $ H_X, {H}_Z$ are given by the first $J$ rows of the block-circulant parent matrices $\hat{H}_X, \hat{H}_Z$:

\begin{align*}
    [\hat H_X]_{i,j} &= F_{j-i},  &[\hat H_X]_{i,j+\frac L2 } &= G_{j-i}; \\
    [\hat H_Z]_{i,j} &= G^T_{i-j}, & [\hat H_Z]_{i,j+\frac L2 } &= F^T_{i-j}.
\end{align*}
\end{definition}

We call $H_X, H_Z$ the active matrices and $\tilde H_X, \tilde H_Z$ built from the remaining $\frac L2-J$ rows the latent matrices.

To show orthogonality, we compute
\begin{equation}\label{eq:psi-r}
    [\hat H_X\hat H_Z^T]_{i,j} = \Psi_{j-i},\qquad
    \Psi_{r} := \sum_{u\in[\frac L2]} [F_u, G_{r-u}],
\end{equation}
which follows by expanding $[\hat H_X\hat H_Z^T]_{i,j} = \sum_l F_{l-i}G_{j-l} + \sum_l G_{l-i}F_{j-l}$ and substituting $u = l-i$ in the first sum and $u = j-l$ in the second. Then $\Psi_{j-i} = 0$ for each $0\leq i,j<J$, since every term $[F_u, G_{(j-i)-u}]$ has $u + ((j-i)-u) \equiv_{\frac L2} j-i$ with $|j-i|<J$, so the corresponding index pair lies in $\Gamma_J\subseteq\Gamma$ and $[F_i,G_j] = 0\iff (i,j)\in\Gamma$ applies. Hence $H_XH_Z^T = 0$ and the active matrices do define a CSS code.

As a concrete example, take $L = 12$, $J = 3$ and $\Gamma  = \left([\frac L2]\times [\frac L2]\right)  \setminus \{(0,3), (1,2)\}$. We have the $\tfrac L2\times\tfrac L2$ block circulants

\begin{align}
F &= \begin{bmatrix}
F_0 & F_1 & F_2 & F_3 & F_4 & F_5\\
F_5 & F_0 & F_1 & F_2 & F_3 & F_4\\
F_4 & F_5 & F_0 & F_1 & F_2 & F_3\\
F_3 & F_4 & F_5 & F_0 & F_1 & F_2\\
F_2 & F_3 & F_4 & F_5 & F_0 & F_1\\
F_1 & F_2 & F_3 & F_4 & F_5 & F_0
\end{bmatrix},\label{eq:F}\\[6pt]
G &= \begin{bmatrix}
G_0 & G_1 & G_2 & G_3 & G_4 & G_5\\
G_5 & G_0 & G_1 & G_2 & G_3 & G_4\\
G_4 & G_5 & G_0 & G_1 & G_2 & G_3\\
G_3 & G_4 & G_5 & G_0 & G_1 & G_2\\
G_2 & G_3 & G_4 & G_5 & G_0 & G_1\\
G_1 & G_2 & G_3 & G_4 & G_5 & G_0
\end{bmatrix},
\end{align}
from which the parent check matrices are $\hat H_X = [F|G]$ and $\hat H_Z = [G^T|F^T]$,
and 

\begin{align}
\hat H_X \hat H_Z^T 
&=\begin{bmatrix}H_X\\ \tilde H_X\end{bmatrix}
\begin{bmatrix}H_Z^T & \tilde H_Z^T\end{bmatrix}
=\begin{bmatrix}
H_X H_Z^T & H_X\tilde H_Z^T\\
\tilde H_X H_Z^T & \tilde H_X\tilde H_Z^T
\end{bmatrix}\\
&=\left[\begin{array}{ccc|ccc}
\Psi_0 & \Psi_1 & \Psi_2 & \Psi_3 & \Psi_4 & \Psi_5\\
\Psi_5 & \Psi_0 & \Psi_1 & \Psi_2 & \Psi_3 & \Psi_4\\
\Psi_4 & \Psi_5 & \Psi_0 & \Psi_1 & \Psi_2 & \Psi_3\\
\hline
\Psi_3 & \Psi_4 & \Psi_5 & \Psi_0 & \Psi_1 & \Psi_2\\
\Psi_2 & \Psi_3 & \Psi_4 & \Psi_5 & \Psi_0 & \Psi_1\\
\Psi_1 & \Psi_2 & \Psi_3 & \Psi_4 & \Psi_5 & \Psi_0
\end{array}\right] \\
&= 
\left[\begin{array}{ccc|ccc}
&&&\Psi_3&&\\
&&&&\Psi_3&\\
&&&&&\Psi_3\\
\hline
\Psi_3&&&&&\\
&\Psi_3&&&&\\
&&\Psi_3&&&
\end{array}\right]
\end{align}
where, since the only excluded index pairs $(0,3), (1,2)$ both sum to $3$, we have $\Psi_r = 0$ for every $r\neq 3$, and the construction is designed so that $\Psi_3 = [F_0,G_3] + [F_1,G_2]\neq 0$.

A more compact way to write down Kasai's codes is as a two block code where each block is an element from the group algebra of $\mathbb Z_{\frac{L}{2}}\times \mathbb A_P$ acting on $[\frac L2]\times [P]$:

\begin{align}
    F = \sum_{i \in [\frac{L}{2}]} z_i\otimes F_i,\\
    G = \sum_{i \in [\frac{L}{2}]} z_i\otimes G_i;
\end{align}

where $z_i  \in \mathbb Z_{\frac L2}$ is represented as the permutation matrix $[z_{i}]_{j ,k} = \delta_{j+i, k}$.

While achieving high rate, distance, and girth is enough to show strong performance under the code-capacity noise model, practical implementation under circuit level noise imposes additional desiderata, such as shorter code length, hardware compatibility, and explicit, low-weight logical basis and code automorphisms. The follow-up co-design of~\cite{zhao2026ultrahighratequantumerrorcorrection} targets hardware compatibility by relaxing Kasai's girth-$8$ requirement to girth $\geq 6$, and by constraining not the generators themselves but the \emph{transition permutations} of the syndrome-extraction schedule: the rearrangement carrying the ancilla ordering of one round to that of the next is $T_{ij} = F_jF_i^{-1}$ (and likewise $F_jG_i^{-1}$ and $G_jG_i^{-1}$ for the mixed transitions), and one requires every such $T_{ij}$ to commute with a fixed reference APM $A$. Because commuting permutations share an orbit decomposition, each $T_{ij}$ then acts as a cyclic shift within each orbit of $A$ together with a permutation between orbits, i.e.\ as a small number of parallel row and column moves. This yields smaller instances whose syndrome extraction is a short sequence of such structured moves, with millisecond-scale QEC cycle time estimates and low logical error rates (LERs) using only a few pairs of AODs. As we show below, generators subject to this reference-APM constraint factor into a \emph{direct product} or \emph{semi-direct product} of finite groups. We propose instead to lift with product groups directly, retaining the design flexibility of the APM construction while being simpler and substantially more transparent algebraically.

\subsection{Finite Groups and Group Products}\label{ssec:groups}

Let's also establish some notations on finite permutation groups and their matrix representations. 

\begin{definition}[Matrix Permutation Representation]\label{def:mtx-perm}
Given a finite permutation group $G$ acting on $[n]$ by $i\mapsto g(i)$, we define the natural matrix representation as
$$
     M_G : G \to \mathbb F_2^{n\times n},\quad g\mapsto P_g \quad \text{with} \qquad [P_g]_{i,j} = \delta_{j = g(i)}.
$$
We write $M_g$ for $M_G(g)$ when the group is clear from context, and we identify a group element with its matrix throughout. With this convention the matrices act on the right of row vectors, $e_i P_g = e_{g(i)}$, so that $M_gM_h = M_{h\circ g}$.
\end{definition}

\begin{definition}[Direct Product]\label{def:direct-product}
Given finite groups $H_k$ acting on $[k]$ and $C_m$ acting on $[m]$, we can construct a new group via a direct product 
\begin{align*}
     H_k \times C_m = \{(h, c): h\in H_k,\ c\in C_m\}.
\end{align*}
$H_k \times C_m$ acts on the product set $[k]\times[m]$ by $(h,c)\cdot(i,j) = (h(i),\, c(j))$, and this product action is realized by the Kronecker product
\begin{align*}
     M_\times: (h,c) \;\mapsto\; M_H(h)\otimes M_C(c) \;\in\; \mathbb F_2^{km\times km}.
\end{align*}

\end{definition}

\begin{definition}[Semi-direct Product]\label{def:semidirect-product}
Given a finite group $H_k$ acting on $[k]$ and $k$ copies of a finite group $C_m$ acting on $[m]$, we can construct a new group via a semi-direct product (with normal subgroup on the right):
$$
H_k \ltimes C_m^{k} = \{(h;\mathbf c): h\in H_k, \mathbf c =(c_1,...,c_k)\in C_m^k \}
$$
acting on $[km] \cong [k]\times [m]$, such that each element $ (h; \mathbf{c})$ defines the map on $(i, j)\in [k]\times [m]$:
$$
  (h;\mathbf{c}):  (i, j ) \mapsto (h(i), c_{i}(j)).
$$

Suppose $H_k$ and $C_m$ are permutation groups with matrix representations $M_H, M_C$; then the semi-direct product is realized by

\begin{equation}\label{eq:spk-matrix}
  M_\ltimes : (h;\mathbf c)\mapsto
 \left(\bigoplus_{i\in[k]} M_C(c_i)\right)\cdot\left( M_H(h)\otimes I_m\right)
  \;\in\;\mathbb F_2^{km\times km}.
\end{equation}
The order of the two factors matters, and is fixed by the displayed action: reading the right action of Definition~\ref{def:mtx-perm} left to right, $e_{(i,j)}$ is first sent to $e_{(i,c_i(j))}$ by the block-diagonal factor and then to $e_{(h(i),c_i(j))}$ by the block-permutation factor. (The opposite order would instead realize $(i,j)\mapsto (h(i), c_{h(i)}(j))$, which generates the same group but re-labels the bottom vector by $h$.)
\end{definition}

That is, $H_k\ltimes  C_m $ describes permutations on $k$ blocks, each a copy of $[m]$: the $j$-th factor of the base group $C_m^{\,k}$ acts inside block $j$, while $H_k$ permutes between the blocks. Furthermore, if we restrict all factors in $C^k_m$ to be the same, $\mathbf c = (c,\dots,c)$, then $\bigoplus_i M_C(c) = I_k\otimes M_C(c)$ and \eqref{eq:spk-matrix} collapses to $M_H(h)\otimes M_C(c)$; that is, the semidirect product restricts to the direct product $H_k\times C_m$. We also remark that ``product'' here refers to a group product that happens during the lift, as opposed to the code construction technique the word more often refers to, which happens at the proto-matrix level, such as hypergraph-product or balanced-product constructions.

In the following, while this framework generalizes easily to arbitrary groups or matrices, we shall focus only on permutation groups here, and refer to group elements by their matrix representations. Furthermore, the base group $C_m$ is often referred to as the \emph{bottom} of the semi-direct product, while $H_k$ is called the \emph{top}.

We can give some simple facts about the commutativity of these product groups:

\begin{lemma}\label{lemma:dpk-commutivity}
Suppose $H_k$ is non-abelian and $C_m$ is abelian, and let $F = (h,a), G = (h',b)\in H_k\times C_m$. Then
\begin{equation}\label{eq:dpk-commutivity}
    [F,G] = [M_H(h), M_H(h')]\otimes M_C(ab),
    \qquad\text{so}\qquad
    [F,G] = 0 \iff [h,h'] = 0.
\end{equation}
\end{lemma}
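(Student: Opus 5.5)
The plan is a direct computation with the Kronecker-product representation of Definition~\ref{def:direct-product}. Throughout we work over $\mathbb F_2$, so the commutator of two matrices is $[A,B]=AB+BA$. For a pair of group elements, ``$[h,h']=0$'' is read as $[M_H(h),M_H(h')]=0$, i.e.\ the elements commute. Identifying $F$ and $G$ with their matrices, we write $F=M_H(h)\otimes M_C(a)$ and $G=M_H(h')\otimes M_C(b)$.

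First, we use the mixed-product rule $(A\otimes B)(C\otimes D)=AC\otimes BD$ to expand both products:
\begin{equation*}
FG = M_H(h)M_H(h')\otimes M_C(a)M_C(b),\qquad GF = M_H(h')M_H(h)\otimes M_C(b)M_C(a).
\end{equation*}
Next, we use that $C_m$ is abelian together with the representation property of Definition~\ref{def:mtx-perm}. Although that convention reverses the order ($M_gM_h=M_{h\circ g}$), commutativity of $C_m$ gives
\begin{equation*}
M_C(a)M_C(b)=M_C(b)M_C(a)=M_C(ab).
\end{equation*}
Both products therefore share the same right tensor factor. Bilinearity of $\otimes$ then yields
\begin{equation*}
[F,G]=\bigl(M_H(h)M_H(h')+M_H(h')M_H(h)\bigr)\otimes M_C(ab)=[M_H(h),M_H(h')]\otimes M_C(ab),
\end{equation*}
which is the displayed identity.

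For the equivalence, we observe that $X\otimes P=0$ forces $X=0$ whenever $P$ is a nonzero matrix. Indeed, each block of $X\otimes P$ is $X_{ij}P$, and $P=M_C(ab)$ is a permutation matrix, hence nonzero. So $[F,G]=0$ if and only if $M_H(h)M_H(h')=M_H(h')M_H(h)$. Since $H_k$ is a permutation group acting on $[k]$, its natural representation is faithful. The matrix identity is therefore equivalent to $h\circ h'=h'\circ h$, i.e.\ $[h,h']=0$.

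There is no real obstacle in this argument; the only points needing care are the order-reversing convention $M_gM_h=M_{h\circ g}$ and the faithfulness of $M_H$. The convention is harmless because we only compare the two orders of the same pair of elements. Faithfulness holds because $H_k$ is given as a group of permutations of $[k]$, so distinct elements have distinct matrices.
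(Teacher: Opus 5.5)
Your proposal is correct and follows essentially the same route as the paper's proof: the mixed-product rule, commutativity of $C_m$ to factor out $M_C(ab)$, and the observation that $X\otimes P=0$ with $P$ a nonzero permutation matrix forces $X=0$. Your explicit remarks on the order-reversing convention and on faithfulness of $M_H$ make precise the final step $[M_H(h),M_H(h')]=0\iff hh'=h'h$, which the paper leaves implicit.
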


\begin{proof}
Using $M_\times(h,c) = M_H(h)\otimes M_C(c)$ from Definition~\ref{def:direct-product}, the mixed-product property of the Kronecker product, and $M_C(a)M_C(b) = M_C(ab) = M_C(ba) = M_C(b)M_C(a)$ (recall $C_m$ is abelian), we have over $\mathbb F_2$
\begin{align*}
        [F,G] &= \left[M_H(h)\otimes M_C(a),\ M_H(h')\otimes M_C(b)\right] \\
        &= \big(M_H(h)\otimes M_C(a)\big) \big( M_H(h')\otimes M_C(b)\big) +\big(M_H(h')\otimes M_C(b)\big)\big(M_H(h)\otimes M_C(a)\big)\\
        &= \big( M_H(h)  M_H(h')\big)\otimes \big(M_C(a) M_C(b)\big) + \big(M_H(h')  M_H(h)\big)\otimes \big(M_C(b) M_C(a)\big)\\
        &= \big( M_H(h)  M_H(h') +  M_H(h')  M_H(h) \big)\otimes M_C(ab)
        = [M_H(h),M_H(h')]\otimes M_C(ab),
\end{align*}
which is the first claim. For the second, $M_C(ab)$ is a permutation matrix, hence non-zero and invertible, and $A\otimes B = 0$ with $B\neq 0$ forces $A = 0$. So $[F,G] = 0$ if and only if $[M_H(h),M_H(h')] = 0$, which is $hh' = h'h$.
\end{proof}

\begin{lemma}\label{lemma:spk-commutivity}
Suppose $H_k$ is non-abelian and $C_m$ is abelian, and let $F=(h;\mathbf a),\ G=(h';\mathbf b)\in H_k\ltimes C_m^{k}$. Then
\begin{equation}\label{eq:spk-commutivity}
[F,G]=0 \iff
\begin{cases}
    [h,h']=0, \quad \text{and} \\
    a_{h'(i)}b_i =b_{h(i)}a_i \quad \forall i\in[k].
\end{cases}
\end{equation}

In particular, there are a few simple one-sided conditions that may be useful for code search:
    \begin{enumerate}
    \item (Necessary) $[F,G]= 0\implies [h,h']= 0$,
    \item (Sufficient) $h=h'=e \implies [F,G]=0$.
    \end{enumerate}
\end{lemma}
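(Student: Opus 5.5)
The plan is to compute both products $M_\ltimes(F)M_\ltimes(G)$ and $M_\ltimes(G)M_\ltimes(F)$ directly as permutations of $[k]\times[m]$, using the explicit action in Definition~\ref{def:semidirect-product} and the right-action convention of Definition~\ref{def:mtx-perm}. The key observation is that over $\mathbb F_2$ the commutator $[F,G]=FG+GF$ is a sum of two permutation matrices. Such a sum vanishes if and only if the two matrices are equal, because each is a $0/1$ matrix with exactly one $1$ per row. So $[F,G]=0$ is equivalent to $FG$ and $GF$ inducing the same map on $[k]\times[m]$.

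First I compute the composite actions. Since $e_x P_F P_G = e_{G(F(x))}$, the matrix $FG$ applies $F$ first and then $G$. Starting from $(i,j)$, the map $F=(h;\mathbf a)$ sends it to $(h(i),a_i(j))$. Then $G=(h';\mathbf b)$ sends that point to $(h'(h(i)),\,b_{h(i)}a_i(j))$, because the bottom factor of $G$ that acts is the one indexed by the current block $h(i)$. Symmetrically, $GF$ sends $(i,j)$ to $(h(h'(i)),\,a_{h'(i)}b_i(j))$.

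Next I compare the two results coordinatewise. The first coordinates agree for all $i$ if and only if $h'\circ h = h\circ h'$, i.e.\ $[h,h']=0$. Given that, the second coordinates agree for all $j$ if and only if $b_{h(i)}a_i = a_{h'(i)}b_i$ as elements of $C_m$ for each block $i$. Because $C_m$ is abelian, the order of factors inside each product is immaterial, so this is exactly the stated condition $a_{h'(i)}b_i=b_{h(i)}a_i$.

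The two one-sided conditions then follow immediately. Necessity of $[h,h']=0$ is just the first clause of the equivalence. For sufficiency, setting $h=h'=e$ makes the first clause trivial and reduces the second to $a_ib_i=b_ia_i$, which holds because $C_m$ is abelian.

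The only delicate point is bookkeeping. One must track which index of $\mathbf b$ acts after $F$ has moved the block label, since the ordering in \eqref{eq:spk-matrix} fixes this as $b_{h(i)}$ and not $b_i$. One must also make sure the right-action convention is used consistently, so that $FG$ means ``$F$ then $G$.'' Getting these conventions wrong would swap $h(i)$ with $h'(i)$ in the condition, though the set of commuting pairs would be unchanged.
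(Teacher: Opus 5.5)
Your proof is correct, and it takes a more elementary route than the paper's.

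\textbf{The paper's route.} It works with block matrices. It writes $H=M_H(h)\otimes I_m$ and $A=\bigoplus_i M_C(a_i)$, and similarly $H'$, $B$. Conjugating the block-diagonal factors past the block permutations gives $[F,G]=HH'A_{H'}B+H'HB_HA$. This is then factored as $HH'(P+Q)B_HA$, where:
\begin{itemize}
\item $P$ is block diagonal with blocks $a_{h'(i)}b_ia_i^{-1}b_{h(i)}^{-1}$;
\item $Q$ is a block permutation with identity blocks, encoding the discrepancy between $hh'$ and $h'h$.
\end{itemize}
The outer factors are invertible, and $P+Q=0$ forces $P=Q=I$, which gives the two conditions.

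\textbf{Your route.} You skip all of this. You evaluate $FG$ and $GF$ on $e_{(i,j)}$ and compare the two coordinates. This ties the subscripts $b_{h(i)}$ and $a_{h'(i)}$ directly to the displayed action $(i,j)\mapsto(h(i),c_i(j))$.

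\textbf{What each approach buys.} Your approach is robust to conventions. The paper's written proof uses the factor order $HA$, which is the reverse of Eq.~\eqref{eq:spk-matrix}. It also identifies $H'^{T}AH'$ with $(a_{h'(i)})_i$, whereas under the right-action convention the $i$-th block is $a_{h'^{-1}(i)}$. The two slips cancel, so the stated condition is right, but your computation never has to deal with them. In exchange, the paper's route gives a closed-form factorization of the commutator, with the obstruction isolated in $P+Q$.

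\textbf{A correction to your closing remark.} This does not affect the proof.
\begin{itemize}
\item Confusing which of $F,G$ acts first is harmless, since the condition $FG=GF$ is symmetric.
\item Reversing the factor order in Eq.~\eqref{eq:spk-matrix} is not harmless. It replaces the subscripts $h(i),h'(i)$ by $h^{-1}(i),h'^{-1}(i)$, and this genuinely changes which labelled pairs commute. The two conditions already differ for $h=\sigma$, $h'=\sigma^{-1}$ with $\sigma$ a $3$-cycle.
\end{itemize}
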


\begin{proof}
    Let $A = M_C(a_1)\oplus ...\oplus M_C(a_k)$, $B =M_C(b_1)\oplus ...\oplus M_C(b_k)$, and let $H = M_H(h)\otimes I$, $H' = M_H(h')\otimes I$ we have:
    \begin{align*}
        [F,G] 
        &= [HA , H'B]\\
        &= HAH'B + H'BHA\\
        &= H\cdot(H'{H'}^{T}) \cdot AH' B  + H'\cdot(H  H^T)B HA\\
        &= HH' A_{H'}B + H'H B_HA.
    \end{align*}
    where $A_{H'}={H'}^{T} AH'$ correspond to the bottom vector $(a_{h'(1)},...,a_{h'(k)})$ conjugated by the top $h'$. 
    (and similarly for $B_H = H^TBH$.)

    Let us define $\textbf{p} = (p_1,...,p_k) \in C_m^k $ with $p_i = (a_{h'(i)}b_i a^{-1}_ib^{-1}_{h(i)})$ and $q = (hh')^{-1} h'h\in H_k$ with matrix representations $P = M_{C_m^k}(\textbf{p}) = \bigoplus_{i\in [k]} M_{C_m}(a_{h'(i)}b_i a^{-1}_ib^{-1}_{h(i)})$ and $Q = M_H(q)\otimes I_m$. It's easy to check that $A_{H'}B = P\cdot B_HA$ and $HH'\cdot Q = H'H$ by definition. Finally, we have $ [F,G] = HH' A_{H'}B + H'H B_HA = HH'(P+Q) B_HA$. Since $B_HA \neq 0$ and $HH'\neq 0$, and $[F,G] = 0\iff P = Q = I$, which is equivalent to the conditions in ~\ref{eq:spk-commutivity}. The rest of the simple conditions follow from straight forwardly. 
    
\end{proof}

\subsection{Notations}\label{ssec:notations}
Finally, we give a glossary of notations used throughout the supplementary materials:
\begin{enumerate}
    \item $[J] = \{0,1,...,J-1\}$ denotes the first $J$ natural numbers.
    \item $S_n$ denotes the permutation group acting on $[n]$.
    \item $\mathbb Z_n$ denotes the cyclic group acting on $[n]$.
    \item $\mathbb A_n$ denotes the affine group acting on $[n]$.
    \item $[A, B] = AB-BA$ denotes the commutator; over $\mathbb F_2$ this is $AB+BA$. For group elements $[f,g] = 0$ means $[M_f,M_g] = 0$, equivalently $fg = gf$.
    \item $a \equiv_P b$ denotes $a = b\mod P$.
    \item $A\oplus B = \begin{bmatrix}A&\\&B\end{bmatrix}$ denotes the direct sum of two matrices $A,B$.
    \item $\mathbb F^{n\times m}_q$ denotes the space of $n\times m$ dimensional $q$-ary matrices.
    \item $\textbf{1}^{n\times m} \in \mathbb F_2^{n\times m}$ denotes the $n\times m$ matrix with all $1$ entries.
    \item $Z_G(S) = \{A\in G, [A,s] = 0\ \forall s\in S\}$ denotes the centralizer of $S$ in $G$. $Z_G(S)$ fixes $S$ point-wise under conjugation.
    \item $N_G(S) = \{A\in G, AsA^{-1} \in S\,  \forall s\in S\}$ denotes the normalizer of $S$ in $G$. $N_G(S)$ fixes $S$ set-wise.
    \item $\pr$ denotes an \emph{orbit projector}. For a product group acting on $[k]\times[m]\cong[n]$, $n = km$, as in Definitions~\ref{def:direct-product} and~\ref{def:semidirect-product}, we write
    \[
        \pr_C = I_k\otimes \textbf{1}^{1\times m}\in \mathbb F_2^{k\times n},
        \qquad
        \pr^H = \textbf{1}^{1\times k}\otimes I_m\in \mathbb F_2^{m\times n},
    \]
    for the maps that sum the coordinates over each orbit of the bottom factor $C_m$, respectively of the top factor $H_k$. In both cases the decoration records the factor that is \emph{collapsed}, so $\pr_C$ retains the top index and $\pr^H$ the bottom index. Their defining property is that they intertwine the lift with its factors: for the direct product,
    \[
        \pr_C\, M_\times(h,c) = M_H(h)\,\pr_C,
        \qquad
        \pr^H\, M_\times(h,c) = M_C(c)\,\pr^H,
    \]
    which is what makes the quotient codes of Definition~\ref{def:quotient} well defined. More generally, for a subgroup $N\leq G$ all of whose orbits on $[n]$ have the same size, $\pr_N$ denotes the corresponding orbit-sum matrix, with one row per orbit. Note that the $n\times n$ orbit sum $\sum_{i\in[m]}M_C(c)^i = I_k\otimes\textbf{1}^{m\times m}$, taken over a generator $c$ of a cyclic $C_m$ acting with $k$ full orbits, has the same row space $\rs{\pr_C}$ and may be used interchangeably where only the row space matters.
\end{enumerate}
\section{GALA Code Construction}\label{sec:construction}
Let's consider another generalization of the block circulant codes where the non-commutativity is instead supplied by a small non-abelian finite group.

\subsection{Monomial Direct Product Codes}\label{ssec:mono-gala}
\begin{definition}[Direct Product GALA Code]\label{def:dpg-code}
Given parameters $L, J\leq \tfrac L2, k, m\in\mathbb Z_+$ and an active set $\Gamma$ as in Definition~\ref{def:kasai}, the Direct Product GALA (DPG) codes $\mathrm{GALA}_{L,J}(H_k\times C_m)$ are the family of two-block CSS codes on $n = Lkm$ qubits lifted over the group algebra $\mathbb F_2[H_k\times C_m]$, where $H_k$ is a non-abelian permutation group on $[k]$ and $C_m$ is an abelian permutation group on $[m]$. A DPG code is specified by choosing the generators (also called lifts) $\mathcal{F},\mathcal{G}\subset H_k\times C_m$:
\begin{align*}
    \mathcal{F} &= \{(F_i, f_{i})\}_{i\in[\frac L2]},\\
    \mathcal{G} &= \{(G_i, {g}_i)\}_{i\in[\frac L2]},
\end{align*}
with $F_i, G_i\in H_k$ and $f_i,g_i\in C_m$, such that
\begin{equation}\label{eq:dpg-active}
    [F_i, G_j] = 0\iff (i,j)\in \Gamma .
\end{equation}
By Lemma~\ref{lemma:dpk-commutivity}, $[(F_i,f_i),(G_j,g_j)] = 0\iff [F_i,G_j] = 0$, so \eqref{eq:dpg-active} is exactly the commutativity pattern of Definition~\ref{def:kasai} for the lifted generators: the entire orthogonality pattern is decided inside $H_k$, and the abelian entries $f_i,g_i$ are unconstrained by it. The stabilizers $ H_X, {H}_Z$ are given by the first $J$ block rows of the parent matrices $\hat H_X = [F|G]$ and $\hat H_Z = [G^T|F^T]$, where
\begin{align*}
    F &= \sum_{i\in [\frac{L}{2}]} z_i\otimes F_i\otimes f_i ,\\
    G &= \sum_{i\in [\frac{L}{2}]} z_i\otimes G_i\otimes g_i,
\end{align*}
so that $F,G\in \mathbb F_2[\mathbb Z_{\frac L2}\times H_k\times C_m]$, with $z_i\in\mathbb Z_{\frac L2}$ as in Definition~\ref{def:kasai}.
\end{definition}

Notice that when using a small group $H_k$ as the non-abelian part (say $S_3$), it is computationally simple to enumerate all tuples $F_i, G_i\in H_k$ satisfying a given active orthogonality constraint \eqref{eq:dpg-active}, and the enumeration is independent of $m$. Two remarks make this enumeration a finite classification rather than a search. First, only the tops enter \eqref{eq:dpg-active}. Second, the classification need only be carried out up to the symmetries of the construction, of which the following reflection is the one we use.

Let us give an explicit example. Suppose $L = 12$, $J = 3$, pick the active set $\Gamma = \left([\frac L2]\times [\frac L2]\right) \setminus \{(0,3), (1,2)\}$ of Definition~\ref{def:kasai}, and let $H_k = S_3$. We can write down this group and its multiplication table as follows:

$$
\renewcommand{\arraystretch}{1.3}
\begin{array}{c|cccccc}
\cdot      & e        & \sigma_0 & \sigma_1 & \tau_0   & \tau_1   & \tau_2   \\ \hline
e          & e        & \sigma_0 & \sigma_1 & \tau_0   & \tau_1   & \tau_2   \\
\sigma_0   & \sigma_0 & \sigma_1 & e        & \tau_2   & \tau_0   & \tau_1   \\
\sigma_1   & \sigma_1 & e        & \sigma_0 & \tau_1   & \tau_2   & \tau_0   \\
\tau_0     & \tau_0   & \tau_1   & \tau_2   & e        & \sigma_0 & \sigma_1 \\
\tau_1     & \tau_1   & \tau_2   & \tau_0   & \sigma_1 & e        & \sigma_0 \\
\tau_2     & \tau_2   & \tau_0   & \tau_1   & \sigma_0 & \sigma_1 & e        \\
\end{array},
\qquad
\begin{array}{c|c}
\text{element} & \text{cycle} \\ \hline
e        & \mathrm{id}   \\
\sigma_0 & (0\,1\,2)     \\
\sigma_1 & (0\,2\,1)     \\
\tau_0   & (1\,2)        \\
\tau_1   & (0\,2)        \\
\tau_2   & (0\,1)        \\
\end{array}.
$$
Commutativity can be checked by a table look-up, and the possible tops are then completely determined.

\begin{lemma}[$S_3$ ansatzes at $L=12$, $J=3$]\label{lemma:s3-ansatz}
Let $L=12$, $J=3$, $\Gamma = \left([\frac L2]\times [\frac L2]\right)\setminus\{(0,3),(1,2)\}$ and $H_k = S_3$. Then the tuples $(\mathcal F^H,\mathcal G^H)$ satisfying \eqref{eq:dpg-active} are exactly
\begin{align}
    \mathcal{F}^H &= \{\sigma_j,\ \tau_i,\ e,\ e,\ e,\ e\},
        & \mathcal{G}^H &= \{e,\ e,\ \sigma_k,\ \tau_i,\ e,\ e\};\label{eq:ansatz-a}\\
    \mathcal{F}^H &= \{\tau_i,\ \sigma_j,\ e,\ e,\ e,\ e\},
        & \mathcal{G}^H &= \{e,\ e,\ \tau_i,\ \sigma_k,\ e,\ e\};\label{eq:ansatz-b}\\
    \mathcal{F}^H &= \{\tau_i,\ \tau_j,\ e,\ e,\ e,\ e\},
        & \mathcal{G}^H &= \{e,\ e,\ \tau_i,\ \tau_j,\ e,\ e\}, \quad i\neq j,\label{eq:ansatz-c}
\end{align}
\end{lemma}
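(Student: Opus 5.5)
The plan is to reduce everything to centralizers in $S_3$, since by Lemma~\ref{lemma:dpk-commutivity} condition \eqref{eq:dpg-active} involves only the tops. With the active set $\Gamma=([6]\times[6])\setminus\{(0,3),(1,2)\}$, the constraint says exactly this: $F_0$ fails to commute with $G_3$ and nothing else, $F_1$ fails to commute with $G_2$ and nothing else, and every other pair $(F_i,G_j)$ commutes.

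I will use three facts about $S_3$, all read off from the multiplication table. First, the center is trivial. Second, $Z(\sigma_j)=\{e,\sigma_0,\sigma_1\}$ and $Z(\tau_i)=\{e,\tau_i\}$. Third, any two non-commuting elements generate $S_3$, so their common centralizer is $\{e\}$. In particular a non-commuting pair consists of two non-identity elements, so $F_0,F_1,G_2,G_3\neq e$.

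\emph{Step 1: $F_0,F_1$ do not commute, and neither do $G_2,G_3$.} Suppose $G_2$ and $G_3$ commute. Commuting non-identity pairs in $S_3$ are either two rotations or two equal transpositions.
\begin{itemize}
\item If both are rotations, then $F_0\in Z(G_2)=A_3$, which forces $F_0$ to commute with $G_3$, contradicting $(0,3)\notin\Gamma$.
\item If $G_2=G_3=\tau_i$, then $F_0$ commutes with $G_2$ if and only if it commutes with $G_3$, again a contradiction.
\end{itemize}
The same argument with the roles of $F$ and $G$ exchanged, using that $G_3$ commutes with $F_1$ but not with $F_0$, shows $[F_0,F_1]\neq0$.

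\emph{Step 2: all remaining generators are trivial.} For $j\notin\{0,1\}$, the element $F_j$ commutes with both $G_2$ and $G_3$. Since that pair does not commute, it generates $S_3$, so $F_j$ is central and hence $F_j=e$. Symmetrically, for $j\notin\{2,3\}$, $G_j$ commutes with the non-commuting pair $F_0,F_1$, so $G_j=e$.

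\emph{Step 3: classify $(F_0,F_1,G_2,G_3)$ by the type of $F_0$.} The remaining constraints are that $F_0$ commutes with $G_2$, $F_1$ commutes with $G_3$, and the two crossed pairs do not commute.
\begin{itemize}
\item If $F_0=\sigma_j$, then $G_2\in Z(\sigma_j)\setminus\{e\}$, so $G_2=\sigma_k$. Since $F_1$ does not commute with the rotation $\sigma_k$, $F_1$ is some $\tau_i$. Then $G_3\in Z(\tau_i)\setminus\{e\}$ gives $G_3=\tau_i$. This is \eqref{eq:ansatz-a}, and $[\sigma_j,\tau_i]\neq0$ holds automatically.
\item If $F_0=\tau_i$, then $G_2=\tau_i$. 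Now $F_1$ must not commute with $\tau_i$, so either $F_1=\sigma_j$ or $F_1=\tau_j$ with $j\neq i$.
  \begin{itemize}
  \item If $F_1=\sigma_j$, then $G_3$ is a rotation $\sigma_k$. This is \eqref{eq:ansatz-b}.
  \item If $F_1=\tau_j$ with $j\neq i$, then $G_3=\tau_j$. This is \eqref{eq:ansatz-c}.
  \end{itemize}
\end{itemize}

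\emph{Step 4: converse.} Each listed tuple satisfies \eqref{eq:dpg-active}. All pairs involving $e$ commute, and the four nontrivial pairs are checked directly from the table.

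The main obstacle is Step 1: we must rule out configurations where the non-identity tops commute among themselves, since that is what would allow extra non-identity tops outside positions $0,1$ (for $\mathcal F$) and $2,3$ (for $\mathcal G$). Once Step 1 is in place, the trivial center of $S_3$ forces every other generator to be $e$, and the rest is bookkeeping with centralizers.
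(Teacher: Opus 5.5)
Your proof is correct, and it takes a different route from the paper. The paper's proof is a bare exhaustive check: the tops are read off by table look-up over all assignments in $S_3$.

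You instead prove the classification structurally. The two non-commuting pairs $(F_0,G_3)$ and $(F_1,G_2)$ pin down $F_0,F_1,G_2,G_3$ through the centralizers $Z(\sigma_j)=\{e,\sigma_0,\sigma_1\}$ and $Z(\tau_i)=\{e,\tau_i\}$. Every other generator must commute with a non-commuting pair, which generates $S_3$, so it lies in the trivial center.

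The enumeration is mechanical and needs no insight, but it does not explain why eight of the twelve tops are forced to be $e$. Your argument does, and it also carries over to other tops. For a general $H_k$, the same reasoning only places $F_j$ ($j\geq 2$) in $Z_{H_k}(\{G_2,G_3\})$ and $G_j$ ($j\notin\{2,3\}$) in $Z_{H_k}(\{F_0,F_1\})$. These centralizers need not be trivial, for example when the top has nontrivial center. So your argument also shows where extra freedom in the ansatz would come from for a larger top such as $S_4$.

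One streamlining: your Step~1 is not needed as a separate step, so it is not really the main obstacle. Step~3 uses only the four constraints among $F_0,F_1,G_2,G_3$ (which already give $F_0,F_1,G_2,G_3\neq e$), so you can do it first. Each of the three resulting configurations visibly has $[F_0,F_1]\neq 0$ and $[G_2,G_3]\neq 0$, and that is all Step~2 requires.
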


\begin{proof}
It is easy to check exhaustively.
\end{proof}

To get a valid code we then choose the bottom entries $f,g\in C_m$ greedily to optimize girth and distance, which by Lemma~\ref{lemma:dpk-commutivity} costs nothing in orthogonality. The $[[1152,580,\leq12]]$ code of \cite{zhao2026ultrahighratequantumerrorcorrection} is an instance of this example with $k = 3$, $H_3 = S_3$ and $C_m = \mathbb Z_{32}$, so that $n = Lkm = 12\cdot3\cdot32 = 1152$.

Before proving this connection, and the connections to the other Kasai codes, let us record the more general version of the GALA codes:
\begin{definition}[Semidirect Product GALA Code]\label{def:spg-code}
Given parameters $L, J \leq \tfrac{L}{2}, k, m \in \mathbb{Z}_+$ and an active set $\Gamma$ as in Definition~\ref{def:kasai}, the Semi-direct Product GALA (SPG) codes are the family of two-block CSS codes on $n = Lkm$ qubits lifted over the group algebra $\mathbb F_2[H_k\ltimes C^k_m]$, where $H_k$ is a non-abelian permutation group on $[k]$, $C_m$ is an abelian permutation group on $[m]$, and $H_k\ltimes C_m^k$ is the semi-direct product of Definition~\ref{def:semidirect-product}. An SPG code is specified by choosing the generators $\mathcal{F},\mathcal{G}\subset H_k\ltimes C^k_m$:
\begin{align*}
    \mathcal{F} &=  \{(F_i;\textbf{f}_i )\}_{i\in[\frac L2]},\\
    \mathcal{G} &= \{(G_i; \textbf{g}_i)\}_{i\in[\frac L2]},
\end{align*}
such that $[(F_i;\textbf{f}_i),(G_j;\textbf{g}_j)] = 0\iff (i,j)\in\Gamma$, which by Lemma~\ref{lemma:spk-commutivity} is the commutativity condition~\eqref{eq:spk-commutivity} on the pair $(F_i;\textbf f_i), (G_j;\textbf g_j)$. The stabilizers $ H_X, {H}_Z$ are given by the first $J$ block rows of the parent matrices $\hat H_X = [F|G]$ and $\hat H_Z = [G^T|F^T]$, where
\begin{align*}
    F &= \sum_{i\in [\frac{L}{2}]} z_i\otimes M_\ltimes(F_{i}, \textbf{f}_{i}) ,\\
    G &= \sum_{i\in [\frac{L}{2}]} z_i\otimes M_\ltimes(G_{i},\textbf{g}_{i}).
\end{align*}
\end{definition}

The commutativity condition~\eqref{eq:spk-commutivity} is harder to work with than \eqref{eq:dpg-active}, since the bottoms no longer drop out; but Lemma~\ref{lemma:spk-commutivity} still supplies simplifying sufficient conditions, which serve as good heuristics for computing ansatzes and simplify the proofs below. The most useful of these is to make the bottoms constant on a subset of the lifts:

\begin{definition}[Restricted SPG Code]\label{def:SPG-restricted}
A restricted semidirect Product GALA code $\mathrm {GALA}_{L,J}(H_k\ltimes_\mathcal{R} C_m)$ on $\mathcal{R}\subseteq [L]$ is an SPG code with  $\textbf{f}_i = \textbf{1}\otimes f_i$, $f_i\in C_m$ for all $i\in \mathcal{R}$ and $\textbf{g}_j = \textbf{1}\otimes g_j$, $g_j\in C_m$  for all  $j + \frac L2\in \mathcal{R}$.
\end{definition}

If we restrict an SPG code on the maximal $\mathcal{R} = [L]$, all its matrix elements simplify to $M_\ltimes(h, \textbf{1}\otimes c) = M_\times(h\otimes c)$; that is, the DPG codes are exactly the SPG codes with maximal restriction. The size of the restriction corresponds exactly to the number of abelian column generators in \cite{zhao2026ultrahighratequantumerrorcorrection}, and it is what controls the cost of the movement schedule in Section~\ref{sec:hardware}.

\subsection{Permutation Equivalence of Kasai}\label{ssec:kasai-perm-eq}
We can now show that GALA codes contain Kasai codes as special cases, once AOD compatibility is required of the latter. We say two codes $\mathcal{P}, \mathcal{Q}$ with check matrices $H_P, H_Q$ are permutation equivalent, or $\mathcal{P}\sim_\pi \mathcal{Q}$, when $\pi_r\cdot H_P\cdot \pi_c = H_Q$ for some permutation matrices $\pi_r, \pi_c$. For quantum codes this condition must hold for both $H_X$ and $H_Z$ with the same $\pi_c$. It guarantees that the two codes have the same logical space up to a relabeling of the qubits, and in particular the same $n,k,d$ and girth.

\begin{proposition}[Kasai codes with a reference APM are GALA codes]\label{prop:kasai-containment}
Let $\mathcal{K}=\mathrm K_{L,J}(\mathcal F,\mathcal G)$ be a Kasai code on $[P]$ with generators $\mathcal{F},\mathcal{G}\subset\mathbb A_P$ and active set $\Gamma$ as in Definition~\ref{def:kasai}.
\begin{enumerate}
    \item \emph{(Semi-direct.)} Suppose the centralizer $\mathcal A = Z_{\mathbb A_P}(\mathcal F\cup\mathcal G)$ contains an element $A$ acting \emph{freely} on $[P]$ such that $A^s$ has no fixed point for $0<s<\ord{A}$, and put $m = \ord A$. Then $k = P/m$ is an integer and $\mathcal K\sim_\pi\mathcal Q$ for some $\mathcal Q\in\mathrm{GALA}_{L,J}(H_k\ltimes\mathbb Z_m)$ with the same $L,J,\Gamma$, where $\mathbb Z_m = \rs{A}$ acts regularly inside each orbit of $A$ and $H_k\leq S_k$ is the group induced by $\mathcal F\cup\mathcal G$ on the set of $k$ orbits.
    \item \emph{(Direct.)} Suppose instead $P = km$ with $\gcd(k,m) = 1$, and that the reductions modulo $m$ of the generators pairwise commute in $\mathbb A_m$. Then $\mathcal K\sim_\pi\mathcal Q$ for some $\mathcal Q\in\mathrm{GALA}_{L,J}(H_k\times C_m)$ with the same $L,J,\Gamma$, where $H_k\leq\mathbb A_k$ and $C_m\leq\mathbb A_m$ are generated by the reductions of $\mathcal F\cup\mathcal G$ modulo $k$ and modulo $m$ respectively.
\end{enumerate}
\end{proposition}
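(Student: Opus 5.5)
The plan is to construct, in both cases, a single bijection $\pi:[P]\to[k]\times[m]$ that simultaneously conjugates every generator $F_i,G_i$ into the product group. Once this is done, set $\pi_c = I_L\otimes \pi$ (block-diagonal over the $L$ column blocks) and $\pi_r = I_J\otimes\pi$ on the check side. Then $\pi_r H_X \pi_c^{T}$ is again a block circulant, with entries $\pi F_i\pi^{-1}$ and $\pi G_i\pi^{-1}$. The same holds for $H_Z$, because $\pi G_i^T\pi^{-1} = (\pi G_i\pi^{-1})^T$. Since conjugation preserves commutators, $[\pi F_i\pi^{-1},\pi G_j\pi^{-1}] = \pi[F_i,G_j]\pi^{-1}$ vanishes exactly on $\Gamma$. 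So the conjugated generators define a GALA code with the same $L,J,\Gamma$, and $\mathcal K\sim_\pi\mathcal Q$. All the work is therefore in building $\pi$ and checking that conjugated elements have the required product form.

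\emph{Semi-direct case.} Let $A$ act freely, so every orbit of $\rs{A}$ has size exactly $m=\ord A$. The orbits partition $[P]$, hence $k=P/m$ is an integer. Pick a base point $x_i$ in each orbit $O_i$ and define $\pi(A^j x_i) = (i,j)$; this labelling is well defined because $A^s$ has no fixed point for $0<s<m$. Every $g\in\mathcal F\cup\mathcal G$ commutes with $A$, so it maps orbits to orbits, inducing a permutation $h_g\in S_k$. It also satisfies $g(A^jx_i) = A^j g(x_i)$. Writing $g(x_i) = A^{c_i}x_{h_g(i)}$ gives
\[
\pi g\pi^{-1}:(i,j)\mapsto (h_g(i),\, j+c_i).
\]
This is exactly $M_\ltimes(h_g;\mathbf c)$ of Definition~\ref{def:semidirect-product}, with $c_i\in\mathbb Z_m=\rs A$ acting regularly on each orbit. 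Taking $H_k=\rs{h_g}$ then finishes this case. Care is needed with the matrix convention of Definition~\ref{def:mtx-perm} (right action, block-diagonal factor first), but the displayed map matches the stated action.

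\emph{Direct case.} Use the CRT bijection $\pi:\mathbb Z_P\to\mathbb Z_k\times\mathbb Z_m$, $x\mapsto(x \bmod k, x\bmod m)$, valid since $\gcd(k,m)=1$. An affine map $x\mapsto ax+b$ with $\gcd(a,P)=1$ becomes $(u,v)\mapsto(au+b \bmod k,\ av+b\bmod m)$. This is the product of its reductions in $\mathbb A_k\times\mathbb A_m$, realized as $M_{\mathbb A_k}\otimes M_{\mathbb A_m}$. Let $H_k$ and $C_m$ be the groups generated by the reductions. By hypothesis the mod-$m$ reductions pairwise commute, so $C_m$ is abelian. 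The commutativity pattern is preserved by conjugation, and by Lemma~\ref{lemma:dpk-commutivity} it is then decided by the tops. If $H_k$ happens to be abelian the construction still goes through formally; the definition's requirement of non-abelian $H_k$ holds automatically whenever $\Gamma$ is proper, because some pair fails to commute.

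\emph{Main obstacle.} The subtle step is the semi-direct orbit labelling: I must show that $c_i$ is independent of $j$, and that the resulting map matches the ordering convention of Eq.~\eqref{eq:spk-matrix} rather than the relabelled variant. The identity $gA=Ag$ gives the independence directly. The convention issue I would settle by composing the two factors explicitly on $e_{(i,j)}$ and checking the result against the displayed action.
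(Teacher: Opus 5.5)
Your proposal is correct and follows the same route as the paper. In the semi-direct case you relabel $[P]$ by $(a,s)\mapsto A^s\gamma_a$ and use $gA=Ag$ to obtain a uniform shift inside each orbit, which gives the wreath-product form. In the direct case you use the CRT bijection. You also spell out what the paper leaves implicit: the relabeling lifts to block-diagonal row and column permutations that conjugate every generator at once, so the block-circulant structure and the commutation pattern $\Gamma$ are preserved.
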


\begin{proof}
The first case follow from Lemma 1 of~\cite{zhao2026ultrahighratequantumerrorcorrection}, which we will restate in our language. Let us fix orbit representatives $\gamma = \gamma_0,\dots,\gamma_{k-1}$ and relabel $[P]$ by
\[
    \pi : [k]\times[m]\to[P],\qquad (a,s)\mapsto A^s\gamma_a ,
\]
which is bijective since $A$ act freely on $P$. Let $M\in Z_{\mathbb A_P}(A)$. Then $M\gamma_a = A^{s_a}\gamma_{h(a)}$ for some $h(a)\in[k]$ and $s_a\in\mathbb Z_m$, and $MA^s = A^sM$ gives
\[
    M(A^s\gamma_a) = A^sM\gamma_a = A^{s+s_a}\gamma_{h(a)};
\]
that is, $M$ maps each orbit of $A$ onto an orbit of $A$ and acts inside it as a uniform shift, which is exactly the wreath product structure where each orbit correspond to a shift on $[m]\cong \gamma$ and the inter-orbit shifts correspond to a permutation on $[k]$. The second case follows from CRT.

\end{proof}

Indeed, we can verify numerically that the $[[1152,580]]$ code is permutation equivelent to $\mathrm{GALA}_{12,3}(S_3\times \mathbb Z_{32}) $, and the $[[2304,1156]]$ codes are permutation equivalent to $\mathrm{GALA}_{12,3}(S_3\times\mathbb Z_2\times  \mathbb Z_{32})$.

As we show in Section~\ref{sec:bounds}, the monomial DPG construction that covers the co-designed Kasai codes~\cite{zhao2026ultrahighratequantumerrorcorrection} faces fundamental limits on the achievable parameters: reaching $d\geq L$ at small $n$, and reaching higher $d$ at all, requires further generalizations. We give these generalizations here and discuss how they overcome the barriers in Section~\ref{sec:bounds}.

\subsection{Polynomial and Loose-Active Orthogonality Generalizations}\label{ssec:generalized-gala}

Both constructions so far lift each proto-matrix entry to a single group element. Upgrading some lifts to a \emph{sum} of group elements --- that is, taking the lift in the group ring $\mathbb F_2[H_k\star C_m]$ rather than in the group $H_k\star C_m$ --- enlarges the search space at fixed $L$, and in particular by-passes the distance bound~\eqref{eq:j<=2} of Proposition~\ref{prop:distance-bound}, so that $L = 8$, $J = 2$ codes become available. Once the lifts are sums, a second relaxation follows for free: the term-wise condition $[F_{i,i'},G_{j,j'}] = 0$ for $(i,j)\in\Gamma$ is stronger than active orthogonality requires, since by~\eqref{eq:psi-r} what must vanish is the aggregate $\Psi_r$, and the individual commutators of a polynomial lift can cancel against each other. This gives the final generalization of loose-active orthogonality.

\begin{definition}[Polynomial GALA Code]\label{def:pg-code}
Given parameters $L, J \leq \tfrac{L}{2}, k, m \in \mathbb{Z}_+$ and an active set $\Gamma$ as in Definition~\ref{def:kasai}, let $G = H_k\star C_m$ be either the direct product $H_k\times C_m$ of Definition~\ref{def:direct-product} or the semi-direct product $H_k\ltimes C^k_m$ of Definition~\ref{def:semidirect-product}, with $H_k$ a non-abelian permutation group on $[k]$ and $C_m$ an abelian permutation group on $[m]$. The Polynomial GALA codes $\mathrm{GALA}_{L,J}(\mathbb F_2[G])$ are the family of two-block CSS codes on $n = Lkm$ qubits in which each lift entry is an element of the \emph{group ring} $\mathbb F_2[G]$, i.e.\ a sum of group elements rather than a single one. A polynomial GALA code is specified by choosing generators $\mathcal F,\mathcal G\subset\mathbb F_2[G]$,
\begin{align*}
    \mathcal{F} &= \left\{\sum_{i'\in I^{\mathcal F}_i}(F_{i,i'}; \textbf{f}_{i,i'})\right\}_{i\in[\frac L2]},\\
    \mathcal{G} &= \left\{\sum_{j'\in I^{\mathcal G}_j}(G_{j,j'}; \textbf{g}_{j,j'})\right\}_{j\in[\frac L2]},
\end{align*}
over finite index sets $I^{\mathcal F}_i, I^{\mathcal G}_j$, such that every pair of monomials $(F_{i,i'};\textbf f_{i,i'})$, $(G_{j,j'};\textbf g_{j,j'})$ with $(i,j)\in\Gamma$ commutes --- by Lemma~\ref{lemma:dpk-commutivity} in the direct-product case and by condition~\eqref{eq:spk-commutivity} of Lemma~\ref{lemma:spk-commutivity} in the semi-direct case --- while some pair with $(i,j)\notin\Gamma$ does not. The stabilizers $ H_X, {H}_Z$ are given by the first $J$ block rows of the parent matrices $\hat H_X = [F|G]$ and $\hat H_Z = [G^T|F^T]$, where
\begin{align*}
    F &= \sum_{i\in [\frac{L}{2}]} z_i\otimes \sum_{i'\in I^{\mathcal F}_i}M_\ltimes(F_{i,i'}, \textbf{f}_{i,i'}) ,\\
    G &= \sum_{i\in [\frac{L}{2}]} z_i\otimes \sum_{i'\in I^{\mathcal G}_i}M_\ltimes(G_{i,i'},\textbf{g}_{i,i'}),
\end{align*}
reading $M_\ltimes(h,\textbf 1\otimes c) = h\otimes c$ in the direct-product case, so that $F,G\in\mathbb F_2[\mathbb Z_{\frac L2}\times G]$. The monomial codes of Definitions~\ref{def:dpg-code} and~\ref{def:spg-code} are the case $|I^{\mathcal F}_i| = |I^{\mathcal G}_j| = 1$ for all $i,j$.
\end{definition}

\begin{definition}[Polynomial GALA Code with Loose-Active Orthogonality]\label{def:loose-orthogonality}
A polynomial GALA code with loose-active orthogonality is a code built as in Definition~\ref{def:pg-code} in which the term-wise commutation requirement on $\Gamma$ is replaced by the weaker aggregate requirement
\begin{equation}\label{eq:loose-active}
    \Psi_r = 0\quad\text{for every } r\in\{j-i\ :\ i,j\in[J]\},\qquad \Psi_{r'}\neq 0\ \text{ for some } r'\in[\tfrac L2],
\end{equation}
with $\Psi_r$ as in~\eqref{eq:psi-r}. Grouping the terms of $\Psi_r$ under the involution $u\mapsto r-u$, \eqref{eq:loose-active} is implied by the paired conditions
\[
    [\mathcal F_u,\mathcal G_{r-u}] + [\mathcal F_{r-u},\mathcal G_u] = 0\quad\text{for } u\in[\tfrac L2],
    \qquad\text{and}\qquad
    [\mathcal F_u,\mathcal G_u] = 0 \quad\text{whenever } 2u\equiv_{\frac L2} r,
\]
imposed for each active offset $r$, which may hold even when no individual commutator vanishes.
\end{definition}

\subsection{Codes with ZX-dualities}\label{ssec:zxdual}
Finally, we can impose even more symmetry during the code search, so as to improve parallelization and the availability of logical operations. The relevant symmetry here is a ZX-duality in the sense of Definition~\ref{def:zx-duality}, and for a two-block code it is available whenever the two lists of lifts are related pointwise by a re-indexing of the blocks.

\begin{proposition}[ZX-Dual GALA Codes]\label{def:dual-gala}
Let $\mathcal{Q}\in \mathrm{GALA}_{L,J}(H_k\star C_m)$ with $4\mid L$, and let $R = \{r_0,r_1, r_2, r_3 \}$ denote the following involutions on $[\frac{L}{2}]$:
\begin{equation}\label{eq:zx-sector-maps}
    r_0 = e:\; i\mapsto i,\qquad r_1:\; i\mapsto i + \tfrac L4,\qquad r_2:\; i\mapsto -i,\qquad r_3=r_1r_2:\; i\mapsto \tfrac L4 - i ,
\end{equation}
If the generators of $\mathcal Q$ satisfy
\begin{equation}\label{eq:zx-generator-condition}
    F_iG_{r(i)} = t \quad \forall i\in [\tfrac L2]
\end{equation}
for some {sector involution} $r\in R$, we say $\mathcal Q$ is ZX-dual. The ZX-duality is given by $\tau = \iota_r\otimes t$, where
\begin{equation}\label{eq:zx-iota}
\iota_{r_0} = r_3, \iota_{r_1} = r_2, \iota_{r_2} = r_0, \iota_{r_3} = r_1.
\end{equation}
\end{proposition}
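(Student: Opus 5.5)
The plan is to verify the defining relations of Definition~\ref{def:zx-duality} directly in the group-ring notation of Definition~\ref{def:dpg-code} (resp.\ Definition~\ref{def:spg-code}). I will index the $n=Lkm$ data qubits by triples $(s,u,x)$, with $s\in\{0,1\}$ the half (the $F$-block or the $G$-block), $u\in\Z_{L/2}$ the circulant position and $x\in[km]$ the group coordinate. Block row $i$ of $\hat H_X$ is then the pair $\big(\sum_u z_{u+i}\otimes F_u,\ \sum_u z_{u+i}\otimes G_u\big)$. Block row $i$ of $\hat H_Z$ is $\big(\sum_u z_{i-u}\otimes G_u^{T},\ \sum_u z_{i-u}\otimes F_u^{T}\big)$, using $[\hat H_Z]_{i,j}=G^T_{i-j}$. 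The candidate $\tau=\iota_r\otimes t$ will be read as a permutation that exchanges the two halves, re-indexes the circulant coordinate by an affine reflection or translation of $\Z_{L/2}$ (determined by $r$ through the table~\eqref{eq:zx-iota}), and multiplies the group coordinate by $t$.

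The first step is the purely group-theoretic substitution. Since each lift is a permutation, $F_u^T=F_u^{-1}$, and~\eqref{eq:zx-generator-condition} gives $F_u^{-1}t=G_{r(u)}$ and $tG_{r(u)}^{-1}=F_u$. Hence the right half of a $Z$-row, multiplied by $t$, is $\sum_u z_{i-u}\otimes G_{r(u)}$, and the left half becomes, after the substitution $v=r(u)$, a sum of terms in $F_u$. This is exactly the content of the $G$- and $F$-blocks of an $X$-row, up to the circulant labels. The second step is the index bookkeeping. For each of the four choices of $r$, I will write the composite map $u\mapsto i-r^{-1}(u)$ on circulant labels and show that it equals an $i$-dependent shift, followed by the fixed relabeling $\iota_r$, followed by a row re-indexing $i\mapsto i'$. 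Here $r_1$ and $r_3$ use $4\mid L$ so that $\tfrac L4$ is an integer. The third step checks that this row re-indexing carries the active block rows $[J]$ onto $[J]$, or at least that it carries the active row space onto itself. This is where the diagonal-band structure of $\Gamma_J$ and the reflection symmetry $r\mapsto -r$ of the active offsets $\{-(J-1),\dots,J-1\}$ in~\eqref{eq:activeset} enter. Together these steps give $H_X\tau=\rho H_Z$ for a row permutation $\rho$. Applying the same computation with the roles of $\mathcal F$ and $\mathcal G$ exchanged gives $H_Z\tau=\rho' H_X$. Finally, $\tau^2$ preserves $H_X$, so the involution statement follows from $r^2=e$ and $\iota_r^2=e$ together with the relation on $t$ that comes out of the computation. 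The semidirect case should follow verbatim, with $M_\ltimes$ in place of the Kronecker product, because only group multiplication was used.

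I expect the main obstacle to be the order conventions. With $M_gM_h=M_{h\circ g}$ (Definition~\ref{def:mtx-perm}), one half of the computation naturally produces $F_u^{-1}t$, while the other half produces $G_v^{-1}t=t^{-1}F_ut$, which is a conjugate rather than $F_u$ itself. My first attempt will be to let $\tau$ act by right multiplication by $t$ on one half and by left multiplication on the other. Equivalently, I would take $\tau$ to be $\iota_r\otimes(t\text{ on }s=0,\ t^{-1}\text{ on }s=1)$, which still has the same support structure. If that fails, I will show the statement as intended requires $t$ to commute with all lifts, or to be an involution. In the DPG case this amounts to a constraint on the $H_k$-component of $t$ only, since $C_m$ is abelian. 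The second, more mechanical risk is that the table~\eqref{eq:zx-iota} pairs each $r$ with the correct $\iota_r$. I will check this by tracking the four cases separately, with $r_0$ and $r_2$ as the base cases and $r_1$, $r_3$ obtained by composing with the half-period shift.
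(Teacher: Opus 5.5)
Your route (substitute \eqref{eq:zx-generator-condition}, then do the circulant bookkeeping case by case) is the same direct entrywise trace the paper uses. However, your step~3 cannot be completed for $r\in\{r_0,r_1\}$, because the table~\eqref{eq:zx-iota} is wrong in those two cases.

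\textbf{The reflection cases.} Do the bookkeeping for $r_0$ with $\tau=I_2\otimes\iota\otimes t$, assuming $t$ central and $t^2=e$.
\begin{itemize}
\item From $F_vG_v=t$ you get $G^T_v=tF_v$.
\item Row block $i'$ of $H_X\tau$ has $F_{u-i'}t$ in column block $\iota(u)$.
\item Row block $i$ of $\hat H_Z$ has $tF_{i-\iota(u)}$ there.
\item Matching entries gives $\iota(u)=c-u$ together with the row map $i=c-i'$, for a constant $c$.
\item This row map carries $[J]$ onto $[J]$ iff $c\equiv J-1 \pmod{\tfrac L2}$.
\end{itemize}
The table's $\iota_{r_0}=r_3$ has $c=\tfrac L4$. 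The table's $\iota_{r_1}=r_2$ induces the same row map $i'\mapsto\tfrac L4-i'$.

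At $J=\tfrac L4$, the rate-$\tfrac12$ regime of the $r_0$ and $r_1$ entries in Table~\ref{tab:frontier-selfdual-good}, this sends active row block $0$ of $H_X$ onto latent row block $J$ of $\hat H_Z$. The overlap of that block with row block $0$ of $H_X$ is $\Psi_J\neq0$. So the image is not even in $\ns{H_X}\supseteq\rs{H_Z}$. Your fallback, that the map ``preserves the active row space'', therefore fails too, and the symmetry of the active offsets cannot repair it. Concretely, $[[1056,532,12]]$ has $t=e$, $J=3$ and $\Psi_3\neq 0$, so the stated $\tau$ (with $u\mapsto 3-u$) is not a ZX-duality there.

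What is true is:
\begin{itemize}
\item $\iota_{r_0}:u\mapsto J-1-u$;
\item $\iota_{r_1}:u\mapsto J-1+\tfrac L4-u$;
\item these agree with the table only when $J=\tfrac L4+1$;
\item $\iota_{r_2}=e$ and $\iota_{r_3}=r_1$ are correct, with trivial row map.
\end{itemize}
The paper's trace has the same hole. It stops at $(H_Z)_{L/4-i,j}$ without checking that $\tfrac L4-i\in[J]$, and its circulant index convention changes sign partway through.

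\textbf{Two further corrections.}

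First, $\tau$ must not exchange the two halves. Your own step~1 shows that a $Z$-row carries (conjugates of) $F$'s on the left and $G$'s on the right, exactly as an $X$-row does. So $\tau=I_2\otimes\iota_r\otimes t$ acts identically on both halves; for $r_2$ it is simply $I_L\otimes t$. A swap would force you to match $F$'s against $G$'s.

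Second, \eqref{eq:zx-generator-condition} alone does not remove the conjugate $t^{-1}F_ut$ you found. You must add two hypotheses:
\begin{itemize}
\item $t$ commutes with the lifts. Lemma~\ref{lem:t-central} derives this only for reflection $r$, and only after assuming $\tau$ is already a duality.
\item $t^2=e$. This is needed for exact equality $H_X\tau=H_Z$ and for $\tau$ to be an involution. In the DPG case it constrains the $C_m$-part of $t$ as well, not only the $H_k$-part.
\end{itemize}
Your one-sided variant ($t$ on one half, $t^{-1}$ on the other) does not avoid this; it instead requires $F_utF_u^{-1}=t^{-1}$ for every $u$.

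With these hypotheses, which the tabulated instances satisfy with $t=e$, and with the corrected $\iota_{r_0}$ and $\iota_{r_1}$, your computation goes through, including for semidirect lifts.
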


\begin{proof}
The first two are \emph{translations} of $\mathbb{Z}_\frac L2$ and the last two are \emph{reflections}. It is easy to see that when $r =r_2$, $F_i = G_{-i}^T\cdot t$, and $H_X\cdot e\otimes t = H_Z$ exactly. It remains three cases. For $r = r_0$, $\iota = r_1r_2$. We can see this by tracing the matrix elements:
\begin{align*}
(H_X)_{i,j} 
&\xrightarrow{r_2} (H_X)_{i,-j}\\
&\xrightarrow{r_1} (H_X)_{i, -j+\frac L4} \\
&= F_{i-(-j+\frac L4)} \\
&=G^T_{i-(-j+\frac L4)}\cdot t \\
&=G^T_{j-(\frac L4-i)}\cdot t =(H_Z)_{\frac L4-i,j}. 
\end{align*}
The derivation for other cases are similar.
\end{proof}

A natural question one might ask is whether this restriction leads to girth or distance failures. It does, for the reflection sector involutions, and we return to it in Section~\ref{ssec:zx-bounds}.

\section{Bounds and Limitations}\label{sec:bounds}

Describing the Kasai construction inside a product group buys more than notational convenience: because the lift group factors, so do the objects the code parameters depend on, and the rate, distance and girth of a GALA code can be bounded directly in terms of the four design parameters $L$, $J$, $k$ and $m$. This section collects those bounds. They are limitations as much as guarantees --- read in the contrapositive, they say which regions of the parameter space cannot contain a good code, and it is precisely the boundaries they draw that motivate the polynomial and loose-active generalizations of Definitions~\ref{def:pg-code} and~\ref{def:loose-orthogonality}.

Two scoping conventions are used throughout. First, unless stated otherwise the results of this section assume a \emph{monomial} lift, in which every entry of $\mathcal F$ and $\mathcal G$ is a single group element rather than a sum; this is exactly the regime in which the stabilizer weight is $w = L$ and every data qubit meets exactly $J$ checks of each type. Both facts are used constantly below, and both fail for polynomial lifts. Second, the letter $k$ does double duty in this paper, as the number of logical qubits of an $[[n,k,d]]$ code and as the degree of the top permutation group $H_k$ acting on $[k]$, so that $n = Lkm$. We keep the established notation and let context disambiguate, noting only that the two never meet inside a displayed formula below: the rate bound~\eqref{eq:mono-rate} is written in terms of $n$ precisely so that it does not. Note also that $k$ is the \emph{degree} of $H_k$ and not its order, which for the tops used here is larger --- $|S_3| = 6$ at $k=3$, $|S_4| = 24$ at $k=4$.

\subsection{Bounds on Monomial Direct Product Constructions}\label{ssec:mono-bounds}

We begin with the bounds that follow from each generator being a permutation matrix.

\begin{proposition}[Rate Bound]\label{prop:rate-bound}
Let $\mathcal Q\in \mathrm{GALA}_{L,J}(H_k\star C_m)$ be an $[[n,k,d]]$ monomial GALA code. Then
\begin{equation}\label{eq:mono-rate}
    \frac{k}{n} \geq 1-\frac{2J}{L} + \frac{2(J-1)}{n}.
\end{equation}
In particular the rate exceeds $1-2J/L$ strictly whenever $J\geq2$, and is at least $\tfrac12$ whenever $J\leq \tfrac L4$.
\end{proposition}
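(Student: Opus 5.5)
The plan is to count ranks using \eqref{eq:css-k}, $k = n - \rank(H_X) - \rank(H_Z)$. I will show that each stabilizer matrix has rank at most $Jkm - (J-1)$. Since $n = Lkm$, each of $H_X$ and $H_Z$ has exactly $J$ block rows of $km$ rows each, so $Jkm = Jn/L$ rows in total. Given the two rank bounds,
\[
k \;\ge\; n - 2\Big(\tfrac{Jn}{L} - (J-1)\Big) \;=\; n\Big(1-\tfrac{2J}{L}\Big) + 2(J-1),
\]
and dividing by $n$ gives \eqref{eq:mono-rate}. The two corollaries then follow at once: the correction term $2(J-1)/n$ is positive when $J\ge 2$, and $1-2J/L\ge \tfrac12$ when $J\le L/4$.

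The key step is to exhibit $J-1$ independent linear dependencies among the rows of $H_X$, and likewise of $H_Z$. Here the monomial hypothesis does the work. In Definition~\ref{def:dpg-code} (and Definition~\ref{def:spg-code}), block row $a$ of $\hat H_X = [F\,|\,G]$ consists of $L$ blocks, namely $z$-shifted copies of the $F_i\otimes f_i$ and $G_i\otimes g_i$ (or of $M_\ltimes(\cdot)$). Each of these blocks is a single $km\times km$ permutation matrix. Consequently, within block row $a$, every column of $H_X$ contains exactly one $1$. Summing the $km$ rows of block row $a$ over $\mathbb F_2$ therefore gives the all-ones vector $\mathbf 1^{1\times n}$, and this holds for every $a\in[J]$. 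The same argument applies to $\hat H_Z = [G^T\,|\,F^T]$, because transposes of permutation matrices are permutation matrices.

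These facts produce the relations. Let $u_a\in\mathbb F_2^{Jkm}$ be the indicator vector of the rows in block row $a$. Then $(u_0 + u_a)H_X = \mathbf 1 + \mathbf 1 = 0$ for $a = 1,\dots,J-1$. The vectors $u_0+u_a$ are linearly independent, because the $u_a$ have pairwise disjoint supports; indeed, $u_a$ is the only one of them supported on block row $a$. So the left kernel of $H_X$ has dimension at least $J-1$, which gives $\rank(H_X)\le Jkm-(J-1)$, and the same holds for $H_Z$.

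I do not expect a serious obstacle. The one point needing care is that the column-sum property really needs every block entry to be exactly one permutation matrix. For a polynomial lift, a block with an even number of terms makes the row sum vanish mod $2$ on those columns, so the relations change form; this matches the paper's remark that these relations ``die mod 2'' for polynomial lifts. It is worth stating explicitly that the argument uses only monomiality, not the direct- or semidirect-product structure, and that only the first $J$ block rows are needed.
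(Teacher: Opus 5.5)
Your proof is correct and takes the same route as the paper. The paper's proof is the one-line observation that, for a monomial lift, every block row sums to $\mathbf 1^{1\times n}$, which gives $J-1$ rank deficiencies in each of $H_X$ and $H_Z$; you make that count explicit through the independent left-kernel vectors $u_0+u_a$, and you rightly note that it uses only monomiality, so it also covers the semidirect case, where $M_\ltimes(\cdot)$ is still a permutation matrix.
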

\begin{proof}
    All rows sum up to $\textbf{1}^{1\times n}$, contributing to the $J-1$ rank deficiencies.     
\end{proof}

\begin{proposition}(Distance Bounds from J,L)\label{prop:distance-bound}
Let $\mathcal{Q}\in \mathrm{GALA}_{L,J}(H_k\star C_m)$ be an $[[n,k,d]]$ GALA code. Then,

\begin{align}
     J\leq 2        &\implies d\leq \frac12\gth{\mathcal{Q}}\label{eq:j<=2}, \\
     J>\frac{L}{4}  &\implies d\leq L,\label{eq:j>=L/4}\\
     L<12&\implies d\leq L.\label{eq:j<12}
\end{align}

\end{proposition}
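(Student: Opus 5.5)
I prove the three implications separately, always in the monomial setting of this section. There every block of $\hat H_X$ and $\hat H_Z$ is a permutation matrix, each stabilizer has weight $w=L$, and each data qubit lies in exactly $J$ checks of each type.

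\emph{Case $J\le 2$.} When $J=1$ a data qubit meets a single $Z$-check, so a single-qubit $X$ flips one syndrome bit and $d$ is tiny; this case is trivial. The substantive case is $J=2$. Every column of $H_Z$ then has weight exactly $2$, so $H_Z$ is the incidence matrix of a graph $\Gamma_Z$ whose vertices are the $Z$-checks and whose edges are the qubits.

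A vector $v$ lies in $\ns{H_Z}$ exactly when its support is an even-degree edge set, i.e. a cycle space element of $\Gamma_Z$. A shortest cycle of $\Gamma_Z$ with $\ell$ edges becomes a cycle of length $2\ell$ in $\mathcal T(H_Z)$, so $\ell\le \tfrac12\gth{\mathcal Q}$ when the girth is realized in this sector. I take the girth-realizing sector to be $Z$, exchanging roles via the two-block symmetry otherwise.

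It remains to show that some short cycle is not in $\rs{H_X}$. The plan is a counting argument: $\dim\ns{H_Z}-\rank H_X = k>0$, and the cycle space is spanned by cycles of length at most the girth only in special cases. I therefore expect to argue directly instead. For a circulant lift, the $\mathbb Z_{L/2}\times G$ translates of a minimal cycle span a space whose dimension exceeds $\rank H_X$, so some translate is nontrivial. Translates preserve weight, so $d\le\ell\le\tfrac12\gth{\mathcal Q}$. This is the step I expect to be the main obstacle.

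\emph{Case $J>L/4$.} I use Fact~\ref{fact:augmentation}. The idea is that once $J>L/4$, the latent rows either commute with $H_Z$ or are spanned. Concretely, $\Psi_r$ must vanish for all $|r|<J$, and these offsets $r\in\{-(J-1),\dots,J-1\}$ cover $2J-1\ge L/2$ residues mod $L/2$, i.e. all of $\mathbb Z_{L/2}$. Hence $\hat H_X\hat H_Z^T=0$, and the full parent pair is a CSS code; the active code is its augmentation obtained by deleting the latent rows. Rank considerations force a deleted row outside the retained row space whenever $k$ exceeds the parent's logical count. Otherwise the retained rows span every latent row, and a latent $X$-row is orthogonal to $H_Z$ anyway. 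In either case a weight-$L$ row is a nontrivial logical, and Fact~\ref{fact:augmentation} gives $d\le L$.

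\emph{Case $L<12$.} I reduce to the previous two cases. If $J\le2$, the cycle bound gives $d\le\tfrac12\gth{\mathcal Q}$. A minimal cycle in the Tanner graph of a column-weight-$2$, row-weight-$L/2$ circulant lift has length at most $2L$ unless $n$ is exponentially large, by the Moore bound $n\ge L(L-1)^{d/2-1}$. If $J\ge3$, then $L<12$ forces $J\ge3>L/4$, and the second case applies. The final statement combines these, noting that at small $n$ the Moore-type count caps $\ell\le L$.
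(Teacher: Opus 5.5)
Your decomposition matches the paper's. For $J\le2$ you read cycles of a column-weight-$2$ check matrix as kernel vectors. For $J>L/4$ you note that the $2J-1\ge L/2$ active offsets exhaust $\Z_{L/2}$, so the parent is fully orthogonal, and then apply Fact~\ref{fact:augmentation}. For $L<12$ you split into $J\le2$ versus $J\ge3>L/4$. However, both places where you try to go beyond the paper's sketch contain a step that fails.

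\textbf{The ``otherwise'' branch for $J>L/4$.} This branch produces no logical. If the active rows span every latent row (in particular when $J=L/2$ and there are no latent rows), a latent row lies in $\rs{H_X}$ and is a stabilizer. Fact~\ref{fact:augmentation} then does not apply, and the bound can genuinely fail. Here is a concrete case. Take $\ell\equiv2\pmod 4$ and $m$ even. The element $s=x^{\ell/2}$ splits $\Z_\ell\times\Z_m=\langle s\rangle\times K$, where $K$ is the kernel of $(u,v)\mapsto u+v \bmod 2$. Writing $x=s\,(xs^{-1})$ and $y=s\,(ys^{-1})$ exhibits the toric code $A=1+x$, $B=1+y$ as the monomial GALA code with trivial top, bottom $K$, $L=4$, $J=2=L/2$, $\mathcal F=(e,xs^{-1})$ and $\mathcal G=(e,ys^{-1})$. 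Its distance is $\min(\ell,m)>L$. The paper gets $d\le L$ only through the standing independence proviso stated after Fact~\ref{fact:augmentation}. You need to take that proviso as a hypothesis and keep only your first branch.

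\textbf{The nontriviality step for $J=2$.} This step does not go through, for two reasons.
\begin{enumerate}
\item Your translates need not be logicals at all. The $\Z_{L/2}$ block shift sends active block row $i$ to row $i+1$, so for $J<L/2$ it is not an automorphism of the active code. Likewise, multiplication by lift-group elements outside the centralizer of $\mathcal F\cup\mathcal G$ does not commute with the generators. So translates of a cycle of your $\Gamma_Z$ need not stay in $\ns{H_Z}$.
\item Nothing forces the span of the translates to exceed $\rank H_X$, and in the toric example above with $\ell,m\ge6$ the claim is false. Both check graphs are torus grids whose only $4$-cycles are the supports of the opposite-type checks. Hence $\gth{\mathcal Q}=8$, every shortest cycle and all its translates lie in $\rs{H_X}$, and $d=\min(\ell,m)>4=\tfrac12\gth{\mathcal Q}$.
\end{enumerate}
The paper's proof simply asserts that every $t$-cycle is a weight-$t/2$ logical. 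You were right that nontriviality needs an argument. But any valid argument must use a hypothesis this example violates, such as latent rows independent of the active ones, and your dimension count uses none.

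Two smaller points. Your $L<12$, $J\le2$ branch gives $d\le L$ only under the Moore-type condition on $n$, which is the caveat the main text attaches; it does not give it unconditionally. For $J=1$, the right reason is that each Tanner graph is a forest, so $\gth{\mathcal Q}=\infty$. That a single-qubit error flips one syndrome bit says nothing about $d$.
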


\begin{proof}
    When $J\leq 2$, each data qubit $q$ belongs to exactly two stabilizers $c, c'$, and any Tanner graph cycles containing $q$ must contain both $c,c'$. It follows that every $t$-cycle is a wt-$\frac t2$ logical. When $J> \frac L4$, the parent matrix $\hat{H}_X, \hat{H}_Z$ satisfy the full CSS orthogonality $\implies d\leq \wt{\hat H} = L$. Finally, the first two conditions implies the third. 
\end{proof}

The product structure naturally preserves many properties of its factors, giving us another class of bounds. We can analyze these bounds by first defining the quotient codes:

\begin{definition}[Quotient Codes]\label{def:quotient}
Suppose $\mathcal{Q}\in \mathrm{GALA}_{L,J}(H_k\times C_m)$ is a monomial DPG code with generators $\mathcal{F} = \{(F_i, f_i)\}_{i\in[\frac L2]}$ and $\mathcal{G} = \{(G_i,g_i)\}_{i\in [\frac L2]}$, so that the lifted blocks are the Kronecker products $F_i\otimes f_i$ and $G_i\otimes g_i$ of Definition~\ref{def:direct-product}. Discarding one tensor factor of every generator leaves a smaller GALA code over the surviving factor: the \emph{top quotient code} $\mathcal{Q}^H = \mathrm{GALA}_{L,J}(\{F_i\}, \{G_i\})$ on $Lk$ qubits and the \emph{bottom quotient code} $\mathcal{Q}_C = \mathrm{GALA}_{L,J}(\{f_i\}, \{g_i\})$ on $Lm$ qubits, both with the same $L$, $J$ and the same block-circulant row order as $\mathcal Q$. If moreover $C_m =\mathbb{Z}_{p_1}\times \cdots \times \mathbb{Z}_{p_q}$ with $\gcd(p_a, p_b) = 1$ for $a\neq b$, then each bottom generator factors as $f_i = f^{(1)}_i\otimes\cdots\otimes f^{(q)}_i$ and $g_i = g^{(1)}_i\otimes\cdots\otimes g^{(q)}_i$ with $f^{(a)}_i,g^{(a)}_i\in\mathbb Z_{p_a}$, and the bottom quotient factors further into the $q$ codes $\mathcal{Q}_{C_a} = \mathrm{GALA}_{L,J}(\{f^{(a)}_i\}, \{g^{(a)}_i\})$ on $Lp_a$ qubits each.
\end{definition}

Each of these quotients is obtained by summing over the orbits of a normal subgroup of the lift group. Concretely, let $\pr_{C} = I_{k}\otimes \mathbf{1}^{1\times m}$ and $\pr^{H} = \mathbf{1}^{1\times k}\otimes I_m$ be the two orbit-sum matrices of Section~\ref{ssec:notations}, so that $\pr_C$ collapses the $C_m$ coordinate and $\pr^H$ collapses the $H_k$ coordinate: $\pr_C(h,c) = h$ and $\pr^H(h,c) = c$ for all $(h,c)\in H_k\times C_m$, the latter because in a direct product the $C_m$ component is the same in every block. Applying $\pr_C$ blockwise to $\hat H_X$ therefore returns the parent matrix of $\mathcal Q^H$, and $\pr^H$ that of $\mathcal Q_C$.

Transporting a logical operator the other way, from a quotient back up to $\mathcal Q$, is the operation that converts these factorizations into distance bounds.

\begin{lemma}[Logical Inflation]\label{lem:inflation}
Let $\mathcal{Q}\in \mathrm{GALA}_{L,J}(H_k\times C_m)$ be a monomial DPG code and let $\mathcal Q_N$ be the quotient obtained by collapsing a tensor factor of size $\mu$, so that the qubits of $\mathcal Q$ are indexed by $[n_N]\times[\mu]$ with $[n_N]$ indexing the qubits of $\mathcal Q_N$. Write $\Pi\in\mathbb F_2^{n\times n_N}$ for the corresponding orbit-sum matrix, $\Pi_{(a,x),b} = \delta_{ab}$, and define the \emph{inflation} of a row vector $\ell\in\mathbb F_2^{1\times n_N}$ to be $\ell' := \ell\,\Pi^T$, i.e.\ $\ell'_{(a,x)} = \ell_a$. Then:
\begin{enumerate}
    \item $\wt{\ell'} = \mu\cdot\wt{\ell}$;
    \item $\ell\in\ns{H_Z(\mathcal Q_N)} \iff \ell'\in\ns{H_Z(\mathcal Q)}$,
    \item $\ell\in\rs{H_X(\mathcal Q_N)} \implies \ell'\in\rs{H_X(\mathcal Q)}$, 
\end{enumerate}
and likewise with $X$ and $Z$ exchanged. 
\end{lemma}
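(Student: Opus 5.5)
The whole argument rests on one intertwining identity for the blockwise orbit-sum map. Write the qubits of $\mathcal Q$ as $[n_N]\times[\mu]$, with blocks indexed by the $L$ block columns. Every parent entry of $\mathcal Q$ is a Kronecker product $z\otimes h\otimes c$ (for the bottom quotient, $\mu=k$ collapses $h$; for the top quotient, $\mu=m$ collapses $c$). After reordering the tensor factors so that the collapsed factor is last, this entry reads $M_N\otimes M_\mu$, where $M_N$ is the corresponding entry of the quotient parent matrix and $M_\mu$ is a $\mu\times\mu$ permutation matrix.

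The key identity I would establish first is
\begin{equation*}
\Pi^T (M_N\otimes M_\mu)=(M_N\otimes M_\mu)\,\Pi^T\text{-compatible form } M_N\,\Pi^T ,
\end{equation*}
or, more concretely,
\begin{equation*}
(\ell\otimes \mathbf 1^{1\times\mu})(M_N\otimes M_\mu)=(\ell M_N)\otimes(\mathbf 1 M_\mu)=(\ell M_N)\otimes\mathbf 1 .
\end{equation*}
The last step holds because a permutation matrix fixes the all-ones vector; this is exactly the intertwining property $\pr^H M_\times(h,c)=M_C(c)\pr^H$ recorded in Section~\ref{ssec:notations}. Transposing gives the same identity for the transposed entries of $H_Z$. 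Summing blockwise over the entries of the first $J$ block rows, and noting that $\mathcal Q_N$ uses the same block-circulant rows, I obtain
\begin{equation*}
\ell'\,H_Z(\mathcal Q)^T=(\ell\,H_Z(\mathcal Q_N)^T)\otimes\mathbf 1^{1\times\mu}
\end{equation*}
on each check block, and analogously for $H_X$.

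Item 1 is then immediate: $\ell'=\ell\otimes\mathbf 1^{1\times \mu}$ has each support point repeated $\mu$ times.

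For item 2, the displayed identity says the syndrome of $\ell'$ is the inflation of the syndrome of $\ell$. Inflation $s\mapsto s\otimes\mathbf 1$ is injective, so the syndrome of $\ell'$ vanishes if and only if the syndrome of $\ell$ does. This gives both directions of the equivalence.

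For item 3, suppose $\ell=yH_X(\mathcal Q_N)$. I would take $y'=y\otimes \mathbf 1^{1\times\mu}$ on the check index set $[J]\times[\text{quotient checks}]\times[\mu]$. By the same Kronecker identity, now applied with $\mathbf 1 M_\mu=\mathbf 1$ acting from the check side, $y'H_X(\mathcal Q)=(yH_X(\mathcal Q_N))\otimes\mathbf 1=\ell'$. The $X\leftrightarrow Z$ statement follows by the identical argument, since $H_Z$ is built from the transposes $G^T,F^T$, which are again Kronecker products of permutations.

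The main obstacle is bookkeeping rather than mathematics. The qubit indices must be ordered as (block, surviving factor, collapsed factor) so that $\Pi$ really is $I\otimes\mathbf 1^{T}$. When the bottom factor is collapsed, this means commuting the $C_m$ and $H_k$ Kronecker factors by a fixed perfect-shuffle permutation, applied consistently to the rows and columns of both $H_X$ and $H_Z$. Since that shuffle is the same for every entry, it does not affect the identity. I would also note explicitly that the direct-product hypothesis is what guarantees the collapsed component is a single permutation $M_\mu$ in every block; in the semidirect case the bottom differs per block and item 3 could fail. Item 3 is only an implication because $\rs{H_X(\mathcal Q)}$ may contain non-inflated vectors.
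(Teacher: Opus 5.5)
Your proof is correct and is essentially the paper's argument: the paper writes out the index sums $\sum_x[B_{a-r}]_{x',x}=1$ (row sums, giving the replicated syndrome in item 2) and $\sum_{x'}[B_{a-r}]_{x',x}=1$ (column sums, for the inflated combination $u'_{(r,i',x')}=u_{(r,i')}$ in item 3), which is exactly your mixed-product identity $(\ell\otimes\mathbf 1)(M_N\otimes M_\mu)=(\ell M_N)\otimes\mathbf 1$ applied with $M_\mu^T$ and $M_\mu$ respectively. Your first displayed identity is garbled; the intended blockwise statement is $\Pi^T(M_N\otimes M_\mu)=M_N\Pi^T$. The concrete version that follows it is the one you actually use, and it is right.
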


\begin{proof}
Index the qubits of $\mathcal Q$ by triples $(a,i,x)$ with $a\in[L]$ the block column, $i$ the surviving coordinate and $x\in[\mu]$ the collapsed one; by Definition~\ref{def:direct-product} the lifted block in block position $(r,a)$ of $\hat H_X$ is a Kronecker product $A_{a-r}\otimes B_{a-r}$ with $A$ the surviving and $B$ the collapsed factor, both permutation matrices because the lift is monomial. Claim (1) is immediate, every qubit of $\mathcal Q_N$ being replaced by exactly $\mu$ qubits of $\mathcal Q$.

For (2), compute the syndrome of $\ell'$ at the check indexed by $(r,i',x')$:
\[
    \big(H_X\ell'^T\big)_{(r,i',x')}
    = \sum_{a,i,x}[A_{a-r}]_{i',i}\,[B_{a-r}]_{x',x}\,\ell_{(a,i)}
    = \sum_{a,i}[A_{a-r}]_{i',i}\,\ell_{(a,i)}
    = \big(H_X(\mathcal Q_N)\,\ell^T\big)_{(r,i')} ,
\]
where the middle step used $\sum_x [B_{a-r}]_{x',x} = 1$, valid because $B_{a-r}$ is a permutation matrix and hence has exactly one $1$ in each row. So the syndrome of $\ell'$ is the syndrome of $\ell$, replicated $\mu$ times, and one vanishes iff the other does. The same computation applies verbatim to $H_Z$, whose blocks $G^T_{r-a} = A^T_{r-a}\otimes B^T_{r-a}$ have the same Kronecker form.

For (3), suppose $\ell = uH_X(\mathcal Q_N)$ and set $u'_{(r,i',x')} := u_{(r,i')}$. Then
\[
    \big(u'H_X\big)_{(a,i,x)} = \sum_{r,i',x'}u_{(r,i')}[A_{a-r}]_{i',i}[B_{a-r}]_{x',x}
    = \sum_{r,i'}u_{(r,i')}[A_{a-r}]_{i',i} = \ell_{(a,i)} = \ell'_{(a,i,x)},
\]
using $\sum_{x'}[B_{a-r}]_{x',x} = 1$, this time down a column. Hence $\ell' = u'H_X\in\rs{H_X}$. Together (2) and (3) say that inflation carries $\ns{H_Z(\mathcal Q_N)}$ into $\ns{H_Z(\mathcal Q)}$ and $\rs{H_X(\mathcal Q_N)}$ into $\rs{H_X(\mathcal Q)}$, so it descends to the quotients and defines the asserted map on $\mathcal L_X = \ns{H_Z}/\rs{H_X}$.
\end{proof}

\begin{corollary}(Distance Inheritance from Quotients)\label{cor:quotient-distance}
Let $\mathcal{Q}\in \mathrm{GALA}_{L,J}(H_k\times C_m)$ be an $[[n,k,d]]$ direct-product GALA code. Let $\mathcal{Q}^H$, $\mathcal{Q}_C$ be the top and bottom quotient codes. Then,

\begin{equation}
    d_\mathcal{Q}\leq \min[m\cdot d_{\mathcal{Q}^H},\, k\cdot d_{\mathcal{Q}_C}].
\end{equation}

In general, whenever $C_m =  \mathbb  Z_p\times \mathbb Z_q$, we also have that
\begin{equation}
    d_\mathcal{Q}\leq k\cdot\min[ q\cdot  d_{\mathcal{Q}_{C_p}}, \, p\cdot d_{\mathcal{Q}_{C_q}}].
\end{equation}
\end{corollary}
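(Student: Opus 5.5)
The plan is to derive both inequalities directly from Lemma~\ref{lem:inflation}, applied once for each collapsed factor. Take a minimum-weight nontrivial logical of the quotient, inflate it to $\mathcal Q$, and check that it stays nontrivial. Without loss of generality we treat the $X$ sector and obtain the $Z$ sector by exchanging roles, as the Lemma allows.

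\textbf{Step 1: nontriviality is preserved.} Take $\ell\in\ns{H_Z(\mathcal Q_N)}\setminus\rs{H_X(\mathcal Q_N)}$ with $\wt{\ell}=d_X(\mathcal Q_N)$. By part (2) of Lemma~\ref{lem:inflation}, $\ell'\in\ns{H_Z(\mathcal Q)}$.

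Suppose for contradiction that $\ell'\in\rs{H_X(\mathcal Q)}$, say $\ell'=u'H_X$. Multiply on the right by the orbit-sum matrix $\Pi$. By the same column-sum computation as in part (3), now read backwards, $H_X\Pi=\Pi_{\rm chk}H_X(\mathcal Q_N)$, where $\Pi_{\rm chk}$ is the orbit-sum on check indices. This uses that each $B_{a-r}$ has exactly one $1$ per row.

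Hence $\ell'\Pi=(u'\Pi_{\rm chk})H_X(\mathcal Q_N)\in\rs{H_X(\mathcal Q_N)}$. On the other hand, $\ell'\Pi=\ell\,\Pi^T\Pi=\mu\,\ell$.

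This is the main obstacle. When $\mu$ is odd we get $\ell\in\rs{H_X(\mathcal Q_N)}$, a contradiction. When $\mu$ is even, $\ell'\Pi=0$ and the argument gives nothing.

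For even $\mu$ I would instead pair $\ell'$ against a dual logical. Pick $z\in\ns{H_X(\mathcal Q_N)}$ with $\langle \ell,z\rangle=1$. Lift $z$ not by inflation but by \emph{placing it on a single fiber}, $z''_{(a,i,x)}=z_{(a,i)}\delta_{x,x_0}$. Then I need to check whether $z''\in\ns{H_X(\mathcal Q)}$. This holds only if the bottom permutations fix the fiber, which fails in general.

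So the fallback is to use the $C_m$-orbit sum. I would look for a representative of a $Z$-logical of $\mathcal Q$ whose restriction-sum $z''\Pi$ equals $z$ with overlap $\langle\ell',z''\rangle=\langle\ell,z''\Pi\rangle$ odd. I expect this parity issue to be the genuinely delicate point. If it cannot be resolved, the statement should be read as holding with $d$ replaced by the weight of the inflated operator whenever that operator is nontrivial, which is always the case for odd $\mu$.

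\textbf{Step 2: weights.} Part (1) gives $\wt{\ell'}=\mu\,d_X(\mathcal Q_N)$. With $\mu=m$ for $\mathcal Q^H$ and $\mu=k$ for $\mathcal Q_C$, taking the minimum over $X,Z$ and over the two quotients yields $d_{\mathcal Q}\le\min[m\,d_{\mathcal Q^H},k\,d_{\mathcal Q_C}]$.

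\textbf{Step 3: the refined bound for $C_m=\mathbb Z_p\times\mathbb Z_q$.} Apply the same argument twice. Collapse $H_k\times\mathbb Z_q$ to reach $\mathcal Q_{C_p}$, so that $\mu=kq$. This is legitimate because the generators remain Kronecker products of permutation matrices $(F_i\otimes f^{(q)}_i)\otimes f^{(p)}_i$, so Lemma~\ref{lem:inflation} applies verbatim. Symmetrically, collapse to $\mathcal Q_{C_q}$ with $\mu=kp$. Taking the minimum gives the second inequality.
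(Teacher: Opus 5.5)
Your route is the paper's: inflate a minimum-weight quotient logical with Lemma~\ref{lem:inflation} and read off the weight $\mu\, d_{\mathcal Q_N}$. Your odd-$\mu$ argument is correct, and it supplies a step the paper's one-line proof skips. Part~(3) of the Lemma only says that trivial classes inflate to trivial ones, which is the wrong direction for a distance bound. Your identities $H_X\Pi=\Pi_{\mathrm{chk}}H_X(\mathcal Q_N)$ and $\ell'\Pi=\mu\,\ell$ show that inflation is injective on logical classes when $\mu$ is odd. The genuine gap is that for even $\mu$ you have no argument at all. This case is not marginal: it covers the top quotient whenever $m$ is even (e.g.\ $m=32$ for the co-designed codes), the bottom quotient of an even-degree top such as $S_4$, and the refined bound whenever $kq$ or $kp$ is even. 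So as written the proposal does not prove the corollary.

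Your fallbacks do not close this gap. The pairing idea only restates the problem: since $\rs{H_X(\mathcal Q)}$ is the orthogonal complement of $\ns{H_X(\mathcal Q)}$ and $\langle\ell',z''\rangle=\langle\ell,z''\Pi\rangle$, a $Z$-logical $z''$ of $\mathcal Q$ with $\langle \ell, z''\Pi\rangle=1$ exists exactly when $\ell'\notin\rs{H_X(\mathcal Q)}$.

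The obstruction is also real. Take a trivial top ($k=1$, as in the paper's case study), $C_2=\{1,x\}$, $L=12$, $J=3$, $F_0=x$, and every other $F_i$, $G_i$ equal to $1$. The top quotient has all-ones checks, so $\chi_0+\chi_1$ is a nontrivial $X$-logical of minimum weight $2$. Yet rows $(0,0)$ and $(1,0)$ of $H_X(\mathcal Q)$ (block row, $C_2$ coordinate) sum to the four qubits $\{0,1\}\times\{0,1\}$, which is exactly its inflation.

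More generally, for $m=2$ write $H_X=\bar H_X+(1+x)A_1$ over $\mathbb F_2[C_2]$ with $\bar H_X=H_X(\mathcal Q^H)$. Then the inflation of $\ell$ lies in $\rs{H_X(\mathcal Q)}$ iff $\ell\in\rs{\bar H_X}+\{u_0A_1: u_0\bar H_X=0\}$. In other words, relations among the quotient checks produce classes that inflation kills.

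In the example above, other weight-$2$ classes such as $\chi_2+\chi_3$ survive, so the inequality itself is not refuted. A proof for even $\mu$ would still need a new argument that some minimum-weight class avoids this kernel, and neither your proposal nor the paper provides one. The defensible statement is the corollary with the collapsed factor of odd size. That is exactly how the paper uses it in its code search (``whenever $\deg H_k$ is odd'').
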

\begin{proof}
    Follows from  Lemma~\ref{lem:inflation} by setting $N = H_k, C_m, C_p, C_q$.
\end{proof}

\begin{corollary}(Distance Bounds for QuEra Kasai)
    Let $\mathcal{Q}\in \mathrm{GALA}_{L,J}(H_k\times C_m)$ be an $[[n,k,d]]$ direct-product GALA code. We have 

    \begin{equation}
        d\leq kL.
    \end{equation}
    Furthermore, if $H_k = S_3$, we have that 
    \begin{equation}
        d\leq \min [3L,\, 2m ].
    \end{equation}
\end{corollary}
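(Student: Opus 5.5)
The plan is to obtain both inequalities from Corollary~\ref{cor:quotient-distance}, $d_\mathcal{Q}\le\min[m\, d_{\mathcal Q^H},\,k\, d_{\mathcal Q_C}]$. This reduces the task to bounding the distance of one quotient code by a small explicit quantity: $d_{\mathcal Q_C}\le L$ for the general claim, and $d_{\mathcal Q^H}\le 2$ for $H_k=S_3$. In each case I then inflate via Lemma~\ref{lem:inflation} to get a logical of $\mathcal Q$ of the stated weight. Note that $3L$ is simply $kL$ at $k=3$, so only $d\le 2m$ is new in the second claim.

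\textbf{Bound $d\le kL$.} The bottom quotient $\mathcal Q_C=\mathrm{GALA}_{L,J}(\{f_i\},\{g_i\})$ is lifted over the abelian group $C_m$. Hence every pair $f_i,g_j$ commutes, every $\Psi_r$ in \eqref{eq:psi-r} vanishes, and the \emph{full} parent matrices $\hat H_X(\mathcal Q_C),\hat H_Z(\mathcal Q_C)$ are already CSS-orthogonal. So $\mathcal Q_C$ is an augmentation of a code with monomial rows of weight exactly $L$. If some latent row $x$ is not in the span of the active rows, Fact~\ref{fact:augmentation} gives $d_{\mathcal Q_C}\le\wt{x}=L$. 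Inflating by $\mu=k$ with Lemma~\ref{lem:inflation} then gives a logical of $\mathcal Q$ of weight $kL$, provided the inflated vector is not a stabilizer.

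\textbf{Main obstacle.} Lemma~\ref{lem:inflation} only shows that inflation preserves kernel membership and sends row space into row space, not that it preserves non-triviality. For this I plan to use the projection $\pr_C$ (or $\pr^H$) in the other direction. By the intertwining identities in the notations, this projection maps $\rs{H_X(\mathcal Q)}$ into $\rs{H_X(\mathcal Q_N)}$. Composed with inflation, it is multiplication by $\mu$. When $\mu$ is odd, for instance $k=3$, this is the identity mod $2$, so a nontrivial class stays nontrivial. When $\mu$ is even I would instead pair the inflated $X$-logical against an inflated $Z$-logical with odd overlap, using $\ell'\cdot u'^T=\mu\,\ell\cdot u^T$. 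This again needs care, and is the point I expect to be genuinely delicate.

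The degenerate case, where every latent row lies in the active span, also has to be handled. There I would use the rate bound: Proposition~\ref{prop:rate-bound} gives $\mathcal Q_C$ positive dimension. I would then exhibit a weight-$\le L$ logical directly from the two-block commuting structure, for example a row of $[G^T|F^T]$ restricted suitably.

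\textbf{Bound $d\le 2m$ for $S_3$.} Here I aim for $d_{\mathcal Q^H}\le 2$, then inflate by $\mu=m$. The top quotient lives on $3L$ qubits, and by Lemma~\ref{lemma:s3-ansatz} its generators are almost all $e$; only two entries of $\mathcal F^H$ and two of $\mathcal G^H$ are nontrivial. I would go through the three ansatzes \eqref{eq:ansatz-a}--\eqref{eq:ansatz-c}. In each, I look for a weight-$2$ vector supported on two coordinates $(a,i),(a',i')$, chosen so that every check row meets it evenly. The natural candidate pairs two qubits moved identically by the only non-identity generators touching their columns. Because the $e$ entries act as identity, this amounts to a finite syndrome check. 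Non-triviality is then checked against the small stabilizer span of $\mathcal Q^H$. If a weight-$2$ logical fails to exist for some ansatz, the fallback is to use the sign homomorphism $S_3\to\mathbb Z_2$ to build a weight-$2m$ representative directly in $\mathcal Q$.
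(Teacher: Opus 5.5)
Your route is the paper's. Both bounds come from Corollary~\ref{cor:quotient-distance}: $d_{\mathcal Q_C}\le L$ because the abelian bottom quotient is fully orthogonal, and $d_{\mathcal Q^H}\le 2$ from two all-identity block columns of the top code (the paper's ``two identical columns of identities''). As in the paper, that second step rests on the $L=12$, $J=3$ ansatzes of Lemma~\ref{lemma:s3-ansatz}.

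You also correctly spot what the paper's one-line proof skips: Lemma~\ref{lem:inflation} makes inflation well defined on logical classes, but not injective. Your odd-$\mu$ repair is sound, since $\Pi$ maps $\rs{H_X(\mathcal Q)}$ into $\rs{H_X(\mathcal Q_N)}$ and $\ell'\Pi=\mu\ell$. So $d\le kL$ holds for odd $k$, in particular $d\le 3L$ for $S_3$.

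The gap is the even case. It is exactly what $d\le 2m$ needs whenever $m$ is even, including $m=32$ in the instance this corollary is named for, and what $d\le kL$ needs for even $k$ (e.g.\ $S_4$). Your remedy cannot work: by the identity you quote, $\ell'u'^T=\mu\,\ell u^T\equiv 0$, so an inflated logical never witnesses another inflated logical. The correct test pairs $\ell'$ with a non-inflated $U\in\ns{H_Z(\mathcal Q)}$ via $U\ell'^T=(U\Pi)\ell^T$. Thus $\ell'$ is nontrivial iff the projection of some logical of $\mathcal Q$ pairs oddly with $\ell$. For even $\mu$ this is not automatic. Take two qubits swapped by $\mathbb Z_2$, with $H_X$ and $H_Z$ each two copies of $(1,1)$: this code has $k=0$. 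Its quotient is a single qubit with zero check rows and $k=1$, and the inflation $(1,1)$ of that logical is a stabilizer.

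Concretely, take ansatz~\eqref{eq:ansatz-a} and let $v'$ be the inflation of $e_{(4,x)}+e_{(5,x)}$ on $F$-block columns $4,5$. Its natural partner $U$ sits on the all-identity $F^T$-block columns $3,4$ of $H_Z$. But $H_ZU^T=0$ forces the bottom part of $U$ to be invariant under the subgroup of $\mathbb Z_m$ generated by $f_2-2f_3+f_4$ and $f_2-f_3-f_4+f_5$. Hence $Uv'^T=0$ whenever that subgroup has even order, which is the generic situation for $m=32$. Nontriviality therefore depends on the bottom shifts. You need a genuinely new argument, or an extra hypothesis such as $m$ odd; the sign-homomorphism fallback is too vague to supply it.

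Separately, your degenerate branch for $\mathcal Q_C$ does not work. If all latent rows of both types lie in the active spans, then $\mathcal Q_C$ coincides with its fully orthogonal parent, and rows of $[G^T|F^T]$ are stabilizers, not logicals. Proposition~\ref{prop:rate-bound} bounds $k$, not weights. And two-block abelian codes can have distance well above their check weight. The paper simply works under the standing proviso stated after Fact~\ref{fact:augmentation}; you should adopt it as a hypothesis rather than try to prove it away.
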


\begin{proof}
    Bottom code has distance $\leq L$ since it's fully orthogonal (non-abelian factors are projected away). Top code has distance $2$ for $S_3$ since there can be at most $4$ non-identity elements. We either have full orthgonality, which forces $d\leq L$, or we have two identical columns of identities in $\mathcal{Q}^H$, which implies $d_{\mathcal{Q}^H}\leq 2$.
\end{proof}

\begin{lemma}[Girth Inheritance from Quotients]\label{lem:girth-inheritance}
Let $\mathcal{Q}\in \mathrm{GALA}_{L,J}(H_k\times C_m)$ be a monomial DPG code, and let $\mathcal Q_N$ be any quotient code of Definition~\ref{def:quotient}. Then
\begin{equation}\label{eq:girth-inheritance}
    \gth{\mathcal{Q}}\geq \gth{\mathcal{Q}_{N}}.
\end{equation}
\end{lemma}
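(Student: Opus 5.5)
The plan is a covering-map argument: the Tanner graph $\mathcal T(\mathcal Q)$ is a graph covering of $\mathcal T(\mathcal Q_N)$, and cycles project to closed walks of the same length that cannot backtrack. I would treat $H_X$ and $H_Z$ separately; the $Z$ case is verbatim with blocks $A^T\otimes B^T$, exactly as in Lemma~\ref{lem:inflation}.

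\textbf{Step 1: the projection.} As in the proof of Lemma~\ref{lem:inflation}, index the qubits of $\mathcal Q$ by $(a,i,x)$ and the checks by $(r,i',x')$, where $x$ is the collapsed coordinate. The block at position $(r,a)$ of $\hat H_X$ is $A_{a-r}\otimes B_{a-r}$, with both factors permutation matrices because the lift is monomial. Define
\[
p:\ (a,i,x)\mapsto(a,i),\qquad (r,i',x')\mapsto(r,i').
\]
Then $(r,i',x')\sim(a,i,x)$ in $\mathcal T(H_X)$ iff $[A_{a-r}]_{i',i}=1$ and $[B_{a-r}]_{x',x}=1$. Since $[A_{a-r}]_{i',i}=1$ is exactly adjacency in $\mathcal T(H_X(\mathcal Q_N))$, the map $p$ is a graph homomorphism.

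\textbf{Step 2: local bijectivity.} Fix a check $(r,i',x')$ and a neighbour $(a,i)$ of its image. Because $B_{a-r}$ is a permutation matrix, there is exactly one $x$ with $[B_{a-r}]_{x',x}=1$. The symmetric argument, using columns of $B_{a-r}$, handles neighbourhoods of qubit nodes.

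One subtlety must be checked here. Distinct block columns $a$ may give the same $a-r$, but they are distinct qubits. Also, the block row set $[J]$ is the same in both codes. Both facts are ensured by Definition~\ref{def:quotient}, which uses the same $L$, $J$ and row order. Hence $p$ restricted to the neighbourhood of any vertex is a bijection onto the neighbourhood of its image, so $p$ is a covering map of graphs.

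\textbf{Step 3: cycles project to non-backtracking closed walks.} Take a shortest cycle $v_0v_1\cdots v_{t-1}v_0$ in $\mathcal T(\mathcal Q)$. Its image is a closed walk of length $t$ in $\mathcal T(\mathcal Q_N)$. Local injectivity says $v_{j-1}\neq v_{j+1}$ implies $p(v_{j-1})\neq p(v_{j+1})$, so the image walk has no backtracking.

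\textbf{Step 4: non-backtracking walks contain cycles.} A nontrivial non-backtracking closed walk in a graph always contains a cycle of length at most the walk's length. To see this, take the first repeated vertex along the walk; the segment between its two occurrences has length $\geq3$ by non-backtracking, and hence length $\geq4$ by bipartiteness, so it contains a genuine cycle. The length-$2$ case is excluded because the graph is simple (Lemma~\ref{lemma:girth}). Therefore $\gth{\mathcal Q_N}\leq t=\gth{\mathcal Q}$.

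If $\mathcal Q$ has no cycles, the inequality is vacuous. The quotient case $C_m=\mathbb Z_{p_1}\times\cdots$ is identical, with the collapsed factor a Kronecker product of permutation matrices.

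\textbf{Main obstacle.} The delicate point is Step 2: verifying the covering property precisely in the presence of the block-circulant wrap-around and the truncation to $J$ rows. The issue is that one check could meet two different qubits lying over the same quotient qubit, which would break injectivity. I would rule this out by noting that for fixed $(r,a)$ there is a single block $A_{a-r}\otimes B_{a-r}$, and a Kronecker product of permutation matrices is itself a permutation matrix.
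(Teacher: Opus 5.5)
Your proposal is correct and follows the paper's proof essentially step for step. The forgetful map is a graph covering because each block $A_{a-r}\otimes B_{a-r}$ is a Kronecker product of permutation matrices, so a cycle upstairs projects to a non-backtracking closed walk of the same length downstairs, which contains a cycle of no greater length; your first-repeated-vertex argument in Step 4 spells out this last fact, which the paper only asserts.
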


\begin{proof}
Write the qubits of $\mathcal Q$ as pairs $(a,x)$ with $a$ a qubit of $\mathcal Q_N$ and $x\in[\mu]$ the collapsed coordinate, and likewise its checks as $(c,x')$, and let $\pi$ be the map forgetting the second coordinate. As in the proof of Lemma~\ref{lem:inflation}, the block of $\hat H_X$ in position $(r,a)$ is a Kronecker product $A_{a-r}\otimes B_{a-r}$ of permutation matrices, so the qubit $(a,x)$ is incident to the check $(c,x')$ exactly when $a$ is incident to $c$ in $\mathcal T(\mathcal Q_N)$ \emph{and} $x = B(x')$ for the single permutation $B$ attached to that block. Hence for each fixed check $(c,x')$ and each neighbor $a$ of $c$ downstairs there is exactly one $x$ with $(a,x)$ a neighbor of $(c,x')$, and symmetrically for a fixed qubit: $\pi$ restricts to a bijection between the neighborhood of any vertex of $\mathcal T(\mathcal Q)$ and the neighborhood of its image. That is, $\pi$ is a graph covering.

A covering map sends a cycle of length $t$ in $\mathcal T(\mathcal Q)$ to a closed walk of length $t$ in $\mathcal T(\mathcal Q_N)$ which is non-backtracking, because $\pi$ is injective on neighborhoods; a non-backtracking closed walk of length $t$ contains a cycle of length at most $t$. Therefore every cycle upstairs has length at least $\gth{\mathcal Q_N}$, which is~\eqref{eq:girth-inheritance}. The same argument applies in the $Z$ sector, and $\gth{\mathcal Q}$ is the minimum of the two.
\end{proof}

\begin{corollary}[Girth Bound on $m$]\label{cor:girth-m}
Fix $L$, $J$ and a top $H_k$ admitting an active set $\Gamma$, and let $C_m = \mathbb Z_m$ be cyclic. If $m> 2(J-1)L$ then there is a choice of bottom generators for which the resulting $\mathcal{Q}^*\in \mathrm{GALA}_{L,J}(H_k\times C_m)$ has $\gth{\mathcal{Q}^*}\geq 6$.
\end{corollary}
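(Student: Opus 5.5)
By Lemma~\ref{lem:girth-inheritance}, $\gth{\mathcal Q^*}\ge\gth{\mathcal Q_C}$, so it is enough to choose the bottom generators $\{f_i\},\{g_i\}\subset\mathbb Z_m$ so that the bottom quotient code $\mathcal Q_C=\mathrm{GALA}_{L,J}(\{f_i\},\{g_i\})$ has girth $\ge6$. The tops then play no role: any top $H_k$ realizing $\Gamma$ can be paired with these bottoms, since by Lemma~\ref{lemma:dpk-commutivity} the bottoms do not affect orthogonality. Now $\mathcal Q_C$ is a quasi-cyclic code whose blocks are the cyclic shifts $x\mapsto x+s$. I will relabel the $L$ shifts as $s_0,\dots,s_{L-1}$, with $s_a=f_a$ for $a<\tfrac L2$ and $s_a=g_{a-L/2}$ otherwise.

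First I will turn $4$-cycles into linear equations. By Lemma~\ref{lemma:girth}, a $4$-cycle in $\mathcal T(H_X(\mathcal Q_C))$ consists of two checks $(r,x),(r',x')$ with block rows $r\neq r'\in[J]$ (for $r=r'$ the two blocks are disjoint permutation matrices) and two qubits in block columns $a\neq b$ within the same half. Its existence is equivalent to
\[
s_{\phi(a-r)}-s_{\phi(a-r')}+s_{\phi(b-r')}-s_{\phi(b-r)}\equiv0 \pmod m,
\]
where $\phi$ records which generator sits in each block position. Two qubits in the same block column are impossible, because a permutation block meets each check once. For $H_Z$ the blocks are transposes, i.e.\ negated shifts, and the circulant is read in the reversed index order. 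The $Z$-sector $4$-cycles therefore give the same family of equations up to relabelling and sign, and I will simply include them in the forbidden set as well.

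Next I will make a greedy choice. I order the generators and choose $s_0,s_1,\dots$ one at a time, each time excluding every value of $s_t$ that would satisfy one of the equations above in which $s_t$ is the highest-indexed variable present. A pattern is fixed by the row-offset difference $\delta=r'-r\in\{1,\dots,J-1\}$ up to sign, together with a partner column. For fixed $t$ and $\delta$, the generator $t$ appears in at most two positions of the cycle, and the other column is then one of at most $L$ choices. This should give at most $2(J-1)L$ linear constraints on $s_t$, each forbidding one residue. I need this count to hold in both the $X$ and $Z$ sectors together, possibly by observing that the $Z$-sector equations coincide with the $X$-sector ones under $a\mapsto -a$. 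Then the hypothesis $m>2(J-1)L$ leaves a free value at every step.

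The main obstacle is the degenerate patterns in which $s_t$ occurs twice. This happens when $a-r$ and $b-r'$ name the same generator. There $s_t$ appears either with coefficient $\pm2$, which can have two solutions when $m$ is even, or with coefficient $0$, which would make the constraint independent of $s_t$ and break the greedy argument. For the coefficient-$0$ case I will check directly that the remaining two terms are distinct generators and that the equation then only involves earlier variables, so it was already excluded at an earlier step. Alternatively, if they coincide, I will show that the configuration forces $a=b$ or $r=r'$, which is not a genuine cycle. For the coefficient-$2$ case I would first try to show that such a pattern replaces two of the generic single-occurrence patterns counted for that $\delta$, so the total of at most $2(J-1)L$ forbidden residues survives. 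If that bookkeeping fails, I would restrict to odd $m$ or weaken the bound slightly.
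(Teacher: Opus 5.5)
Your plan is the paper's own argument. You choose the $L$ bottom exponents greedily in the order $f_0,\dots,f_{\ell-1},g_0,\dots,g_{\ell-1}$ (with $\ell=\tfrac L2$) so that, for each offset $\delta\in\{1,\dots,J-1\}$, the $L$ differences $D_\delta=\{f_u-f_{u-\delta}\}_u\cup\{g_u-g_{u-\delta}\}_u$ are pairwise distinct. Your $4$-cycle equation is exactly a coincidence of two elements of $D_\delta$. Routing through Lemma~\ref{lem:girth-inheritance} is an equivalent repackaging of the paper's direct remark that the $L$ Kronecker summands of the Gram block $B_\delta$ have disjoint supports once their bottom exponents differ.

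The problem is that the hypothesis $m>2(J-1)L$ leaves little slack, and the two points you leave open are exactly the ones that decide whether the count fits. Both of your fallbacks would fail to give the corollary as stated: treating the $Z$-sector equations as additional forbidden residues, or restricting to odd $m$ and weakening the bound.

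\textbf{The $Z$ sector.} You must show it adds \emph{no} new constraints on the actual variables, not merely ``the same family up to relabelling.'' The block is $(H_ZH_Z^T)_{i,i'}=\sum_u\big(G_u^TG_{u+\delta}+F_u^TF_{u+\delta}\big)$. Its bottom exponents are $g_{u+\delta}-g_u$ and $f_{u+\delta}-f_u$, which after $u\mapsto u-\delta$ is literally the set $D_\delta$. If the $Z$ equations are instead counted separately, your budget doubles to $4(J-1)L$ and the greedy step is no longer guaranteed a free residue.

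\textbf{The coefficient-$\pm2$ case.} Count per newly determined difference rather than per equation. At step $t$ and offset $\delta$, at most two elements of $D_\delta$ become determined: $e_1=s_t-s_{t-\delta}$ and, on wrap-around, $e_2=s_{t+\delta}-s_t$ (indices cyclic within the half). In these $s_t$ has coefficients $+1$ and $-1$. Comparing a new difference with an earlier-determined element forbids one residue, and there are at most $L-2$ such comparisons per new difference, or $L-1$ if only one is new. The lone same-step comparison $e_1\neq e_2$, i.e.\ $2s_t\neq s_{t-\delta}+s_{t+\delta}$, forbids at most two residues for any $m$. That gives at most $2(L-2)+2=2(L-1)$ residues per offset and $2(J-1)(L-1)<m$ overall, with no parity assumption.

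\textbf{The coefficient-$0$ case.} It never occurs. A generator repeated in two positions of opposite sign would force either $r\equiv r'\pmod{\ell}$, impossible since $0<|r-r'|<J\le\ell$, or $a=b$.

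\textbf{Cross-half cycles.} Drop the claim that the two qubits of a $4$-cycle lie in the same half. Cycles with one qubit in the $F$-half and one in the $G$-half exist; they are the collisions $f_u-f_{u-\delta}=g_v-g_{v-\delta}$. Your count over all $L$ partner columns already includes them, so only the sentence needs fixing.
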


\begin{proof}
By Lemma~\ref{lemma:girth} it suffices to make every off-diagonal entry of $H_XH_X^T$ and of $H_ZH_Z^T$ at most $1$ over $\mathbb Z$. Write $\ell = \tfrac L2$, let $x$ denote the generator of $\mathbb Z_m$, and let the lifts be $F_u = F^H_u\otimes x^{f_u}$ and $G_u = G^H_u\otimes x^{g_u}$. The $(i,i')$ block of $H_XH_X^T$ depends only on $\delta = i'-i$ and equals
\[
    B_\delta = \sum_{u\in[\ell]}\Big(F^H_u F^{H,T}_{u-\delta}\otimes x^{\,f_u-f_{u-\delta}}\Big)
             + \sum_{u\in[\ell]}\Big(G^H_u G^{H,T}_{u-\delta}\otimes x^{\,g_u-g_{u-\delta}}\Big),
\]
a sum of $L$ permutation matrices. At $\delta = 0$ every summand is the identity, so $B_0 = L\cdot I$ is diagonal and contributes no off-diagonal entry. For $\delta\neq0$, two summands can share a non-zero position only if they carry the same power of $S$; so it suffices that, for each of the $J-1$ offsets $\delta = 1,\dots,J-1$ realizable as $i'-i$ with $i,i'\in[J]$, the $L$ differences
\[
    D_\delta \;=\; \big(f_u-f_{u-\delta}\big)_{u\in[\ell]}\ \cup\ \big(g_u-g_{u-\delta}\big)_{u\in[\ell]}
\]
are pairwise distinct in $\mathbb Z_m$. The offset $-\delta$ transposes the block and negates $D_\delta$, so it imposes no further condition, and the identity $(H_ZH_Z^T)_{i,i'} = \sum_u\big(G^T_uG_{u+\delta} + F^T_uF_{u+\delta}\big)$ shows the $Z$ sector is governed by the same set of differences.

Choose the $L$ bottom exponents one at a time, in the order $f_0,\dots,f_{\ell-1},g_0,\dots,g_{\ell-1}$. Each element of each $D_\delta$ is a difference of two exponents and so becomes determined exactly when the later of the two is chosen; a step therefore determines at most two elements of $D_\delta$ --- the difference $f_t - f_{t-\delta}$ reaching backwards, and the wrap-around difference $f_{t+\delta-\ell}-f_t$ when $t\geq\ell-\delta$ --- and each newly determined element is an affine function of the current exponent with coefficient $\pm1$. Requiring it to differ from the at most $L-1$ other elements of its own class forbids at most $L-1$ values of that exponent. Summing over the two elements and the $J-1$ offsets, at most $2(J-1)(L-1) < 2(J-1)L$ values are forbidden at any step, so a valid choice exists whenever $m> 2(J-1)L$. Iterating over all $L$ exponents produces the required $\mathcal Q^*$.
\end{proof}

\subsection{Bounds on ZX-Dual Codes}\label{ssec:zx-bounds}
Imposing a ZX-duality can lead to low girth or distance, so we need to be careful to avoid these pitfalls. In particular, the reflection based sector involutions ends up being the most problematic. We show that ZX-dual codes constructed using $r = r_2$ of~\eqref{eq:zx-sector-maps} force girth $4$ whenever $J\geq2$, that $r = r_1r_2$ does the same in the balanced case $J = \tfrac L4$, and that $r = r_1r_2$ at $J=\tfrac L4$ additionally forces $d\leq w$:

\begin{lemma}[Centrality of $t$]\label{lem:t-central}
Let $\mathcal Q\in \mathrm{GALA}_{L,J}(H_k\star C_m)$ have a ZX-duality $\tau =\iota \otimes t$ where $r\in \{r_2, r_1r_2\}$ as defined in~\eqref{eq:zx-sector-maps}.  Then, $t$ commutes with every generator; that is, $t\in Z_{H_k\star C_m}(\mathcal{F}\cup\mathcal{G})$.
\end{lemma}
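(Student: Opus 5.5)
The plan is to derive centrality of $t$ from two ingredients. The first is the generator condition \eqref{eq:zx-generator-condition} of Proposition~\ref{def:dual-gala}, read as $G_{r(i)} = F_i^{-1}t$ for every $i\in[\tfrac L2]$. The second is active orthogonality, which says that $F_i$ commutes with $G_j$ whenever $(i,j)\in\Gamma$. The key observation is that for the two reflection sector involutions the pair $(i,r(i))$ has a \emph{constant} index sum: $i+r_2(i)=0$ and $i+r_1r_2(i)=\tfrac L4$. So every pair $(i,r(i))$ lies on a single antidiagonal of $[\tfrac L2]\times[\tfrac L2]$, and the whole question reduces to whether that antidiagonal is active.

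First I would handle $r=r_2$. The pairs $(i,-i)$ have sum $0$, so they lie in $\Gamma_J\subseteq\Gamma$ for every $J\geq1$. Hence $[F_i,G_{-i}]=0$. Substituting $G_{-i}=F_i^{-1}t$ gives $F_iF_i^{-1}t = F_i^{-1}tF_i$, that is, $tF_i=F_itF_i^{-1}\cdot F_i$ rearranged to $F_i^{-1}tF_i=t$. So $t$ commutes with every $F_i$. Then $G_{-i}=F_i^{-1}t$ is a product of two elements commuting with $t$, so it also commutes with $t$. As $i$ ranges over $[\tfrac L2]$, $-i$ does too, so $t$ commutes with all of $\mathcal G$. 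Both computations are identities in the group $H_k\star C_m$, so they apply verbatim to DPG and SPG codes without using Lemmas~\ref{lemma:dpk-commutivity} or~\ref{lemma:spk-commutivity}.

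Next I would run the same two-line argument for $r=r_1r_2$, now using the pairs $(i,\tfrac L4-i)$. This needs $[F_i,G_{\frac L4-i}]=0$ for all $i$, i.e.\ that the antidiagonal of residue $\tfrac L4$ (equivalently $-\tfrac L4$ mod $\tfrac L2$) is contained in $\Gamma$.

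The main obstacle is justifying this membership, since $\Gamma_J$ only contains residues with $|r|<J$. Membership is automatic when $J>\tfrac L4$, and it holds whenever the chosen $\Gamma$ includes that antidiagonal; the lemma is really intended for the balanced case $J=\tfrac L4$, where the next subsection uses it. If $\Gamma$ does not obviously contain the antidiagonal, my fallback is to extract the needed commutations from the duality itself. I would write out $H_X\tau=H_Z$ and $H_Z\tau=H_X$ blockwise, following the entry-tracing in the proof of Proposition~\ref{def:dual-gala}. This yields $F_{a}t=G^T_{b}$ and $G^T_{b}t=F_{a}$ for matched indices, which forces $t^2=e$. I would then compare the two expressions $t=F_iG_{r(i)}$ and $t=F_{r(i)}G_i$, both instances of \eqref{eq:zx-generator-condition} because $r$ is an involution. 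From $F_iG_{r(i)}=F_{r(i)}G_i$, together with whatever commutations $\Gamma$ does supply among these four elements, I would try to recover $[F_i,G_{r(i)}]=0$ and then conclude exactly as in the $r_2$ case.
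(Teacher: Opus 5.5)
Your argument for $r=r_2$ is correct, and it differs from the paper's. You get $[F_i,G_{-i}]=0$ from the residue-$0$ antidiagonal of $\Gamma_J$, then convert it to $[F_i,t]=0$ via $G_{-i}=F_i^{-1}t$. This needs neither $t^2=e$ nor a choice between the orders $F_iG_{r(i)}=t$ and $G_{r(i)}F_i=t$.

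The gap is $r=r_1r_2$, which is the case the lemma is needed for later. Given the generator condition, $G_{r(i)}F_i=F_i^{-1}tF_i$. So $[F_i,t]=0$ is \emph{equivalent} to $[F_i,G_{\frac L4-i}]=0$, and you must show that the entire residue-$\tfrac L4$ antidiagonal commutes. For $J\le\tfrac L4$ that antidiagonal is not in $\Gamma_J$. At $J=\tfrac L4$ it is exactly the latent one: if it were entirely active, $\Psi_r$ would vanish for every $r$ and the parent would be fully orthogonal. That is precisely the conclusion Corollary~\ref{cor:zx-distance} \emph{derives} from this lemma, so $\Gamma$ cannot supply it.

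Your fallback is also provably insufficient. It combines the generator condition at $i$ and $r(i)$, the relation $t^2=e$, and the commutations in $\Gamma$. Take $L=4$, $J=1$, $\Gamma=\Gamma_1$, and in $S_3$ a $3$-cycle $a$ and a transposition $b$. Set $F_0=a$, $F_1=b$, $t=ab$, $G_1=b$, $G_0=bab=a^{-1}$. Then $F_iG_{1-i}=t$, $t^2=e$, $[F_0,G_0]=[F_1,G_1]=0$, and the latent pairs fail to commute. Yet $at\neq ta$.

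The missing step is one you already had in hand. The blockwise identity $H_X\tau=H_Z$ lines up the block $F_a$ of $H_X$ (after the shift $\iota$) with the block $G^T_{r(a)}$ of $H_Z$, giving $F_at=G^T_{r(a)}$. Substituting the generator condition $G_{r(a)}=F_a^{-1}t$ gives $G^T_{r(a)}=t^{-1}F_a$, hence $F_at=t^{-1}F_a$. With $t^2=e$ (from the $G$-half, as you observed) this becomes $F_at=tF_a$. Then $G_{r(a)}=F_a^{-1}t$ commutes with $t$ exactly as in your $r_2$ step.

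This is the paper's argument. It uses the duality equation rather than active orthogonality, treats both reflections uniformly, and is what excludes configurations like the one above (there $H_X(\iota\otimes t)\neq H_Z$). Note that the paper's display $(F^T_{j-i}t)^T=tF_{j-i}$ tacitly uses $t^T=t$, i.e.\ the same relation $t^2=e$.
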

\begin{proof}
Take first $r = r_2$, for which $\iota = e$ by Proposition~\ref{def:dual-gala}. For $i\in[J]$ and $j\in[\frac L2]$, Definition~\ref{def:kasai} gives
\begin{align*}
(H_X\cdot(e\otimes t))_{i,j} &=F_{j-i}\,t, \\
(H_Z)_{i,j} = G^T_{i-j} = \left(F^T_{j-i}t\right)^T &= t\,F_{j-i}.
\end{align*}
So $H_X\tau = H_Z$ holds only when $F_{j-i}t = tF_{j-i}$ for all $i,j$. It follows that $t$ is central in $\mathcal{F}$, and similarly for the $G$-halves and for $r = r_1r_2$.
\end{proof}

As we will show, the girth, distance, and rate of self-dual codes are all connected to each other via $\Phi_r$:
\begin{equation}\label{eq:zx-gram-block}
    \Phi_r:= \sum_{u\in[\frac{L}{2}]}\left(F_uF^T_{u-r} + F^T_{u-r}F_u\right),
    \qquad r\in[\tfrac{L}{2}].
\end{equation}

\begin{lemma}\label{lem:zx-gram}
Let $\mathcal Q\in \mathrm{GALA}_{L,J}(H_k\star C_m)$ have a ZX-duality $\tau =\iota \otimes t$ with $r: i\mapsto \beta -i$, and let $i,i'\in[J]$ with $\delta = i-i'$. Then, we have
\begin{equation}\label{eq:zx-gram-refl}
    \left(H_XH_X^T\right)_{i,i'} =\Phi_\delta;
\end{equation}
and 
\begin{align}\label{eq:phi-zero}
   \Phi_0 &= L\cdot I,\\
    \Phi_\delta &= \Psi_{\delta + \beta}t \pmod 2.
\end{align}
\end{lemma}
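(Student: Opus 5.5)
The plan is a direct block computation: expand the Gram block, rewrite every $G$ in terms of $F$ using the generator condition~\eqref{eq:zx-generator-condition}, and use Lemma~\ref{lem:t-central} to move $t$ past the $F$'s. All lifts are monomial and hence permutation matrices, so $F_u^{-1}=F_u^T$ and $t\,t^T=I$. The reflection sector $r:i\mapsto\beta-i$ is exactly the case ($r_2$ with $\beta=0$, $r_1r_2$ with $\beta=\tfrac L4$) where Lemma~\ref{lem:t-central} guarantees that $t$ commutes with every generator.

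\textbf{Step 1 (Gram block).} From Definition~\ref{def:kasai}, $[\hat H_X]_{i,l}=F_{l-i}$ and $[\hat H_X]_{i,l+\frac L2}=G_{l-i}$. Hence
\[
(H_XH_X^T)_{i,i'}=\sum_{l}F_{l-i}F^T_{l-i'}+\sum_l G_{l-i}G^T_{l-i'}.
\]
Substituting $u=l-i'$ turns the first sum into $\sum_u F_{u+\delta}F^T_{u}$, which is the first half of $\Phi$ up to the sign convention on $\delta$. This index/sign bookkeeping must be matched carefully against the definition~\eqref{eq:zx-gram-block}.

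\textbf{Step 2 (second sum).} The generator condition gives $G_{\beta-v}=F_v^Tt$. Then
\[
G_{\beta-v}G^T_{\beta-v'}=F_v^T\,t\,t^T F_{v'}=F_v^TF_{v'}.
\]
Reindexing $l-i=\beta-v$ (so that $l-i'=\beta-(v-\delta)$) turns the second sum into $\sum_v F_v^T F_{v-\delta}$. After shifting the summation variable, this is the second half of $\Phi_\delta$, which establishes~\eqref{eq:zx-gram-refl}. For $\delta=0$, every one of the $L$ summands is $F_uF_u^T=I$ or $F_u^TF_u=I$, so $\Phi_0=L\cdot I$.

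\textbf{Step 3 ($\Phi_\delta$ versus $\Psi$).} Expand
\[
\Psi_{\delta+\beta}=\sum_u\big(F_uG_{\delta+\beta-u}+G_{\delta+\beta-u}F_u\big).
\]
Writing $\delta+\beta-u=\beta-(u-\delta)$ gives $G_{\delta+\beta-u}=F^T_{u-\delta}t$. The first term becomes $F_uF^T_{u-\delta}\,t$. The second term is $F^T_{u-\delta}\,t\,F_u$, which equals $F^T_{u-\delta}F_u\,t$ by centrality of $t$ (Lemma~\ref{lem:t-central}). Summing over $u$ gives $\Psi_{\delta+\beta}=\Phi_\delta\,t$. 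Right-multiplying by $t^{-1}=t^T$ shows that $\Phi_\delta$ and $\Psi_{\delta+\beta}$ agree up to the invertible factor $t$, which is the stated identity with the convention that $t$ is an involution ($\tau$ is an involution). All equalities hold over $\mathbb Z$ except the commutator reading of $\Psi$, which is mod $2$; this is why the statement is taken mod $2$.

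\textbf{Main obstacle.} I expect the only real difficulty to be bookkeeping. The sign of $\delta$ and the transposes coming from $\hat H_Z=[G^T|F^T]$ must be tracked consistently, so that the reindexed sums match~\eqref{eq:zx-gram-block} exactly rather than $\Phi_{-\delta}$. If a sign mismatch appears, I would fix it by noting that $\Phi_{-\delta}=\Phi_\delta^T$ and that the Gram matrix is symmetric.
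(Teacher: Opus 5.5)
Your route is the paper's: expand the Gram block, substitute $G_{\beta-v}=F_v^Tt$ so that $tt^T$ cancels, and get the $\Psi$ relation by substituting into the anticommutator sum and moving $t$ past $F_u$ with Lemma~\ref{lem:t-central}. Step~3 is correct. It is the paper's computation $(H_XH_Z^T)_{i,i'}=\Phi_{\delta-\beta}t$, carried out directly on the definition of $\Psi_{\delta+\beta}$. Your remark that it really gives $\Phi_\delta=\Psi_{\delta+\beta}t^{-1}$, so that the stated form needs $t^2=e$, makes explicit an assumption the paper uses silently. It is harmless later, where only vanishing mod $2$ matters.

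Steps~1--2, however, contain a sign error that matters for the literal statement.
\begin{itemize}
\item With $\delta=i-i'$ and $u=l-i'$ you have $l-i=u-\delta$, not $u+\delta$. So the first sum is $\sum_u F_{u-\delta}F_u^T=\sum_w F_wF^T_{w+\delta}$, not $\sum_u F_{u+\delta}F_u^T$.
\item Your second sum $\sum_v F_v^TF_{v-\delta}$ is computed correctly, but no re-indexing turns it into $\sum_u F^T_{u-\delta}F_u$. The summands are transposes of each other: for an abelian lift $F_u=x^{a_u}$ the exponents are $a_{v-\delta}-a_v$ versus $a_u-a_{u-\delta}$. These multisets already differ for the $[[132,30,12]]$ lifts at $\delta=1$.
\end{itemize}
Done consistently, both halves give $\Phi_{-\delta}$. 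Hence $(H_XH_X^T)_{i,i'}=\Phi_{i'-i}=\Phi_{i-i'}^T$, which differs from $\Phi_{i-i'}$ in general.

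The paper's proof makes the same slip. Its first line, $\sum_u F_uF^T_{u-\delta}+\sum_u G_uG^T_{u-\delta}$, is correct only for $\delta=i'-i$, the convention of Corollary~\ref{cor:girth-m}.

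Your fallback via symmetry of $H_XH_X^T$ identifies the $(i',i)$ block with $\Phi_\delta$, not the $(i,i')$ block. What you can actually prove is therefore the lemma with $\delta=i'-i$, equivalently equality up to transposing the block. That is all the later arguments need: Corollary~\ref{cor:zx-distance} uses only $\delta=0$, and Proposition~\ref{prop:zx-girth4} uses only the parities and the total of the block's entries. State the conclusion in that form rather than claiming the shifts produce $\Phi_\delta$.
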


\begin{proof}
The relation $F_iG_{r(i)} = t$ gives $G_i = F^T_{\beta-i}\cdot t$.  We have:
\begin{align}
    \left(H_XH_X^T\right)_{i,i'} 
    &= \sum_{u} F_uF^T_{u-\delta} + \sum_{u} G_uG^T_{u-\delta} \\
    &= \sum_{u} F_uF^T_{u-\delta} +\sum_{u} F^T_{\beta-u}F_{\beta-u+\delta}\\
    &=\sum_{u} F_uF^T_{u-\delta} + \sum_{u} F^T_{u-\delta}F_{u} \label{eq:lemma13}\\
    &=\Phi_\delta,
\end{align}
where we get line~\eqref{eq:lemma13} by re-indexing the second sum with $v = \beta-u+\delta$, so that $\beta - u = v-\delta$. Similarly, we have $(H_XH_Z^T)_{i,i'} = \sum_{j}F_{i-j}G_{j-i'} + \sum_{j}G_{i-j}F_{j-i'}$ gives $\sum_u F_uG_{\delta-u} + \sum_u G_uF_{\delta-u}$. Using $G_{\delta-u} = F^T_{u-(\delta-\beta)}t$ and $G_u = F^T_{\beta-u}t$ together with Lemma~\ref{lem:t-central}, we have:
\begin{equation*}
    \left(H_XH_Z^T\right)_{i,i'} = \sum_u F_uF^T_{u-(\delta-\beta)}\,t + \sum_{u} F^T_{\beta-u}\,t\,F_{\delta-u}
    = \left(\sum_u F_uF^T_{u-(\delta-\beta)} + \sum_{u} F^T_{u-(\delta-\beta)}F_u\right)t = \Phi_{(\delta-\beta)}t,
\end{equation*}
and it follows that $\Psi_\delta =\left(H_XH_Z^T\right)_{i,i'} = \Phi_{\delta - \beta} t\implies \Phi_{\delta} = \Psi_{\delta+\beta}t$, as desired. Finally, for monomial DPG $F_uF^T_u = F^T_uF_u = I$, so
\begin{equation*}
    \Phi_0 = \sum_{u\in[\frac L2]}\left(F_uF^T_u + F^T_uF_u\right) = \sum_{u\in[\frac L2]}2I = L\cdot I.
\end{equation*}
\end{proof}

For reflection-based ZX-dualities, $\Phi_r$ now predicts orthogonality exactly and depends only on $\mathcal F$.

\begin{corollary}[Distance Failures for Reflections]\label{cor:zx-distance}
Let $\mathcal Q\in \mathrm{GALA}_{L,\frac L4}(H_k\star C_m)$ have a ZX-duality $\tau =\iota \otimes t$ where $r= r_1r_2$ as defined in~\eqref{eq:zx-sector-maps}. Then the parent matrices are fully CSS-orthogonal, $\hat H_X\hat H_Z^T = 0$, and, under the proviso of Fact~\ref{fact:augmentation}, $d_\mathcal{Q}\leq w = L$.
\end{corollary}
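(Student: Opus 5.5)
The plan is to show that every block $\Psi_s$ of $\hat H_X\hat H_Z^T$ vanishes. Once that holds, the latent rows commute with the retained stabilizers, and Fact~\ref{fact:augmentation} gives the distance bound.

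\textbf{Step 1: reduce $\Psi$ to $\Phi$.} Take $r=r_1r_2$, i.e.\ $r:i\mapsto \beta-i$ with $\beta=\tfrac L4$. Lemma~\ref{lem:t-central} makes $t$ central. Lemma~\ref{lem:zx-gram} then gives $\Phi_\delta=\Psi_{\delta+\beta}t$, or equivalently $\Psi_s=\Phi_{s-\beta}t^{-1}$, since $t$ is a permutation and hence invertible. So it suffices to show $\Phi_\rho=0$ for every $\rho\in[\tfrac L2]$.

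\textbf{Step 2: symmetries of $\Phi$.} Transposing term by term in the definition \eqref{eq:zx-gram-block} and re-indexing $u\mapsto u+\rho$ gives $\Phi_{-\rho}=\Phi_\rho^T$. Moreover $\Phi_\rho$ is a sum of terms of the form $A+A^T$-type pairs, so its parity pattern is controlled by the offsets $\pm\rho$ together.

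\textbf{Step 3: which $\Phi_\rho$ are forced to vanish.} The active condition gives $\Psi_s=0$ for $|s|<J=\tfrac L4$, modulo $\tfrac L2$. Equivalently, $\Phi_\rho=0$ for $\rho+\beta\in\{-(J-1),\dots,J-1\}$, that is, for $\rho\in\{-2\beta+1,\dots,-1\}$, which modulo $\tfrac L2=2\beta$ is $\{1,\dots,2\beta-1\}$. By the transpose symmetry of Step 2 this set is closed under negation, and it covers every nonzero residue modulo $\tfrac L2$. Hence $\Phi_\rho=0$ for all $\rho\neq0$.

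\textbf{Step 4: the remaining offset.} The only uncovered case is $\rho=0$, which corresponds to $s=\beta$. By \eqref{eq:phi-zero}, $\Phi_0=L\cdot I$. Since $4\mid L$, this is $0$ modulo $2$. Therefore $\Psi_s=0$ for every $s$, and $\hat H_X\hat H_Z^T=0$.

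\textbf{Step 5: the distance bound.} With full orthogonality in hand, $\mathcal Q$ is obtained from the fully orthogonal parent code $\CSS{\hat H_X,\hat H_Z}$ by deleting the latent rows. Under the proviso of Fact~\ref{fact:augmentation}, namely that some latent row is not in the span of the retained rows, that fact gives $d\le \wt{x}=L=w$, since every row has weight $L$ in the monomial case.

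\textbf{Main obstacle.} The delicate step is the index bookkeeping in Step 3: confirming that the residues $\rho+\beta$ excluded by the active band are exactly the single residue $\beta$, so that only $\rho=0$ escapes. The worry is that $\Gamma_J$ with $J=\tfrac L4$ has $2J-1=\tfrac L2-1$ residues. This is exactly one short of all residues modulo $\tfrac L2$, which is what makes the $\Phi_0=L\cdot I\equiv0$ observation close the gap. I would also double-check the sign convention of $\delta=i-i'$ against $j-i$ in \eqref{eq:psi-r}. Because the covered set is symmetric under negation, a sign error should not matter.
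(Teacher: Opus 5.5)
Your proof is correct and is essentially the paper's: active orthogonality kills $\Psi_s$ at every offset except $s=\beta=J$ (since $2J-1=\tfrac L2-1$), Lemma~\ref{lem:zx-gram} at $\delta=0$ turns that last block into $\Phi_0t^{-1}=L\,t^{-1}\equiv 0$, and Fact~\ref{fact:augmentation} then gives $d\le L$. Rewriting the offsets that already vanish in terms of $\Phi_\rho$, and the transpose symmetry of Step 2, are unnecessary detours that the final argument does not use. Your remark that $\Phi_\rho$ consists of ``$A+A^T$'' pairs is also inaccurate, since $(F_uF^T_{u-\rho})^T=F_{u-\rho}F_u^T\neq F^T_{u-\rho}F_u$ in general.
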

\begin{proof}
Here $\beta = \frac L4 = J$, so \eqref{eq:zx-gram-refl} at $\delta = 0$ together with \eqref{eq:phi-zero} gives $\Psi_J\, t = \Psi_{0+\beta}\,t = \Phi_0 = wI$, that is $\Psi_J = w\,t \equiv 0 \bmod 2$, since $w = L$ is even. Active orthogonality already supplies $\Psi_\delta = 0$ for every $\delta \in D := \{-(J-1),\dots,J-1\}$, and $D\cup\{J\} = \mathbb Z_{\frac L2}$ because $\frac L2 = 2J$; hence $\Psi_r = 0$ for \emph{every} offset $r$ and $\hat H_X\hat H_Z^T = 0$. So the barrier cannot be broken: every row of $\hat H_X$ is a weight-$L$ operator commuting with all $Z$-checks, so any latent row outside $\rs{H_X}$ is a logical representative of weight $L$ and $d\leq w$ by Fact~\ref{fact:augmentation}. As in Proposition~\ref{prop:distance-bound}, the proviso can fail only if every latent row is already in the active row space, in which case $\mathcal Q$ is its own fully orthogonal parent.
\end{proof}

\begin{proposition}[Girth-4 Dual Codes]\label{prop:zx-girth4}
Let $\mathcal Q\in \mathrm{GALA}_{L,J}(H_k\star C_m)$ with $J\geq 2$ have a ZX-duality $\tau =\iota \otimes t$ where $r$ is a reflection, i.e.\ either $r = r_2$, or $r = r_1r_2$ with $J = \frac L4$, as defined in~\eqref{eq:zx-sector-maps}. Then $\gth{\mathcal{Q}} = 4$, and the number of four-cycles in the Tanner graph of $H_X$ is at least
\begin{equation}\label{eq:zx-four-cycle-count}
    \binom{J}{2}\cdot \frac L2\cdot mk = \frac{J(J-1)L\,mk}{4},
\end{equation}
which for $J = \frac L4$ equals $J^2(J-1)mk$.  The same bound holds for $H_Z$.
\end{proposition}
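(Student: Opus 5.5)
The plan is to bound the off-diagonal Gram entries of $H_X$ from below and then apply the four-cycle count of Lemma~\ref{lemma:girth}. By Lemma~\ref{lem:zx-gram}, for a reflection $r:i\mapsto\beta-i$ and active rows $i,i'\in[J]$ with $\delta=i-i'$, the $(i,i')$ block of $H_XH_X^T$ is
\[
\Phi_\delta=\sum_u\big(F_uF^T_{u-\delta}+F^T_{u-\delta}F_u\big)
\]
when computed over $\mathbb Z$ rather than $\mathbb F_2$. The proof of that lemma is a pure re-indexing, so it holds over the integers. The key observation is that $F_uF^T_{u-\delta}$ and $F^T_{u-\delta}F_u$ are conjugate permutation matrices, and in fact carry the \emph{same} multiset of entries on positions where they coincide. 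More to the point, for a monomial lift each summand pair contributes overlaps that come in pairs.

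I would make this precise as follows. Fix $\delta\neq0$ with $|\delta|<J$. Then $F_uF^T_{u-\delta}=F_u\,F_{u-\delta}^{-1}$ and $F^T_{u-\delta}F_u=F_{u-\delta}^{-1}F_u$. These two are conjugate by $F_u$, hence have equal trace. I would instead compare summands across $u$ using the reflection structure. Re-indexing the second sum by $u\mapsto\beta-u+\delta$ recovers the $G$-sum $\sum_u G_uG^T_{u-\delta}$. So the integer Gram block is the sum of the $F$-part and the $G$-part, and each part is a sum of $\tfrac L2$ permutation matrices.

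The claim to prove is that for every row of this block, some position receives at least $2$. Equivalently, the $L$ permutations are not pairwise disjoint. Here I would use Lemma~\ref{lem:zx-gram} mod 2: $\Phi_\delta\equiv\Psi_{\delta+\beta}t$. For $r=r_2$ we have $\beta=0$, and $\delta+\beta=\delta$ is an active offset, so $\Phi_\delta\equiv0$. For $r=r_1r_2$ with $J=\tfrac L4$, the shifted offset $\delta+J$ ranges over the nonzero residues not of the form $\pm J$ in $\mathbb Z_{2J}$, which again lies in the active set $D$; the exceptional case $\delta+J\equiv J$ is excluded since $\delta\neq0$. So in both cases the integer matrix $\Phi_\delta$ has all entries even. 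On the other hand it is a sum of $L\geq1$ permutation matrices, so each row sums to $L>0$. A nonnegative even-entried row with positive sum has some entry $\geq2$. This is the main step, and the mod-2 reduction of Lemma~\ref{lem:zx-gram} (together with $t$ central from Lemma~\ref{lem:t-central}) carries it.

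For the counting, each of the $mk$ check rows within block row $i$ and each partner block row $i'\neq i$ in $[J]$ have some pair of checks, namely row $(i,x)$ and the check in block $i'$ at the column where the entry is $\geq2$, sharing at least two qubits. That gives at least $\binom{2}{2}=1$ four-cycle per such check pair. The pair $\{i,i'\}$ ranges over $\binom J2$ choices and the row within the block over $\tfrac L2\cdot mk$ indices. Here I would have to be careful: a single block has $km$ rows per block row, and the factor $\tfrac L2$ must come from summing over the offset-equivalent block pairs of the block-circulant, which I expect to be the delicate bookkeeping step. The obstacle is avoiding double counting when a row hits two entries $\geq2$ or when the pairing from $(i,x)$ and from the partner row coincide. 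If needed, I would count unordered check pairs and divide by $2$, compensating by the fact that even entries with row sum $L$ and $L\geq4$ force either two entries $\geq2$ or one $\geq4$, each contributing at least $2$ four-cycles. This yields \eqref{eq:zx-four-cycle-count}, and in particular $t_4>0$, so $\gth{\mathcal Q}=4$. The $H_Z$ statement follows because $H_Z=H_X\tau$ is permutation equivalent to $H_X$.
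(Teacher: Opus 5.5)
Your girth-4 argument is sound and matches the paper's. The integer Gram block $\Phi_\delta$ of two distinct active block rows has even entries; you correctly check that $\delta+\beta$ is an active offset in both reflection cases, including $r_1r_2$ at $J=\tfrac L4$, where only the offset $J$ is latent. Each row of $\Phi_\delta$ sums to $L$, so some entry is $\geq 2$.

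The gap is in the four-cycle count. Taking one overlapping check pair per row gives only $\binom J2 mk$ four-cycles. Your compensation gives two per row, hence $2\binom J2 mk$. Both are strictly weaker than $\binom J2\frac L2 mk$ whenever $L>4$; at $L=12$ you would need six per row. Your guess that the missing factor $\tfrac L2$ comes from summing over offset-equivalent block pairs of the block circulant is also wrong. $H_X$ keeps only the $J$ active block rows. So the check pairs across distinct block rows are indexed exactly by an unordered block pair $\{i,i'\}$ together with $(x,y)\in[km]^2$, with overlap $\Phi_\delta[x,y]$. There is no extra multiplicity, and no double counting to correct for.

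The missing idea is to sum over all entries of $\Phi_\delta$ rather than picking one per row. For every even $N\geq 0$ we have $\binom N2\geq \frac N2$, with equality at $N=0,2$. Hence for each block pair
\[
\sum_{x,y}\binom{\Phi_\delta[x,y]}{2}\;\geq\;\frac12\sum_{x,y}\Phi_\delta[x,y]\;=\;\frac12\,L\,km,
\]
since $\Phi_\delta$ has row sums $L$ over $km$ rows. Equivalently, each row's even entries summing to $L$ already contribute at least $L/2$ four-cycles, not one or two. Summing over the $\binom J2$ block pairs via Lemma~\ref{lemma:girth} gives exactly \eqref{eq:zx-four-cycle-count}. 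The $H_Z$ statement then follows by permutation equivalence, as you say.
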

\begin{proof}
By Lemma~\ref{lemma:girth}, we have $t_4 = \sum_{a<b}\binom{N_{ab}}{2}$ with $N_{ab}$ the integer check-check overlap. For a fixed unordered pair $\{i,i'\}$ of distinct active block rows with $\delta = i - i'$, the pairs with one check in each are indexed bijectively by $(x,y)\in[km]^2$ with overlap $\Phi_\delta[x,y]$. Every row of $\Phi_\delta$ sums to $L$, since by~\eqref{eq:zx-gram-block} it is a sum of $L$ permutation matrices, so $\sum_{x,y}\Phi_\delta[x,y] = L\,mk$. Since $\binom{N}{2}\geq \frac N2$ for every even $N\geq 0$, and $\Phi_\delta$ has even entries because $\Phi_\delta \equiv \Psi_{\delta+\beta}t\equiv 0 \bmod 2$ by~\eqref{eq:phi-zero} and active orthogonality,
\begin{equation*}
    \sum_{x,y}\binom{\Phi_\delta[x,y]}{2}\geq \frac12\sum_{x,y}\Phi_\delta[x,y] =\frac L2mk.
\end{equation*}
Summing over the $\binom J2$ pairs gives \eqref{eq:zx-four-cycle-count}; for $J = \frac L4$ we have $\frac L2 = 2J$ and the bound reads $\frac{J(J-1)}2\cdot 2J\cdot mk = J^2(J-1)mk$. In particular $t_4>0$ as soon as $J\geq2$, so $\gth{\mathcal Q} = 4$. Finally $\tau$ is a permutation with $H_X\tau = H_Z$, so it carries the Tanner graph of $H_X$ isomorphically onto that of $H_Z$ and the same count holds on the $Z$ side.
\end{proof}

The two translation-type ZX dualities escape because they misalign the two index pairings.  Repeating the computation of Lemma~\ref{lem:zx-gram} for $r(i) = i+\beta$, where $G_v = F^T_{v-\beta}\cdot t$, gives instead
\begin{equation}\label{eq:zx-translation}
    \left(H_XH_X^T\right)_{i,i'} = \sum_u \left(F_uF^T_{u-\delta} + F^T_uF_{u-\delta}\right),
    \qquad
    \left(H_XH_Z^T\right)_{i,i'} = \left(\sum_u \left(F_uF^T_{(\delta-\beta)-u} + F^T_uF_{(\delta-\beta)-u}\right)\right)t .
\end{equation}
The two sums are no longer forced to agree term by term. Within the range covered by Proposition~\ref{prop:zx-girth4} --- $J\geq2$, and $J=\tfrac L4$ for $r_1r_2$ --- this leaves the translations $r\in\{e, r_1\}$ as the only ZX-dualities compatible with $\gth{\mathcal Q}\geq 6$. Indeed, certified dual instances of Table~\ref{tab:frontier-selfdual-good} confirms this, with every girth-$6$ entry carrying a translation.

\section{Logical Operations on GALA Codes}\label{sec:logic}

With the GALA construction, a class of automorphisms becomes very transparent: 

\begin{proposition}[GALA Automorphisms]
    Given a GALA code $\mathcal{Q} \in \mathrm{GALA}_{L,J}(H_k\ltimes C^k_m)$, and let $C_N\leq  H_k$ be the central subgroup of $H_k$. We have that

    \begin{equation}
        I_{kL}\times C_m \leq \mathrm{Aut}(\mathcal{Q}).
    \end{equation}
\end{proposition}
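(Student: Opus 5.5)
The plan is to realize each element $c\in C_m$ as the physical permutation $\sigma_c := I_{L}\otimes I_k\otimes M_C(c)$ in the direct-product case. In the semi-direct case I would use the diagonal element $\sigma_c := I_L\otimes M_\ltimes(e;(c,\dots,c)) = I_L\otimes I_k\otimes M_C(c)$, acting on the qubit index $[L]\times[k]\times[m]$. I would then show that $\sigma_c$ satisfies the virtuality condition of Fact~\ref{fact:aut-facts}, namely $\rho_X H_X = H_X\sigma_c$ and $\rho_Z H_Z = H_Z\sigma_c$ for explicit row permutations. This gives $\sigma_c\in\Aut(\mathcal Q)$, and $c\mapsto\sigma_c$ is an injective homomorphism, so the image is a copy of $C_m$ inside $\Aut(\mathcal Q)$ acting trivially on the $kL$ coordinate, which is the statement $I_{kL}\times C_m\le\Aut(\mathcal Q)$.

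The key step is a centrality computation. I would show that $D_c := I_k\otimes M_C(c) = M_\ltimes(e;(c,\dots,c))$ commutes with every lifted monomial $M_\ltimes(h;\mathbf a)$. Write $M_\ltimes(h;\mathbf a) = (\bigoplus_i M_C(a_i))(M_H(h)\otimes I_m)$. First, $D_c$ commutes with the block-diagonal factor because $C_m$ is abelian, so $M_C(c)M_C(a_i)=M_C(a_i)M_C(c)$ blockwise. Second, $D_c$ commutes with $M_H(h)\otimes I_m$ by the mixed-product rule, since $(I_k\otimes M_C(c))(M_H(h)\otimes I_m) = M_H(h)\otimes M_C(c) = (M_H(h)\otimes I_m)(I_k\otimes M_C(c))$. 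Equivalently, $D_c$ corresponds to $(e;\mathbf c)$ with constant $\mathbf c$, and the condition of Lemma~\ref{lemma:spk-commutivity} becomes $a_{e(i)}c = c\,a_i$, which is trivially true. By linearity, $D_c$ also commutes with every group-ring element, so the argument covers polynomial lifts too. This gives $F D'_c = D'_c F$ and $G D'_c = D'_c G$, where $D'_c = I_{L/2}\otimes D_c$, because $z_i\otimes(\cdot)$ commutes with $I_{L/2}\otimes D_c$. The same holds for $F^T$ and $G^T$, since $D_c^T = D_{c^{-1}}$ is again of this form.

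From these relations the row permutations are immediate. Take $\hat H_X\sigma_c = [F D'_c\,|\,G D'_c] = [D'_c F\,|\,D'_c G] = D'_c\hat H_X$, and likewise $\hat H_Z\sigma_c = D'_c\hat H_Z$. The row permutation $D'_c = I_{L/2}\otimes D_c$ preserves every block row, so it restricts to the first $J$ block rows. That gives $\rho_X=\rho_Z = I_J\otimes D_c$, with $H_X\sigma_c = \rho_X H_X$ and $H_Z\sigma_c=\rho_Z H_Z$. Hence the row spaces are preserved and $\sigma_c\in\Aut(\mathcal Q)$; in fact it is virtual in the sense of Fact~\ref{fact:aut-facts}. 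The homomorphism and injectivity claims follow because $c\mapsto M_C(c)$ is a faithful representation of the permutation group $C_m$.

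The main obstacle is the ordering convention in Eq.~\eqref{eq:spk-matrix} for the semi-direct case. Only the \emph{diagonal} copy of $C_m$ is central; a non-constant $\mathbf c$ fails to commute with $M_H(h)\otimes I_m$, because $h$ permutes the blocks. So I would check carefully that the diagonal element is exactly $I_k\otimes M_C(c)$ under the stated right-action convention. The stray remark about $C_N\le H_k$ is not needed for this inclusion, and I would not use it, noting that central elements of $H_k$ would extend the same argument.
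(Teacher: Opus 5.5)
Your proposal is correct and is essentially the paper's argument. The paper supports this inclusion only by the main-text remark that the diagonal copy of $C_m$ is central in $H_k\ltimes C_m^{\,k}$, and your computation that $I_k\otimes M_C(c)$ commutes with every lifted monomial, and hence with $F$, $G$, $F^T$ and $G^T$, is that centrality made explicit. The other details you supply are also correct: the induced row permutation $I_J\otimes I_k\otimes M_C(c)$ stays within each active block row, and the hypothesis on $C_N\leq H_k$ plays no role.
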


\begin{algorithm}[b]
\DontPrintSemicolon
\SetKwInOut{Input}{Input}
\SetKwInOut{Output}{Output}
\caption{Hypercube Basis Generation}\label{alg:hypercube-basis}
\Input{ $\mathcal{Q}\in \mathrm{GALA}_{L,J} (H_k\star (C_{m_1}\times ...\times C_{m_q}))$ with logicals $\mathcal{L}'_X = \{\bar X_i\},\ \mathcal{L}'_Z= \{\bar Z_i\}$.} 
\Output{
        $\tilde{\mathcal{L}}_X = \bigsqcup_i \tilde{\mathcal{L}}^i_X$,\
        $\tilde{\mathcal{L}}_Z = \bigsqcup_i \tilde{\mathcal{L}}^i_Z$} 
\BlankLine
$\tilde{\mathcal{L}}_X,\ \tilde{\mathcal{L}}_Z \leftarrow \emptyset$;\quad $i \leftarrow 0$\;
\While{$\mathcal{L}'_X \neq \emptyset$}{
    $(\bar X_i,\ \bar Z_i) \leftarrow \big(\mathcal{L}'_X.\texttt{pop()},\ \mathcal{L}'_Z.\texttt{pop()}\big)$;\quad
    $\tilde{\mathcal{L}}^i_X,\ \tilde{\mathcal{L}}^i_Z \leftarrow \emptyset$\;
    \ForEach(\tcp*[f]{sweep the $C_m$-orbit}){$\mathbf{j} = (j_1, \dots, j_q) \in [m_1] \times \cdots \times [m_q]$}{
        $A_{\mathbf{j}} \leftarrow I_{Lk} \otimes c_{m_1}^{j_1} \otimes \cdots \otimes c_{m_q}^{j_q}$\;
        $(\bar X_{i,\mathbf{j}},\ \bar Z_{i,\mathbf{j}}) \leftarrow
            (\bar X_i A_{\mathbf{j}},\ \bar Z_i A_{\mathbf{j}})$\;
    }
    $\tilde{\mathcal{L}}_X \leftarrow \tilde{\mathcal{L}}_X \sqcup \tilde{\mathcal{L}}^i_X$;\quad
    $\tilde{\mathcal{L}}_Z \leftarrow \tilde{\mathcal{L}}_Z \sqcup \tilde{\mathcal{L}}^i_Z$;\quad
    $i \leftarrow i + 1$\;
    \ForEach(\tcp*[f]{prune seeds absorbed by the new block}){pair $(X', Z') \in (\mathcal{L}'_X, \mathcal{L}'_Z)$}{
        \If{$X' \in \rs{\tilde{\mathcal{L}}_X \cup H_X}$ \textbf{and} $Z' \in \rs{\tilde{\mathcal{L}}_Z \cup H_Z}$}{
            $\mathcal{L}'_X \leftarrow \mathcal{L}'_X \setminus \{X'\}$;\quad
            $\mathcal{L}'_Z \leftarrow \mathcal{L}'_Z \setminus \{Z'\}$\;
        }
    }
}
\Return{$\tilde{\mathcal{L}}_X,\ \tilde{\mathcal{L}}_Z$}
\end{algorithm}

\begin{figure}[h]
    \centering
    \includegraphics[scale=0.95]{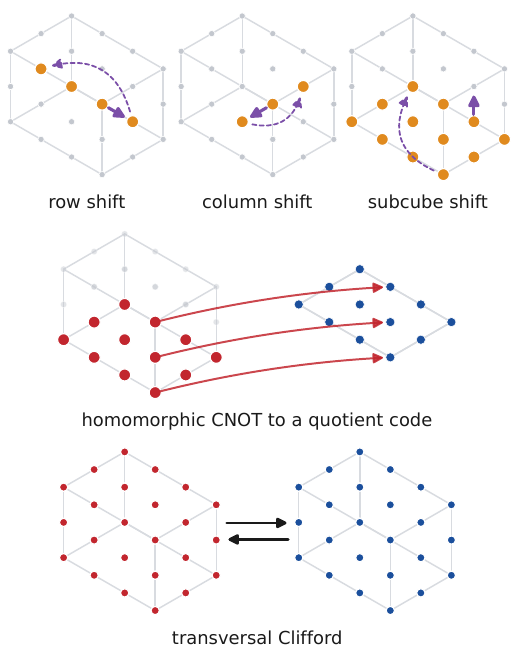}
    \caption{Logical operations on GALA codes, all compiled from the same primitives that implement error correction. The diagonal $C_m$ shift-automorphisms act as logical permutations on the shift-free sector, organizing it into hypercubes indexed by the abelian factors; collapsing a subgroup $N\le C_m$ gives a chain homomorphism onto the quotient code and hence a homomorphic CNOT/measurement gadget addressing the corresponding hypercube slabs in parallel; and for a ZX-dual code the fold $\tau=\iota\otimes t$ turns one transversal layer plus one rearrangement into a logical $H$ or $S$.}
    \label{fig:gate}
\end{figure}

Since these automorphism will preserve the space of logical operators, we can use them to discover new logical representatives on which the logical action have a clean description

\begin{proposition}[Hypercube Organization]\label{prop:hypercube-logical}
    Suppose $\mathcal{Q}\in \mathrm{GALA}_{J,L}(H_k\star C_m)$ is a GALA code with logical operators $L'_X, L'_Z$, and suppose $C_m = \mathbb Z_{p_1}\times ...\times\mathbb Z_{p_{q}}$ with $\gcd(p_i, p_j) = 1$ for $0<i< j\leq q$. We can efficiently find a set of over-complete logical operators $\tilde L_X, \tilde L_Z$ using algorithm~\ref{alg:hypercube-basis} that can be organized in disjoint sets of $q$-dimensional hypercubes:

    \begin{align}
        \tilde L_X &= \bigsqcup_{i = 1}^s \tilde L_X^i, \\
        \tilde L_Z &= \bigsqcup_{i = 1}^s \tilde L_Z^i;
    \end{align}

    such that the coordinate permutation $I_{kL}\otimes c\in C_m$ induces a logical permutation $c$ on each $\tilde L^i_X, \tilde L^i_Z$. 

    Furthermore, the constructed logical operators $\tilde L_X, \tilde L_Z$ preserve the weight of $L_X, L_Z$; that is:
    \begin{align}
        \min_{x\in L_X}(\wt x) &= \min_{\tilde x\in \tilde L_X}(\wt{\tilde {x}}),\\
        \min_{z\in L_Z}(\wt z) &= \min_{\tilde z\in \tilde L_Z}(\wt{\tilde{z}});
    \end{align}
\end{proposition}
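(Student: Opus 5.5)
The plan is to first establish that the diagonal shifts $A_{\mathbf j}=I_{Lk}\otimes c_{m_1}^{j_1}\otimes\cdots\otimes c_{m_q}^{j_q}$ are code automorphisms in the strong (virtual) sense of Fact~\ref{fact:aut-facts}. Then analyze Algorithm~\ref{alg:hypercube-basis} as an orbit-sweeping procedure. For the first step, I will show that $A_{\mathbf j}$ commutes with every lifted block. In the direct-product case, $(F_i\otimes f_i)(I_k\otimes c)=F_i\otimes f_ic=(I_k\otimes c)(F_i\otimes f_i)$ because $C_m$ is abelian. In the semi-direct case, the diagonal copy $I_k\otimes c$ commutes with $(\bigoplus_a M_C(f_{i,a}))(M_H(F_i)\otimes I_m)$ by the same abelianness in each block and because $M_H(F_i)\otimes I_m$ commutes with $I_k\otimes c$. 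Polynomial lifts follow by linearity. Hence $H_X A_{\mathbf j}=(I_J\otimes I_k\otimes c)H_X$, and similarly for $H_Z$, using that $A_{\mathbf j}$ is also block diagonal on the $L$ block columns and the $J$ block rows. So $\rho_X H_X=H_X A_{\mathbf j}$ with $\rho_X$ a row permutation, giving $\mathcal Q A_{\mathbf j}=\mathcal Q$. By Fact~\ref{fact:aut-facts}, the map $\mathbf j\mapsto\phi(A_{\mathbf j})$ is then a homomorphism from $C_m$ into the logical action group.

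Next, I will analyze the sweep. For a seed $\bar X_i$, the set $\tilde L^i_X=\{\bar X_iA_{\mathbf j}\}_{\mathbf j}$ is indexed by $\mathbf j\in\mathbb Z_{p_1}\times\cdots\times\mathbb Z_{p_q}$, which by the CRT is $C_m$ with $q$ independent cyclic coordinates; this is the $q$-dimensional hypercube (a discrete torus). Each element is a logical representative because $A_{\mathbf j}$ preserves $\ns{H_Z}$ and $\rs{H_X}$. Applying $A_{\mathbf c}$ sends $\bar X_iA_{\mathbf j}$ to $\bar X_iA_{\mathbf j+\mathbf c}$ exactly, not just modulo stabilizers, since the shifts form an abelian group. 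So the induced action on the index set of each $\tilde L^i$ is the translation by $c$, which is the claimed logical permutation. The same argument applies verbatim for $Z$. Disjointness of the union is bookkeeping: each hypercube is labeled by its seed index $i$. Over-completeness and termination follow from the pruning step. A seed is discarded only when it already lies in $\rs{\tilde L\cup H}$, and every processed seed is included at $\mathbf j=0$, so the final span contains all of $L'_X$ modulo stabilizers. The loop runs at most $k$ times, with $|C_m|$ sweeps each and one Gaussian elimination per prune, which is efficient.

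Finally, weight preservation holds because each $A_{\mathbf j}$ is a permutation. Therefore $\wt{\bar X_iA_{\mathbf j}}=\wt{\bar X_i}$, and the multiset of weights in $\tilde L_X$ is the set of seed weights that survive pruning, each repeated. The $\mathbf j=0$ element of the first hypercube is the first popped seed. The main obstacle is that pruning could discard a minimum-weight seed. I would handle this by ordering $L'_X$ by weight, so the first pop is a minimum-weight seed, which is never pruned, and its entire orbit has that same weight. The minima then agree, interpreting the statement's $L_X$ as the input $L'_X$. One further subtlety is that an orbit may be degenerate, with two shifts giving equal or stabilizer-equivalent representatives. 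This is why the statement says over-complete and a logical permutation of the listed set rather than a basis, and I would note this explicitly rather than try to prove freeness.
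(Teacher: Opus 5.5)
Your proposal is correct and takes the same route as the paper, whose proof is only the one-liner that Algorithm~\ref{alg:hypercube-basis} produces the hypercubes and that weight is preserved because the seeds are only transformed by permutations; you fill in what it leaves implicit, namely that each $A_{\mathbf j}$ commutes with every lifted block (so it is a virtual automorphism) and acts on each orbit by exact translation. You also rightly flag the pruning issue the paper ignores: a minimum-weight seed could be absorbed and discarded, and your fix of popping a minimum-weight seed first works. For the $Z$-side equality, however, you must likewise put a minimum-weight element first in $\mathcal L'_Z$, i.e.\ order the two lists independently rather than sorting partner pairs by $X$-weight; this is harmless, since none of the claimed properties use the partner pairing.
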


\begin{proof}
    We can find this set of basis using algorithm~\ref{alg:hypercube-basis}. Since the basis is only transformed using permutation, its weight is preserved. 
\end{proof}

\begin{corollary}(Bottom Logicals)
Given GALA code $\mathcal{Q}\in \mathrm{GALA}_{L,J}(H_k\times C_m)$. We can find weight $kL$ sector-disjoint logicals
\begin{align}
    L_X^{C_m} &= \bigsqcup_{p = 1} ^{\frac{L}{2} - J} L^p_X, \\
    L_Z^{C_m} &= \bigsqcup_{p = 1} ^{\frac{L}{2} - J} L^p_Z,
\end{align}
such that $\rank(L_X^{C_m}\backslash H_X) = \rank(L_Z^{C_m}\backslash H_Z) = Jm - k +1$ and $\supp{X_i} \cap \supp{X_j} = 0$ for all $X_i, X_j\in L^k_X$ if $\mathcal{Q}$ is monomial (and similarly for $L_Z^{C_m}$). 
\end{corollary}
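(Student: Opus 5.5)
The plan is to build these logicals in the bottom quotient code $\mathcal Q_C$ of Definition~\ref{def:quotient} and lift them to $\mathcal Q$ with Logical Inflation (Lemma~\ref{lem:inflation}), taking $\mu = k$.

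\textbf{Logicals of the bottom quotient.} Projecting away the non-abelian top makes every pair $f_i, g_j\in C_m$ commute. So the anticommutator sums $\Psi_r$ of~\eqref{eq:psi-r} vanish for \emph{every} offset $r$, and the bottom parent matrices satisfy $\hat H_X(\mathcal Q_C)\hat H_Z(\mathcal Q_C)^T = 0$. Each of the $\tfrac L2 - J$ latent block rows $p$ of $\hat H_X(\mathcal Q_C)$ consists of $m$ rows. In the monomial case each such row has weight exactly $L$ and commutes with all active $Z$-checks of $\mathcal Q_C$. By Fact~\ref{fact:augmentation}, each such row is therefore either an $X$-logical of $\mathcal Q_C$ or lies in the active row space.

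\textbf{Inflation to $\mathcal Q$.} Inflating these rows gives weight-$kL$ vectors in $\ns{H_Z(\mathcal Q)}$, by items (1) and (2) of Lemma~\ref{lem:inflation}. I define $L^p_X$ to be the set of $m$ inflated rows coming from latent block row $p$, and set $L_X^{C_m} = \bigsqcup_p L^p_X$.

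\textbf{Disjointness within each $L^p_X$.} Within one block row, every block is a permutation matrix $f_u$, so distinct rows of the same block row have disjoint supports. Inflation replaces each qubit $a$ by the fibre $\{a\}\times[k]$, which preserves disjointness. Hence the elements of each $L^p_X$ are pairwise support-disjoint. They also form a single $C_m$-orbit under the automorphisms $I_{kL}\otimes c$, which ties this construction to Proposition~\ref{prop:hypercube-logical}.

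\textbf{Rank count modulo stabilizers.} Lemma~\ref{lem:inflation}(3) only gives one direction, so I also need the converse: if an inflated vector lies in $\rs{H_X(\mathcal Q)}$, then the original vector lies in $\rs{H_X(\mathcal Q_C)}$. I will apply the orbit projector $\pr^H$ blockwise. It intertwines the lift with its bottom factor, so it sends $\rs{H_X(\mathcal Q)}$ into $\rs{H_X(\mathcal Q_C)}$. Composed with inflation it is multiplication by $k$. When $k$ is odd (for example $S_3$) this is the identity mod $2$, so inflation is injective on $\mathcal L_X(\mathcal Q_C)$. When $k$ is even I would instead argue directly on supports.

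The rank of $L^{C_m}_X$ modulo $\rs{H_X}$ then equals the rank of $\rs{\hat H_X(\mathcal Q_C)}$ modulo $\rs{H_X(\mathcal Q_C)}$. I will compute this in the commutative group algebra $\F_2[\Z_{\frac L2}\times C_m]$, by decomposing into isotypic (Fourier/CRT) components. The rank deficiencies come from the shared all-ones relations, the same mechanism as in Proposition~\ref{prop:rate-bound}, and should account for the correction term in the stated rank.

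The $Z$ sector is identical after exchanging $F\leftrightarrow G^T$, $G\leftrightarrow F^T$.

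\textbf{Main obstacle.} I expect the hard step to be this exact rank bookkeeping. It depends on the generic behaviour of the bottom polynomials, namely which cyclotomic components are killed by both the active and the latent rows. It may therefore turn out to be an inequality that holds only for generic choices of $f_i, g_i$, rather than an identity for every code.
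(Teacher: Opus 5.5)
Your construction matches the paper's one-line proof: take the latent block rows of the fully orthogonal bottom quotient $\mathcal Q_C$, inflate them by the top degree through Lemma~\ref{lem:inflation}, and get disjointness inside a sector from the permutation blocks. Your $\pr^H$ converse, that inflation is injective on classes for odd $k$, is a step the paper never supplies.

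The gap is the rank equality, and it cannot be closed as stated. The all-ones mechanism you appeal to produces a different correction from the one claimed. For each latent block row $p$, the $m$ rows partition the $Lm$ bottom qubits, so after inflation $\sum_{x\in L^p_X}x=\mathbf 1^{1\times n}$. This vector lies in $\rs{H_X}$, as the sum of the $km$ rows of any active block row. Each sector therefore contributes at most $m-1$ modulo stabilizers:
\[
\rank\big(L_X^{C_m}\bmod\rs{H_X}\big)\ \le\ \big(\tfrac L2-J\big)(m-1).
\]
For the paper's own $\mathrm{GALA}_{12,3}(S_3\times\Z_{32})$ instance this gives $93<94=Jm-k+1$.

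No formula that decreases in the top degree can be right either. The raw rank of the inflated rows equals that of the bottom latent rows for every $k$. For odd $k$, your injectivity gives the same for the rank modulo stabilizers. So the count depends only on $(L,J,m,\{f_i\},\{g_i\})$. Meanwhile the top can be swapped with the bottom data held fixed, because the active pattern lives entirely in $H_k$. Note also that there are $(\tfrac L2-J)m$ latent rows, which equals $Jm$ only at $J=\tfrac L4$.

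What you can prove is the displayed bound. For odd $k$, equality holds iff the latent rows of $\hat H_X(\mathcal Q_C)$ satisfy no relation modulo the active rows beyond the block-row sums. That is a genuine condition on the bottom polynomials, and your decomposition of $\F_2[\Z_{L/2}\times C_m]$ is the right tool to test it. This corrected count matches the rank $10=(\tfrac L2-J)(m-1)$ that the paper reports for the bottom chain of $[[132,30,12]]$. The paper's proof contains no rank argument at all, so the discrepancy lies in the statement, not in something you overlooked.

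Separately, your even-$k$ fallback, ``argue directly on supports'', is not yet an argument. For even $k$, $\pr^H$ composed with inflation is zero. Inflation is pullback along the degree-$k$ covering $\mathcal T(\mathcal Q)\to\mathcal T(\mathcal Q_C)$, and pullback along an even-degree covering can genuinely trivialize a class. For example, the generator of $H^1$ of a cycle graph pulls back to a coboundary on its connected double cover. So for tops such as $S_4$, even the non-triviality of the inflated latent rows needs a real proof or an added hypothesis such as $k$ odd.
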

\begin{proof}
    This follows exactly from Lemma~\ref{lem:inflation}, restated in the hypercube language, where each sector correspond to a latent row of cyclic matrices that each disjoint rows.
\end{proof}
\subsection{Dirty Shifts via Code Automorphisms}
The first class of logical operations we get are dirty cyclic shifts arising from the cyclic code automorphism group. These cyclic shifts are "dirty" because the hypercube logical basis is typically over complete.

\begin{proposition}[Dirty Cyclic Shifts]
Let $\mathcal{Q}\in \mathrm{GALA}_{L,J}(H_k\star C_m)$ and let \begin{align*}
            \tilde L_X &= \bigsqcup_{i = 1}^s \tilde L_X^i, \\
        \tilde L_Z &= \bigsqcup_{i = 1}^s \tilde L_Z^i,
\end{align*}
be the hypercube logical basis as in ~\ref{prop:hypercube-logical}. Then, the physical action of $c\in I_L\times I_k\times C_m$ induces the logical cyclic shift $\bar c \in C_m$ on each sector of the hypercube logicals. That is,
\begin{equation}
     L_X^i\cdot I_{kL}\times c = c\cdot  L_X^i;\quad 
     L_Z^i\cdot I_{kL}\times c = c\cdot  L_Z^i.
\end{equation}
for all $i\in [s]$.
\end{proposition}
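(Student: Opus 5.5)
The plan is in three steps. First, show that the diagonal translations $I_{kL}\otimes c$ are code automorphisms. Second, compute their action on the hypercube blocks produced by Algorithm~\ref{alg:hypercube-basis}. Third, explain why the action is only ``dirty'' once we pass to a basis.

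\emph{Step 1: the translations are automorphisms.} This is the GALA Automorphisms proposition. For a direct product, every lifted block is $z_i\otimes h\otimes f$. Since $C_m$ is abelian, $(I\otimes I\otimes c)$ commutes with each such block. Hence $H_X(I_{kL}\otimes c)=(I_J\otimes I_k\otimes c)H_X$, and the same holds for $H_Z$.

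For the semidirect case, write each block as $M_\ltimes(h;\mathbf f)=(\bigoplus_i f_i)(h\otimes I_m)$. The diagonal element $I_k\otimes c$ commutes with $\bigoplus_i f_i$ because $C_m$ is abelian. It commutes with $h\otimes I_m$ by the mixed-product rule. So in both cases $\sigma_c:=I_{kL}\otimes c$ satisfies the virtuality condition of Fact~\ref{fact:aut-facts}, with row permutation $\rho=I_J\otimes I_k\otimes c$. By Definition~\ref{def:code-automorphism} it therefore preserves $\ns{H_Z}$ and $\rs{H_X}$, and descends to the logical spaces.

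\emph{Step 2: the action on each block.} Fix a hypercube block $\tilde L^i_X=\{\bar X_i A_{\mathbf j}\}_{\mathbf j}$, where $A_{\mathbf j}=I_{Lk}\otimes c_{m_1}^{j_1}\otimes\cdots\otimes c_{m_q}^{j_q}$. Take $c=c_{m_1}^{a_1}\otimes\cdots\otimes c_{m_q}^{a_q}$. Because $C_m$ is abelian and the $A_{\mathbf j}$ form a regular representation of $C_m$,
\[
\bar X_i A_{\mathbf j}\,(I_{kL}\otimes c)=\bar X_i A_{\mathbf j+\mathbf a}.
\]
So right multiplication by $\sigma_c$ sends the row labelled $\mathbf j$ to the row labelled $\mathbf j+\mathbf a$. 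This is exactly the permutation matrix of $c$ acting on the index set $[m_1]\times\cdots\times[m_q]$ of the block, which gives $L^i_X\,\sigma_c=c\cdot L^i_X$.

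The identity holds at the level of representatives, not just modulo $\rs{H_X}$, since no stabilizer is added. The $Z$ sector is identical. The blocks are disjoint and each is mapped into itself, so the statement holds for every $i\in[s]$ simultaneously.

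\emph{Step 3: why the shift is dirty.} The only subtle point is the meaning of ``induces a logical cyclic shift'' when $\tilde L_X$ is overcomplete. The rows of distinct blocks, or even within one block, may be linearly dependent modulo $\rs{H_X}$. In that case the pair $(g_X,g_Z)$ of Definition~\ref{def:code-automorphism} is not determined by the block action alone.

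I would state the result as an equality of representative matrices, which is what the displayed equation asserts. I would then note that any canonical basis extracted from $\tilde L_X$ sees $\sigma_c$ as a cyclic shift composed with the linear relations among the rows. This is why the shift is called ``dirty'', and no further claim about $g_X$ needs to be proved.

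The main obstacle I expect is the semidirect case in Step 1. I need to confirm that the diagonal $C_m$ commutes with the block-permutation factor $h\otimes I_m$, and not merely with the base group. If this commutation fails for some convention of $M_\ltimes$, I would restrict the claim to the restricted-SPG or direct-product case.
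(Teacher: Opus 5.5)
Your proof is correct and follows the paper's route. The paper's one-line argument is your Step~2: each block $\tilde L^i_X$ is the orbit of a seed under the translations $A_{\mathbf j}$, so right multiplication by $I_{kL}\otimes c$ only re-indexes $\mathbf j\mapsto\mathbf j+\mathbf a$. Your Step~1 supplies the automorphism property that the paper cites from its GALA Automorphisms proposition without proof.

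The semidirect obstacle you flag does not arise. Your own computation $(h\otimes I_m)(I_k\otimes c)=h\otimes c=(I_k\otimes c)(h\otimes I_m)$ already shows that the diagonal $C_m$ is central in $H_k\ltimes C_m^{k}$. So no restriction to the restricted-SPG or direct-product case is needed.
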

\begin{proof}
By algorithm~\ref{alg:hypercube-basis}, each hypercube are generated from the orbit of a representative basis, which gives the desired cyclic shift structure.
\end{proof}
\subsection{Transversal CNOT Gates via Chain Homomorphisms}
\begin{proposition}[Quotient Homomorphisms]\label{prop:q-homo}
Let $\mathcal{Q}\in \mathrm{GALA}_{L,J}(H_k\star C_m)$ with normal subgroup $N$ and let $\mathcal{Q}_N$ its quotient code by collapsing $N$. Let $\mathcal{Q}: Z\xrightarrow{H_Z^T} Q\xrightarrow{H_X} X$ and $\mathcal{Q}_N: Z\xrightarrow{(H'_Z)^T} Q\xrightarrow{H'_X} X$ be the associated chain complexes. We have that

\begin{equation}
    \begin{tikzcd}[ampersand replacement=\&]
        {Z} \& {Q} \& {X} \\
        {Z'} \& {Q'} \& {X'}
        \arrow["{H_Z^T}", from=1-1, to=1-2]
        \arrow["{H_X}", from=1-2, to=1-3]
        \arrow["{H_Z'^T}", from=2-1, to=2-2]
        \arrow["{H_X'}", from=2-2, to=2-3]
        \arrow["{\gamma_Z}", from=1-1, to=2-1]
        \arrow["{\gamma_Q}", from=1-2, to=2-2]
        \arrow["{\gamma_X}", from=1-3, to=2-3]
    \end{tikzcd}
\end{equation}
commutes for maps $\gamma_Q = I_L\otimes\pr_N$ and $\gamma_X = \gamma_Z = I_J\otimes\pr_N$, where $\pr_N$ is the orbit-sum matrix of Section~\ref{ssec:notations}, with one row per $N$-orbit.
\end{proposition}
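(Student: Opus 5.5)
The plan is to verify the two commuting squares of \eqref{eq:chain-map-squares} block by block, using the intertwining property of the orbit projector $\pr_N$. The key identity is that for every lift element $g$ of $H_k\star C_m$ there is an element $\bar g$ of the quotient group $(H_k\star C_m)/N$, acting on the $N$-orbits, with
\[
\pr_N\, M(g) = M(\bar g)\,\pr_N .
\]
I will prove this first. Because $N$ is normal, $g$ maps $N$-orbits to $N$-orbits: $g(N x) = N g(x)$. Hence summing the coordinates of $M(g)$ over each orbit gives the permutation of orbits induced by $\bar g$.

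For the direct product this is exactly the displayed relations $\pr_C M_\times(h,c) = M_H(h)\pr_C$ and $\pr^H M_\times(h,c) = M_C(c)\pr^H$ from Section~\ref{ssec:notations}. For the semidirect product with $N = C_m^k$ (or the diagonal $C_m$), it follows from the block form \eqref{eq:spk-matrix}. Each block $M_C(c_i)$ is a permutation, so $\mathbf 1^{1\times m}M_C(c_i) = \mathbf 1^{1\times m}$, and the remaining factor $M_H(h)\otimes I_m$ becomes $M_H(h)$.

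Next, I extend the identity linearly to the group ring, which also covers polynomial lifts. Tensoring with the $z_i$ factor, which $\pr_N$ does not touch, gives blockwise
\[
(I_J\otimes \pr_N)\,H_X = H'_X\,(I_L\otimes \pr_N).
\]
Here $H'_X$ is built from the reduced generators $\bar F_i,\bar G_i$, i.e.\ the quotient code $\mathcal Q_N$ of Definition~\ref{def:quotient}. The only bookkeeping is that the block $(r,a)$ of $H_X$ is $F_{a-r}$ or $G_{a-r}$, and the projector acts identically on every block column. This gives the right square with $\gamma_X = I_J\otimes\pr_N$ and $\gamma_Q = I_L\otimes\pr_N$, where matrices act on column vectors as in the chain complex.

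For the left square I need $\gamma_Q H_Z^T = H_Z'^T\gamma_Z$. The blocks of $H_Z^T$ are $G_{i-j}$ and $F_{i-j}$, with the transposes undone. So this again reduces to $\pr_N M(g) = M(\bar g)\pr_N$, now applied to the untransposed generators, with the block structure of $\hat H_Z^T$ indexed by $L$ rows and $J$ columns. I will write out one block entry explicitly to confirm that the indices $i-j$ match between $\mathcal Q$ and $\mathcal Q_N$.

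The main obstacle is conventions rather than substance. The paper uses right actions $e_iP_g = e_{g(i)}$ together with column-vector chain maps, and $\pr_N$ appears both as a left multiplier and, after transposition, as a right multiplier. I must check that $\pr_N M(g) = M(\bar g)\pr_N$ is the correct-sided identity, rather than $M(g)\pr_N^T = \pr_N^T M(\bar g)$. I will also need the transposed identity $\pr_N M(g)^T = M(\bar g)^T\pr_N$, which holds because $g^{-1}$ also preserves the orbit partition.

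If the sides turn out mismatched, I will instead take $\gamma = \pr_N^T$-type maps, namely the inflation of Lemma~\ref{lem:inflation}. Note also that normality is genuinely needed in the semidirect case: for a non-normal subgroup, $g$ need not permute the orbits, and the identity fails.
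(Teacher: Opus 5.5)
Your proposal is correct and follows essentially the same route as the paper. The paper's proof likewise reduces everything to the intertwining identity $\pr_N M_g = M_{\bar g}\pr_N$: it derives this from normality ($g(Nx)=Ng(x)$) by an entrywise computation under the right-action convention, then extends it by linearity to polynomial lifts. Your explicit blockwise check of both squares (with the transposes in $H_Z^T$ undone) spells out what the paper leaves implicit, and it shows the sides already match. So neither the inflation fallback nor the transposed identity is actually needed.
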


\begin{proof}
Everything reduces to one identity. Because $N$ is normal, every $g$ in the lift group carries $N$-orbits to $N$-orbits, $g(Nx) = N g(x)$, so it induces a permutation $\bar g$ of the $km/\mu$ orbits. Reading $[\pr_N]_{O,x} = 1$ exactly when $x\in O$, and using the right-action convention $e_xM_g = e_{g(x)}$ of Definition~\ref{def:mtx-perm},
\[
    \big[\pr_N M_g\big]_{O,y} = \sum_{x\in O}\delta_{y = g(x)}
    = \begin{cases}1, & y\in g(O),\\ 0,&\text{else,}\end{cases}
    \qquad
    \big[M_{\bar g}\pr_N\big]_{O,y} = [\pr_N]_{\bar g(O),y},
\]
and these agree since $\bar g(O) = g(O)$. Hence
\begin{equation}\label{eq:orbit-intertwine}
    \pr_N M_g = M_{\bar g}\,\pr_N \qquad\text{for every } g,
\end{equation}
and by linearity the same holds for any $\mathbb F_2$-sum of group elements, so the argument covers polynomial lifts as well as monomial ones.
\end{proof}
\begin{proposition}[tCNOTs]

Let $\mathcal{Q}\in \mathrm{GALA}_{L,J}(H_k\star C_m)$ and let \begin{align*}
            \tilde L_X &= \bigsqcup_{i = 1}^s \tilde L_X^i, \\
        \tilde L_Z &= \bigsqcup_{i = 1}^s \tilde L_Z^i,
\end{align*}
be the hypercube logical basis as in ~\ref{prop:hypercube-logical} and suppose each logical $X^i_{j_1,...,j_q}\in  L^i_X$ is indexed by its $q$ coordinates in the hypercube. Then,  let $N \leq C_m = \prod_{i\in \mathcal{N}}\mathbb Z_{i}$ be a subgroup of $C_m$ consisting of the factors $\mathcal{N}\subset \{p_1,...,p_q\}$; there exists a homomorphic measurement gadget that measures 
\begin{align*}
    \bigsqcup_{i = 1}^s \bigsqcup_{\textbf{p}\in \mathbb F_2^{[q]-\mathcal{N}}}\prod_{p_N\in \mathbb F_2^{\mathcal{N}}} X_{\textbf{p}, \textbf{p}_N}
\end{align*}
in parallel. The fault distance of this operation is $d_{\mathcal{Q_N}}$.
\end{proposition}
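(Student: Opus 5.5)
The plan is to build the gadget directly from the quotient chain map of Proposition~\ref{prop:q-homo}, read off its logical action with Definition~\ref{def:homomorphic-cnot}, and then use the hypercube structure to identify the measured products. Throughout, $\mathcal Q_N$ is the code obtained by collapsing $N\le C_m$, and $\gamma = (I_J\otimes\pr_N,\ I_L\otimes\pr_N,\ I_J\otimes\pr_N)$ is the chain homomorphism $\mathcal Q\to\mathcal Q_N$.

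\textbf{Step 1: the gadget.} Choose $\mathcal Q_N$ as the ancilla block and $\gamma_q = I_L\otimes\pr_N^T$, written as an $n\times n_N$ CNOT pattern, as the transversal CNOT layer. Prepare the ancilla logicals in the appropriate basis. For the $X$-type measurement we reverse control, as described after Definition~\ref{def:homomorphic-cnot}, so the physical CNOTs are controlled on $\mathcal Q_N$ and target the qubits of each $N$-orbit. Then measure $\mathcal Q_N$ transversally. The measured products are named by the columns of $M = L_X\gamma_q L'^T_Z$.

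\textbf{Step 2: computing $M$.} The key identity is that the dual map, which is inflation by $\pr_N^T$, sends a logical $\ell'$ of $\mathcal Q_N$ to $\ell'\Pi^T$. By Lemma~\ref{lem:inflation}, generalized from a tensor factor to a subgroup $N$ of $C_m$ by the same orbit-sum computation, this inflated vector is invariant under every shift $I_{kL}\otimes c$ with $c\in N$.

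Pair this with a hypercube representative $X^i_{\mathbf p,\mathbf p_N}$. Since the representatives inside one hypercube are the $C_m$-translates of a single seed (Algorithm~\ref{alg:hypercube-basis}), projecting a seed's translate through $\pr_N$ gives the same result for every $\mathbf p_N$. Hence the pairing with an ancilla logical depends only on $(i,\mathbf p)$.

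It follows that each ancilla logical detects exactly the product $\prod_{\mathbf p_N} X_{\mathbf p,\mathbf p_N}$ over an $N$-slab of hypercube $i$. The different slabs $(i,\mathbf p)$ are measured in parallel by distinct ancilla logicals.

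To make this rigorous I would pick the ancilla basis $L'_Z$ to be the $\pr_N$-images of the seeds' duals. I would then check the orthogonality $M$-pattern slab by slab, using the equivariance $\pr_N M_g = M_{\bar g}\pr_N$ from~\eqref{eq:orbit-intertwine}.

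\textbf{Step 3: fault distance.} The fault distance bound follows the standard homomorphic-measurement argument of~\cite{huang2022homomorphiclogicalmeasurements}. Data-block errors are protected with distance $d_{\mathcal Q}\ge d_{\mathcal Q_N}$. The reverse inequality holds because every ancilla logical inflates to a data logical by Lemma~\ref{lem:inflation}, so that direction does not limit the gadget. Measurement outcomes are decoded in $\mathcal Q_N$, so an undetected logical measurement flip requires a logical error of $\mathcal Q_N$. Meanwhile, errors propagated through the CNOTs map via $\gamma_q$, which has column weight $|N|$ but preserves syndromes up to $\gamma_x$. The resulting fault distance is therefore $\min(d_{\mathcal Q}, d_{\mathcal Q_N}) = d_{\mathcal Q_N}$.

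\textbf{Main obstacle.} I expect the hard part to be Step 2. The hypercube basis is overcomplete ("dirty"), so $M$ is not obviously a clean slab-indicator matrix. I would first restrict to a maximal independent subset of seeds. Then, using that the $C_m$-orbit of a seed yields $C_m$-equivariant representatives, I would show that $\pr_N$ of a slab sum is nonzero while $\pr_N$ of the individual differences within a slab is a stabilizer of $\mathcal Q_N$. Failing that, I would state the measured set as the claimed slab products modulo the linear relations of the overcomplete basis.
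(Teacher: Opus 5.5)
Your overall route matches the paper's: its proof consists only of invoking the quotient chain map of Proposition~\ref{prop:q-homo} as a homomorphic-measurement gadget. The details you add in Steps~2 and~3, however, do not go through as written.

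\textbf{Step 2.} With controls on $\mathcal Q_N$ and CNOT pattern $\Pi=I_L\otimes\pr_N^T$, an ancilla operator $X(v)$ is copied to $X(v\Pi^T)$ on the data. The gadget therefore measures exactly the group generated by $\{X(v\Pi^T): v\in\ns{H_Z(\mathcal Q_N)}\}$.

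The matrix $M=L_X\gamma_q L'^T_Z$ you compute is the pattern of the data-controlled $Z$-type gadget. Its constancy along $N$-slabs only records how the hypercube $X$'s commute with inflated ancilla $Z$-logicals. It does not imply that any ancilla logical measures $\prod_{\mathbf p_N}X_{\mathbf p,\mathbf p_N}$. Reading products off $M$ would also need a dual frame, which the overcomplete family lacks.

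The missing identity is elementary. Since $N$ acts freely, $\sum_{c\in N}x\,(I_{kL}\otimes c)=x\,\Pi\,\Pi^T$, so each slab product is the inflation of $\pr_N$ applied to a single member $x$ of the slab. Applying~\eqref{eq:orbit-intertwine} to $g^{-1}$ gives $H_Z(\mathcal Q_N)\,\Pi^T=(I_J\otimes\pr_N)H_Z$, and transposing gives $H_X\Pi=(I_J\otimes\pr_N^T)H_X(\mathcal Q_N)$. Hence $x\Pi\in\ns{H_Z(\mathcal Q_N)}$, and every slab product lies in the measured group.

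For the converse, note that $\Pi^T\Pi=|N|\,I$. When $|N|$ is odd, each measured $v\Pi^T$ equals $v\Pi^T\Pi\Pi^T$. Expanding $v\Pi^T$ in the hypercube family modulo $\rs{H_X}$ then writes it as a sum of slab products plus a stabilizer. No basis or dual frame is needed, which removes your ``main obstacle.''

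Your proposed fix projects the wrong object. $\pr_N$ of a slab sum is $|N|\,x\Pi$, which vanishes identically when $|N|$ is even; the within-slab differences project to exactly zero, so that half of your fix carries no information. For even $|N|$ the converse inclusion is no longer automatic. The statement should therefore either assume $|N|$ odd or be read as ``measures the inflations of $\mathcal L_X(\mathcal Q_N)$.''

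\textbf{Step 3.} You assert $d_{\mathcal Q}\ge d_{\mathcal Q_N}$ without proof, and nothing you cite gives it. $\pr_N$ induces a map $\mathcal L_X(\mathcal Q)\to\mathcal L_X(\mathcal Q_N)$ that has a nonzero kernel whenever $\dim\mathcal L_X(\mathcal Q)>\dim\mathcal L_X(\mathcal Q_N)$. Nothing bounds the weight of logicals in that kernel below by $d_{\mathcal Q_N}$.

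Your ``reverse inequality'' sentence does not help either. Inflation yields $d_{\mathcal Q}\le|N|\,d_{\mathcal Q_N}$ (Corollary~\ref{cor:quotient-distance}), not a comparison with $d_{\mathcal Q_N}$ itself. Since a weight-$d_{\mathcal Q}$ data logical error defeats any gadget on $\mathcal Q$, the claimed value $d_{\mathcal Q_N}$ genuinely requires $d_{\mathcal Q}\ge d_{\mathcal Q_N}$ as an extra hypothesis.

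You also leave fan-out uncontrolled. Each ancilla qubit controls $|N|$ CNOTs, so a single ancilla $X$ fault becomes an $|N|$-qubit data error that the ancilla's $X$ readout cannot see. A fault-distance bound must rule out ancilla faults $e$ and data faults $f$ for which $e\Pi^T+f$ is a nontrivial, unmeasured data logical with $\wt{e}+\wt{f}<d_{\mathcal Q}$. The remark ``preserves syndromes up to $\gamma_x$'' does not do this.

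The paper's one-line proof does not supply this argument either, and its main text notes that the quotient is not distance-preserving in general. So the fault-distance clause needs its own proof or an added hypothesis.
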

\begin{proof}
    Follows from the quotient homomorphisms~\ref{prop:q-homo} of factors of the bottom code.
\end{proof}
\section{Physical Implementations on RNAA}\label{sec:hardware}
In this section, we make a case for GALA codes in comparison to the original construction using APMs. Firstly, similar to APMs, product-groups also allow us to design active orthogonality, and with more simplicity afforded by compressing commutativity conditions to a small $H_k$ factor. Secondly, similar to the reference APM approach taken in the followup co-design work, GALA also restrict the search space to AOD compatible ones, and more controllably so: while the reference APMs with desirable orbit structures is neither prescriptive nor does it always ensure all transitions are rigid global cyclic shifts, such properties for GALA codes (especially DPG and SPG-restricted codes) follows directly from the abelian group $C_m$, with group elements that translate directly to movement schedules.

\subsection{Hardware Compatibility by Construction}\label{ssec:matching-hardware}

 We ground our claims that all GALA codes make efficient use of crossed-AODs by constructing an atom layout and move sequence for their syndrome extraction circuit, building off of atom layouts for previous codes \cite{zhao2026ultrahighratequantumerrorcorrection}. In GALA codes, the order of qubits as specified in the check matrices maps directly to the locations of neutral atoms in a 2D layout in the quantum computer and block permutations correspond directly to row and column permutations within those blocks of atoms. We obtain an overall fast elapsed time for syndrome extraction because these required permutations correspond to ``easy'' atom moves.

To understand what an ``easy'' atom move is on neutral atom quantum computers we need to understand AODs. An Acousto-optic deflector is a dynamically controllable diffraction grating that allows us to copy one laser beam into multiple beams with arbitrarily-controllable positions along a line. By combining two AOD devices at 90 degrees, one oriented to the x-axis and to the y-axis, we obtain a crossed-AOD which copies a laser beam to a set of x-coordinates and then copies each of those to a set of y-coordinates, producing a grid of dynamically movable beams with the restriction that they stay in an irregular grid \textit{and} those grid sites cannot cross during moves. Each beam is a trap which can pick up and move atoms to arbitrary locations in a 2D grid. Because we cannot collide atoms during moves, when we do cyclic permutations we must either move in two sequential steps or use two sets of crossed-AOD devices to avoid conflicts. We use either two or four crossed-AODs for our estimates here.

We now describe the layout in 2D of the data and ancilla atoms during each step of syndrome extraction. We refer to \textit{blocks} of $P$ data or ancilla atoms as groups of atoms that we arrange, move, and permute as a group. For a GALA code with parameters $L$ and $J$, where $L$ is typically 8 or 12 and $J\ge L/2$, there are $L$ blocks of data qubits, $J$ blocks of X ancilla and $J$ blocks of Z ancilla. We use a 2D arrangement of sites identical to \cite{zhao2026ultrahighratequantumerrorcorrection} with placement of $L+J$ pairs of blocks vertically but we do some moves more optimally. 

A key choice when laying out is how to split $P$ into rows and columns. For example, if the GALA code uses group $S2\times S3\times S14$ with $P=2*3*14=84$, multiple layouts are valid but will have slightly different distances needed to move atoms which impacts the total time.  We have the following:

\begin{proposition}[Regular Movements]\label{prop:regular-moves}
    Let $\mathcal Q\in \mathrm{GALA}_{L,J, \star}(H_k\ltimes_{\mathcal{R}} C_m)$ be a restricted semi-direct product GALA code, and suppose $C_p\times C_q = C_m$. There is a natural layout where the data qubits are arranged in a $kp\times Lq$ array and the check qubits are arranged in a $kp\times Jq$ arrays such that syndrome extraction is implementable over $L$ rounds of independent row swaps $\sigma_r$ and column swaps $\sigma_c$ on the check qubits. In particular, if $i\in \mathcal{R}$, $F_i = (h_i, c_i\otimes c'_i)\in H_k\times C_p\times C_q$ ($\mathcal{R} = [L]$ for DPK constructions), the shifts are given by 

    \begin{align*}
        \sigma_{r,i} &= h_{i}h_{i-1}^T\otimes c_ic_{i-1}^T,\\
        \sigma_{c,i} &= I_J\otimes  c'_i{c_{i-1}'}^T.
    \end{align*}

    which can be efficiently implemented using AOD compatible moves. Otherwise, for $i\neq \mathcal{R}$, $F_i = (h_i; c_{i0}\otimes c'_{i0}, ..., c_{ik}\otimes c'_{ik})\in H_k\star(C_p\times C_q)$, the moves are
    \begin{align*}
        \sigma_{r,i} &= h_{i}h_{i-1}^T\otimes I_p \cdot c_{i0} c_{i-1,0}^T\oplus ...\oplus c_{ik} c_{i-1,k}^T,\\
        \sigma_{c,i} &= c'_{i0} {c'_{i-1,0}}^T\oplus ...\oplus c'_{ik} {c'_{i-1,k}}^T.\\
    \end{align*}
    \end{proposition}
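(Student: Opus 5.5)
The plan is to fix the layout explicitly and then read off the transition permutation between consecutive rounds as a product of a row permutation and a column permutation of the check array. First we label the qubits. Writing $C_m = C_p\times C_q$, a data qubit of block column $a\in[L]$ is indexed by $(a,x,y,z)$ with $x\in[k]$ the $H_k$ coordinate, $y\in[p]$ and $z\in[q]$. We place it at row $(x,y)\in[k]\times[p]$ and column $(a,z)\in[L]\times[q]$. Checks of block row $r\in[J]$ go to row $(x,y)$ and column $(r,z)$ of a $kp\times Jq$ array. With these labels, $e_{(x,y,z)}M_\times(h,c\otimes c') = e_{(h(x),c(y),c'(z))}$, so a monomial generator acts on the row coordinate by $h\otimes c$ and on the column coordinate by $c'$, and the two actions are independent. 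In the semidirect case, Definition~\ref{def:semidirect-product} gives instead $(x,y,z)\mapsto(h(x),c_x(y),c'_x(z))$.

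Next we fix the schedule. Round $i$ touches, for every block row $r$ simultaneously, the $i$-th term of the block circulant: for $r\in[J]$ the check $(r,\cdot)$ is paired with data block column $r+i$ (mod $\tfrac L2$ within the $F$ or $G$ half), via the generator $F_i$ (respectively $G_i$). So in round $i$ the check at $(x,y,z)$ must sit next to the data atom at $(h_i(x),c_i(y),c'_i(z))$ in block column $r+i$. The block column offset is a rigid translation of the whole $J$-block strip of ancillas by one data block per round, which is the same for all $r$ and is therefore a single global column translation.

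Within blocks, the rearrangement from round $i-1$ to round $i$ is $M_{F_{i-1}}^{-1}M_{F_i} = M_{F_{i-1}}^TM_{F_i}$. In the direct or restricted case ($i\in\mathcal R$), the mixed-product property factors this into $(h_{i-1}^Th_i\otimes c_{i-1}^Tc_i)\otimes c'^T_{i-1}c'_i$, matching the stated $\sigma_{r,i}$ and $\sigma_{c,i}$ up to the order and transpose convention of Definition~\ref{def:mtx-perm}; the abelian factors commute, so their order is immaterial. The row part is a single permutation of the $kp$ rows applied identically in every block, so it is one AOD row move. The column part is a cyclic shift $c'$ of the $q$ columns inside each block, repeated across all $J$ blocks, which gives the factor $I_J\otimes(\cdot)$ and is one AOD column move. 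For $i\notin\mathcal R$ we repeat the computation using the block form \eqref{eq:spk-matrix}. The block-diagonal bottom factor acts row-block by row-block: for each top index $x$ it applies $c_{i,x}\otimes c'_{i,x}$. This produces the direct-sum expressions. The row part is the top permutation tensored with $I_p$, composed with per-$x$ shifts of the $p$ sub-rows. The column part is a per-$x$ shift, so different row bands translate by different amounts.

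The main obstacle is justifying that these permutations are ``AOD compatible'' and not merely well-defined. The row permutation $h_ih_{i-1}^T$ is an arbitrary element of $H_k$, not necessarily a cyclic shift, and in the unrestricted SPG case the column shifts differ between row bands, which a single crossed AOD cannot do at once. I would handle this as follows. Every permutation here is a product of cyclic translations on $\mathbb Z_p$ or $\mathbb Z_q$ and a permutation of only $k$ row bands, with $k$ small (e.g.\ $3$). A cyclic translation splits into two non-crossing rigid sweeps (the wrap-around part and the rest), which can run in parallel using two crossed AODs or in sequence using one. The small top permutation can be done band by band, and in the SPG case the $k$ distinct column shifts are run sequentially, one per band. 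This gives a constant number of AOD moves per round, independent of $m$, which is the sense in which I would claim efficiency.
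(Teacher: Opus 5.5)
Your proposal is correct and is essentially the paper's argument written out in full. The paper's one-line proof (``follows from the CRT'') rests on the same Kronecker factorization of each transition into an $H_k\times C_p$ row permutation and a $C_q$ column permutation. The CRT enters only to identify the row and column factors with cyclic groups (e.g.\ $C_3\times C_{14}\cong\mathbb Z_{42}$ for the $42$ columns of $[[672,336,12]]$), which is what your phrase ``a cyclic shift of the $q$ columns'' tacitly assumes.

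Your AOD discussion (two non-crossing sweeps per cyclic translation, band-by-band column shifts for $i\notin\mathcal R$) goes beyond the written proof and matches the paper's hardware section and its parallel-versus-serial classification. The one thing to tidy is that the block-level offset $r\mapsto r+i$ wraps modulo $\tfrac L2$, so it is a cyclic rather than rigid translation and needs the same two-sweep treatment.
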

    Proof follows from the CRT. Note picking $H_k\star C_m = S_3\times \mathbb Z_{32}$ and $S_3\times \mathbb Z_2\times \mathbb Z_{32} $ recovers the optimal moves for the $P=96$ and $P=192$ codes from \cite{zhao2026ultrahighratequantumerrorcorrection}.

In this factorization, any non-commuting factor $H_k$ becomes (part of) the row permutation to reduce the chance for conflicting AOD moves. All factorizations of the groups into row and column groups are valid so for $[[672,336,12]]$ we choose 2 rows and 42 columns and for $[[132,30,12]]$ we choose 1 row and 11 columns per block.

The purpose of atom permutations is to rearrange the qubits so that we can apply a layer of parallel CZ gates between all paired atoms. $F_i$ and $G_i$ specify how to arrange the ancilla relative to the data atoms therefore $F_j F^{-1}_i$, $G_j F^{-1}_i$, $G_j G^{-1}_i$, and $F_j G^{-1}_i$ specify the changes to the previous permutation to obtain the permutation for the next CZ layer. A key insight that improves our move parallelism is the observation that the Z check matrix contains the inverse block permutations of the X check matrix so to do a permutation for the next X check and the next Z check in parallel we permute the ancilla needed for the non-inverse permutation but permute \textit{half of the data} the same amount and by permuting the data instead of the ancilla, we obtain the inverse \textit{relative permutation} between the Z ancilla and the data.

With this arrangement, we choose the permutation sequence that minimizes move time (sequence $F_0$, $F_3$, $F_4$, $F_5$, $F_2$, $F_1$, $G_1$, $G_4$, $G_2$, $G_5$, $G_3$, $G_0$ for $[[132,30,12]]$ and sequence $F_0$, $F_2$, $F_{3a}$, $F_{3b}$, $F_{1a}$, $F_{1b}$, $G_{1b}$, $G_{1a}$, $G_{3b}$, $G_{3a}$, $G_{2}$, $G_{0}$ for $[[672,336,12]]$). A design feature of GALA codes is the sum of multiple permutation matrices in some blocks of the check matrix. This enables high rate at smaller code block size and also reduces the number of times we have to rearrange the blocks of atoms. The a and b subscripts refer to the two summed permutations in the check matrix for $[[672,336,12]]$. From the positions determined by this sequence of permutations, assuming a distance between atom pairs of 12um, vertical and horizontal, and assuming movement acceleration of $5500 m/s^2$ \cite{zhou2025resource} we estimate the total time for syndrome extraction. $[[672,336,12]]$ takes 8.868ms with two crossed-AODs or 6.755ms parallelizing block and column permutations with four crossed-AODs. $[[132,30,12]]$ takes 5.257ms with two crossed-AODs or 3.100ms by parallelizing block and column permutations with four crossed-AODs.

%

\begin{table*}
\centering
\begin{tabular}{lllllll}
\toprule
Desideratum &  Quasi-Cyclic \cite{Hagiwara_2007}&Kasai \cite{kasai2026breakingorthogonalitybarrierquantum}
& QuEra
\cite{zhao2026ultrahighratequantumerrorcorrection} & Okada-Kasai \cite{okada2026pairpartitionconstructionscpmbasedquantum}&SPG~(Sec.~\ref{sec:construction})&DPG~(Sec.~\ref{sec:construction})\\\midrule
Rate& \color{purple}$?$&\color{teal} $\geq 50\%$& \color{teal}$\geq 50\%$&\color{teal}$\geq 50\%$& \color{teal}$\geq 50\%$&\color{teal}$\geq 50\%$\\
Length& \color{teal} $10^2 - 10^3$& \color{purple} $\geq 5\times 10^3$& \color{olive} $10^3-5\times 10^3$& \color{teal} $10^2 - 10^3$& \color{teal} $10^2 - 10^3$&\color{teal} $10^2 - 10^3$\\
Girth $\geq 6$&  {\color{teal}Yes} &{\color{teal}Yes} & {\color{teal}Yes} &{\color{teal}Yes} & {\color{teal}Yes} &{\color{teal}Yes} \\
AOD Compatibility&  {\color{teal}Parallel Cyclic}&\color{purple}?& \color{teal}Parallel Cyclic &\color{olive}Serial Cyclic& \color{olive}Serial Cyclic&{\color{teal}Parallel Cyclic}\\
Low-weight Basis&  \color{purple}?&\color{purple}?& \color{purple}?&\color{purple}?& {\color{teal}Yes} &{\color{teal} Yes}\\
 Automorphisms& \color{olive} Cyclic Shift & \color{purple}?&\color{purple}?& \color{olive}Cyclic Shift& \multicolumn{2}{c}{\color{teal} Customizable Shifts; Clifford if ZX-dual}\\
\bottomrule
\end{tabular}
\caption{Summary of desiderata of related constructions.}
\label{tab:comparison}
\end{table*}

\section{Numerical Results}
\subsection{Code Search}
Now, we describe the code searching process. First, we will pick $L, J, \Gamma$, a group product $\star$, and two groups: one small non-abelian group $H_k$ and one large abelian group $C_m$. The sweep over these hyperparameters can be restricted by the bounds of Section~\ref{sec:bounds}. Furthermore, the group structures and products used should be customized to match physical constraints and logical needs, which we covered in Section~\ref{sec:hardware} and Section~\ref{sec:logic}.

The sampling of the code proceeds in two stages: we first enumerate permutation-inequivalent top ansatzes $\mathcal{F}^H, \mathcal{G}^H$ that satisfy the anti-commutativity pattern given by the active set $\Gamma$. For small $k$, this can be done efficiently and exhaustively. To sample a DPG, we then randomly sample $L$ bottom elements $(c_1,...,c_L)\in C^L_m$, where $m$ is odd. We can estimate the distance of this bottom code quickly, and, whenever $\deg H_k$ is odd as it is for the $S_3$ tops we use, reject samples with $(\deg H_k)\,d_{\mathrm{bottom}}\leq L$, which by Corollary~\ref{cor:quotient-distance} already forces $d\leq L$. To sample a SPG, we need to pay additional care to ensure commutativity. We can either restrict our ansatzes to ones that meet the sufficient condition listed in Lemma~\ref{lemma:spk-commutivity} to allow for the maximum number of AOD compatible elements (it suffices to leave a $2/L$ fraction of the generators non-compatible in the most restricted case); or we can inherit the DPG ansatzes, and restrict the $c\in C_m$ sampled to only the ones satisfying Equation~\eqref{eq:spk-commutivity}. We label the former SPG-\emph{restricted} and the latter SPG-\emph{full}.

Finally, we post-select on girth $\geq 6$ and use \texttt{QDistRnd} to estimate the code distance~\cite{Pryadko_2022}. A \texttt{QDistRnd} run reports the lowest logical weight it happens to hit, and is therefore an \emph{upper} bound on $d$; we consequently spend trials adaptively, escalating only on codes that are still candidates for a large distance. Every sampled code receives an initial pass of $8$ random trials, and is promoted to the next tier --- $1024$, then $10^4$, then $10^5$ trials --- only while the smallest logical weight found so far is at least $6$, then $8$, then $10$. Codes that survive the last tier and lie on the Pareto frontier receive a final pass of $2\times10^5$ trials. We performed distance estimation in this way on more than $2\times10^5$ randomly sampled codes with girth $\geq 6$ and $n\leq 3000$, across the DPG, SPG-restricted, and SPG-full constructions and predominantly with polynomial lifts, with results reported in Figure~\ref{fig:codes}. Two reference families are plotted in black there for comparison: the co-designed Kasai baselines of~\cite{zhao2026ultrahighratequantumerrorcorrection} and the pair-partition CPM codes of~\cite{okada2026pairpartitionconstructionscpmbasedquantum}.

Because an estimate of this kind can only ever be an upper bound, every instance we report below is additionally \emph{certified}: we determine $d$ exactly by exhaustive enumeration, which is feasible in the regime we target, $n\leq 2500$ and $d\leq 16$. This is the certification standard established for this code family in~\cite{okada2026pairpartitionconstructionscpmbasedquantum}; the enumeration used here is our own implementation. It has two independent parts, and both are carried out separately in the $X$ and in the $Z$ sector. The first is \emph{exclusion}: we rule out every non-trivial logical operator of weight strictly below $d$, i.e.\ a complete search of $\ns{H_Z}\setminus\rs{H_X}$ (for the $X$ sector) and of $\ns{H_X}\setminus\rs{H_Z}$ (for the $Z$ sector) through weight $d-1$. In practice the search only has to reach weight $d-2$: every column weight of $H_X$ and $H_Z$ is odd for the instances reported here, so $\textbf{1}^{1\times n}$ lies in each row space, every logical operator has even weight, and the odd weights can be skipped. The second is a \emph{witness}: we exhibit one explicit operator of weight exactly $d$ in each sector and re-verify directly that it is a non-trivial logical. Together the two parts give $d_X$ and $d_Z$ exactly, hence $d=\min(d_X,d_Z)$. Explicit group-ring generators for the certified frontier instances are tabulated in Tables~\ref{tab:frontier-nondual}--\ref{tab:frontier-other}.
\begin{figure*}
    \centering
        \includegraphics[width=.46\linewidth]{Assets/small_codes.pdf}
        \includegraphics[width=0.46\linewidth]{Assets/dual_codes.pdf}
        \caption{Codes discovered in the preliminary search. Both panels plot certified distance $d$ horizontally against code length $n$ vertically, so towards the lower right --- more distance for fewer qubits --- is better.
        (Left): the practical regime, $n\leq2500$, rate $\geq1/2$, $w\leq12$ and girth $\geq6$; one representative, the best instance, is shown for each (girth, $L$, construction-type) class, and the annotated instances are certified frontier codes, listed with their row weight $w$. Marker shape and color distinguish the construction type (DPG, SPG-restricted, SPG-full) and the ZX-duality sector involution of Eq.~\eqref{eq:zx-sector-maps}, if any.
        (Right): ZX-dual GALA codes. Marker shape and color record the duality class (which sector involution of Eq.~\eqref{eq:zx-sector-maps} the instance realizes, $i\mapsto i$, $i\mapsto i+L/4$, or the reflection $i\mapsto -i$) and  girth. Consistently with Proposition~\ref{prop:zx-girth4}, the reflection class is girth $4$. The highlighted subset is the Pareto frontier of those instances that additionally satisfy $w=12$, girth $\geq6$ and rate $\geq1/2$; it contains the self-dual $[[1056,532,12]]$.}
    \label{fig:codes}
\end{figure*}

In addition to rediscovering codes matching known Kasai parameters, we also find:

\begin{enumerate}
    \item barrier-breaking instances with $d>w=12$: the $[[1752, 880, 14]]$, at $76\%$ of the length of the previously known $[[2304, 1156, \leq 14]]$~\cite{zhao2026ultrahighratequantumerrorcorrection}, and the $[[2232, 1120, 16]]$, which is shorter still than that same $n=2304$ baseline while reaching a higher and exactly certified distance;
    \item a $[[720, 360, 12]]$ at $62.5\%$ of the length of the previously known $[[1152, 580, \leq 12]]$~\cite{zhao2026ultrahighratequantumerrorcorrection}; here $d = w = 12$, so this instance does not itself break the weight barrier, but it attains the baseline distance on $5/8$ of the qubits;
    \item short codes reaching $d = 10$: the direct-product $[[480,240,10]]$, the shortest certified $d=10$ instance we found, and the semi-direct-product $[[528,264,10]]$, the shortest such instance among the SPG families;
    \item self-dual codes $[[132,30,12]]$ and $[[1056, 532,12]]$ with fold-transversal Clifford gates. The two sit in different regimes: $[[1056,532,12]]$ has girth $6$ and rate $532/1056 > 1/2$, whereas $[[132,30,12]]$ has girth $4$ and rate $30/132 \approx 0.227$, so the latter is not covered by the girth-$\geq6$, rate-$\geq1/2$ claims made for the other instances and is reported purely for its compactness and its transversal gates.
\end{enumerate}

One further direct-product instance, on $n=2160$ qubits with $k=1084$, has a completed $X$-sector enumeration establishing $d_X=16$ exactly; its $Z$-sector enumeration is not yet complete, so we do not quote it as a certified $[[n,k,d]]$ instance above. Every instance that we do quote carries an exactly certified distance in the sense of the previous paragraph --- all logical operators below $d$ excluded, plus a verified weight-$d$ witness on each side --- so no ``$\leq$'' appears on any GALA parameters that we report as results. The only upper bounds that appear anywhere in this work are the \texttt{QDistRnd} estimates of the uncertified instances in the right panel of Figure~\ref{fig:codes} and the distances quoted for the baselines.

\subsection{Circuit level logical error rates}

We estimate the logical error rate of our two smallest $d=12$ codes and compare them against the Gross code (BB[[144,12,12]])\cite{Bravyi_2024}, the smallest Pair Partition code (PP[[232, 62,12]]) \cite{okada2026pairpartitionconstructionscpmbasedquantum} and the Surface Code at distance 11 and 13\cite{Fowler_2012}. We use a standard error model where measurement, readout, and all gate errors are parametrized by a single phyiscal error $p$. We do not account for idle errors as consistent with prior work \cite{zhao2026ultrahighratequantumerrorcorrection}; indeed, with coherence time on the scale of 10s of seconds, the idle error during gate layer are many orders of magnitude smaller.

We report the probability that any logical qubit fails in cycle, or  $p_L = 1 - (1-p_{fail})^{1/kN_c}$, using the same decoder family on all baslines to have an apples-to-apples comparison. Sampling with Stim \cite{Gidney_2021} and decoding with a two-stage Relay-BP cascade\cite{zhao2026ultrahighratequantumerrorcorrection} on a greedy coloration syndrome extraction circuit, we accumulate $2.9 \times 10^9$ shots and $1.0 \times 10^4 $ logical failures across [[132,30,12]] and [672,336,12]] and four baselines, with the surface code decoded by minimum-weight matching. The result is reported in Figure~\ref{fig:ler}. We perform two fits to the LER obtained: one from the threshold theorem $C(p/p_{th})^{d_{eff}}$ \cite{Fowler_2012} and the other from the conventional polynomial-exponent form $p^{d/2} \exp(c_0 + c_1 p + c_2 p^2)$ \cite{Bravyi_2024}. We find the pseudo-thresholds of the GALA codes is at around $0.38–0.41\%$.
\begin{figure}[h]
    \centering
    \includegraphics[width=0.95\linewidth]{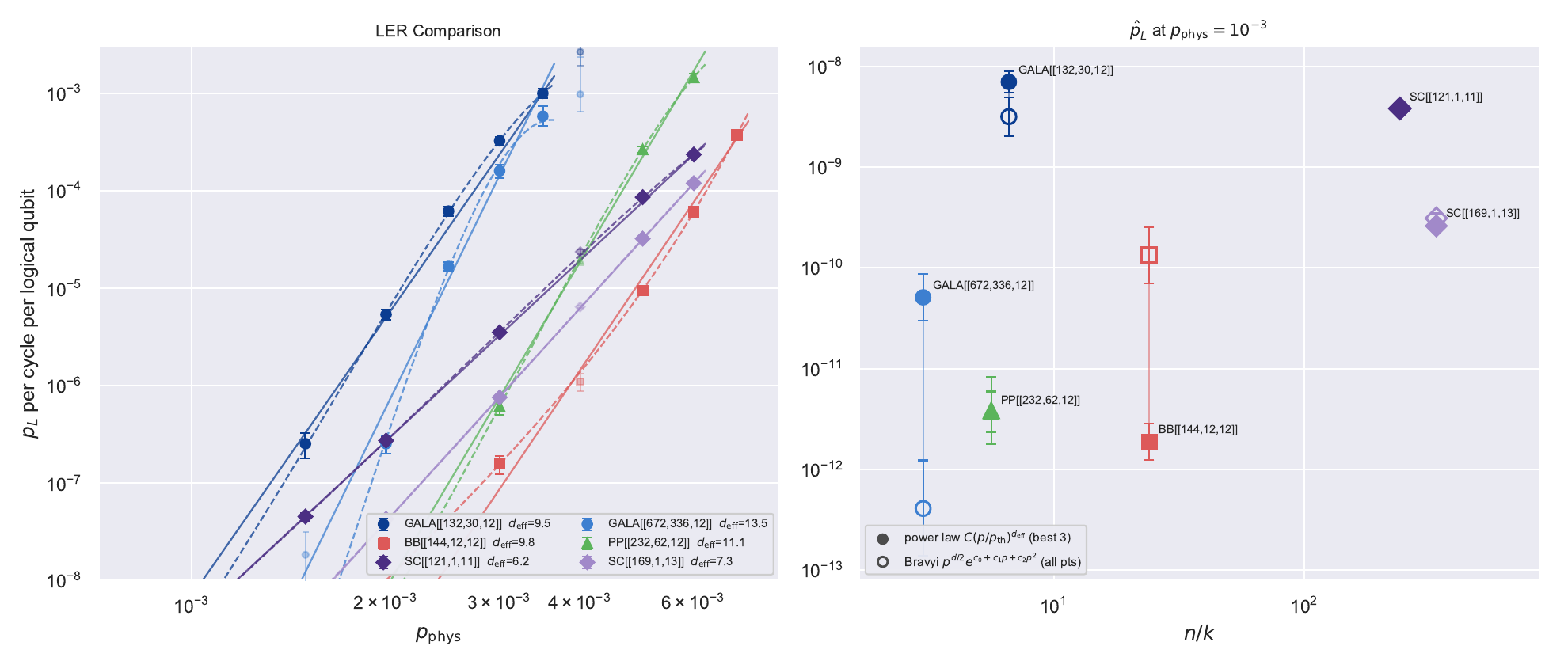}
    \caption{Circuit Level Logical Error Simulations. \emph{Left:} Each solid data point point correspond to a decoder experiment, where the light data points have less than 10 recorded failures and are not used in the fit. The two fits correspond to threshold theorem $C(p/p_{th})^{d_{eff}}$ \cite{Fowler_2012} (solid) and polynomial-exponent form $p^{d/2} \exp(c_0 + c_1 p + c_2 p^2)$ \cite{Bravyi_2024} (dashed). Note the [[132,30,12]] and [[676, 336,12]] GALA codes reache $p_L \sim 5\times 10^{-8}$ at $p_{phys} = 1.5\times 10^{-3}$ and $2\times 10^{-3}$, which is the lowest recorded experiments. \emph{Right:} Extrapolated logical erorr rates at $10^{-3}$. GALA achieves similar LER performances as the BB codes at much lower qubit overhead. }
    \label{fig:ler}
\end{figure}

Extrapolating to $p = 10^{-3}$, [[672,336,12]] reaches on average $5 \times 10^{-11}$ per logical qubit per cycle at an overhead of 3.0 physical qubits per logical qubit, against $2 \times 10^-12$ at an overhead of 24 for the gross code and $4 \times 10^{-12}$ at an overhead of 5.6 for the pair-partition code. The rate-1/2 block therefore gives up roughly 1.4 decades of logical error rate in exchange for an eightfold reduction in qubit overhead relative to gross. 

We note that $d_{eff}$ is estimated to be $9.5$ and $13.5$ for the [[132,30,12]] and [672,336,12]] codes in this range, indicating near-threshold waterfall behavior granting us additional error correction efficient, as similarly observed in prior work \cite{kasai2026breakingorthogonalitybarrierquantum,zhao2026ultrahighratequantumerrorcorrection}. For this reason, our extrapolation should be interpreted as a rough estimate, most accurate for GALA codes and surface codes since we have data points at $p = 1.5^10^3$ and $p = 2\times 10^{3}$ near $10^{-3}$. Our experiment on the [[132,30,12]] code at $p = 10^{-3}$ using the full cascade including MILP \cite{zhao2026ultrahighratequantumerrorcorrection} timed out at $5\times 10^6$ shots \emph{without a single error}, suggesting an estimated ceiling of ~$10^{-8}$, which is consistent with our extrapolation.

\section{Case Study}\label{sec:case-study}

The propositions of Section~\ref{sec:logic} are stated for a general GALA code, where the group data is the only input.  This section works one instance out end to end, so that every object in that section (the hypercube organization, the dirty cyclic shifts, and the fold-transversal Cliffords) appears as an explicit matrix.  We deliberately pick a small code, so that every logical action can be written out in full.  The instance is moreover \emph{strictly} self-dual, with the identity permutation as its ZX-duality, so its fold-transversal Clifford gates require no atom rearrangement at all, whereas a generic ZX-dual GALA code must first implement the fold of Prop.~\ref{def:dual-gala}.  Section~\ref{ssec:case-fold} makes this precise.

Let the code be $\mathcal{Q}\in\mathrm{GALA}_{12,5}(e\times C_{11})$ with $L=12$, $J=5$, where the top is trivial, of degree $1$, and the bottom $C_m=C_{11}$ is cyclic, so that $n=12\times1\times11=132$.  Every lift is a single group element, so the monomial hypotheses of Prop.~\ref{prop:distance-bound} apply: $J=5>L/4=3$, and the $J>L/4$ branch caps the distance at $d\leq L$.  An exhaustive search certifies $d=L=12$, so the cap is attained.

\subsection{Min- and Low-weight Logical Basis}\label{ssec:case-chains}

\paragraph{Greedy Weight Reduction} We can find a min-weight logical basis and its partner logicals via Gaussian elimination followed by a greedy weight reduction, as summarized in Table~\ref{tab:case-bases}.  Both bases attain the code distance, $\wt{\bar X_i}=\wt{\bar Z_i}=12$ for every $i$, but $\mathcal{L}_X^{\min}$ and $\mathcal{L}_Z^{\min}$ are \emph{not} orthonormal partners: their pairing matrix $M_{ij}=\braket{\bar Z_i,\bar X_j}$ has full rank $30$ but is not the identity.  That failure is forced.  Since $H_X=H_Z$, the $X$- and $Z$-logical classes are represented inside the same subspace $\ns H\subset\mathbb{F}_2^{132}$, and every logical has even weight, so $\braket{v,v}=0$ for every representative: a $Z$-logical sharing its partner's support would commute with it, and no minimum-weight basis can be its own symplectic partner.  Orthonormality can be restored by the change of basis $\bar Z'=M^{-1}\bar Z$, which gives $\braket{\bar X_i,\bar Z'_j}=\delta_{ij}$, but only at the cost of weight: the conjugate basis $\mathcal{L}_Z'$ has min/median/max logical weight $18/22/24$.

This greedily generated basis is unstructured: the shift $\sigma$ acts on it as an arbitrary Clifford circuit.  Algorithm~\ref{alg:hypercube-basis} is meant to fix that, but run on the orthonormal pair $(\mathcal{L}_X^{\min},\mathcal{L}_Z')$ that its input hypothesis requires, it emits a massively overcomplete family of $121=11\times11$ logicals spanning a $30$-dimensional space; after the second sector every further orbit of size $11$ adds exactly one new dimension. Since $m=11$ is odd, $R$ is semisimple, and $x^{11}-1=(x-1)g(x)$ with $g(x)=1+x+\dots+x^{10}$.  Here $g$ is irreducible over $\mathbb{F}_2$: the irreducible factors of $x^{11}-1$ are indexed by the cyclotomic cosets of $2$ modulo $11$, and $\ord{2 \bmod 11}=10$ leaves the single nontrivial coset $\{1,2,\dots,10\}$, so $x-1$ and one factor of degree $10$ are all there is.  Hence $R\cong\mathbb{F}_2\times K$ with $K=\mathbb{F}_2[x]/(g)\cong\mathbb{F}_{2^{10}}$.
Reducing $H$ modulo each factor gives $\rank_{\mathbb{F}_2}H(1)=1$, all $55$ rows collapsing onto the all-ones vector of the $12$ blocks, and $\rank_{K}(H\bmod g)=5$; this already accounts for $\rank H=1+5\times10=51$ and hence for $k=132-2\times51=30$.  
\begin{table}[h]
\centering
\begin{tabular}{lccc}
\toprule
basis & $\min$ & median & $\max$ \\
\midrule
$\mathcal{L}_X^{\min}$ & $12$ & $12$ & $12$ \\
$\mathcal{L}_Z^{\min}$ & $12$ & $12$ & $12$ \\
$\mathcal{L}_Z'$ (Orthonormal) & $18$ & $22$ & $24$ \\
\midrule
 $\mathcal{L}^t$, $\mathcal{L}^b$ \ref{fig:case-structure}& $12$ & $12$ & $22$ \\
 ${\mathcal{L}^t}', {\mathcal{L}^b}'$ (Orthonormal) \ref{fig:gatesortho} & $22$ & $24$ & $24$ \\
\bottomrule
\end{tabular}
\caption{Weight spectra of the logical bases of the $[[132,30,12]]$ code, with the
symplectic pairs ordered by weight where a pairing exists. $\mathcal{L}_X^{\min}$ and
$\mathcal{L}_Z'$ form an orthonormal pair, while the structured chains $\mathcal{L}^t$, $\mathcal{L}^b$ and ${\mathcal{L}^t}'$, ${\mathcal{L}^b}'$ has logical shifts and fold-transversal Cliffords (See Figures \ref{fig:case-structure}-\ref{fig:case-structure}).}
\label{tab:case-bases}
\end{table}

\paragraph{Top and Bottom Logical} A easier way to obtain a logical basis is by looking at its top and bottom logicals, which is made especially simple since the top is trivial. First, the top code is simply the all-one matrix, making it dual to the repetition code with exactly weight-2 logicals 
\begin{equation}\label{eq:case-top}
      \mathcal{L}^{t} =(\chi_{i}+\chi_{i+1})\otimes \textbf{1}^{1\times m}\quad, i\in\mathbb{Z}_{12},
\end{equation}
each of weight $2m=22$, where $e_i$ here denote the indicator vector $[\chi_i]_j = \delta _{i,j} $.  Consecutive pairs share a block so they anti-commute, but $ \mathcal{L}^{t}_i$ and $ \mathcal{L}^{t}_{i+2}$ are disjoint split into two sectors of six pairwise disjoint operators $\mathcal{L}^{t}_{e}$ and $ \mathcal{L}^{t}_{o}$, each of rank $5$ since they sum to $\mathbf 1\in\rs H$. Together the two sectors span all shift-invariant logicals. The bottom logical is exactly the latent row, which is also pairwise disjoint and has rank $10$ with the same argument. The remaining bottom logicals come from rank deficiencies of the active rows, which now correspond to the logical operators of the parent matrix, making them logials of the self-dual 2BGA code exactly with abelian quasi-cyclic lifts. 

Let us organize these logicals into five sector disjoint chains, as illustrated in Figure~\ref{fig:case-structure}. Together give $3\times11+2\times6=45$ generators of rank $30$, each with weight $12$ or $22$, and no two of them equal modulo $\rs H$.  Dropping the redundant chain $\mathcal{L}^{b}_{2}$ leaves the $34=11+11+6+6$ generators used in Eq.~\eqref{eq:case-alpha-blocks} and Figure~\ref{fig:case-gates}(c), still of rank $30$.  We note that within a chain the generators have disjoint supports and therefore commute, so the chains are not themselves symplectic bases.

\begin{figure}[t]
\centering
\includegraphics[width=0.95\linewidth]{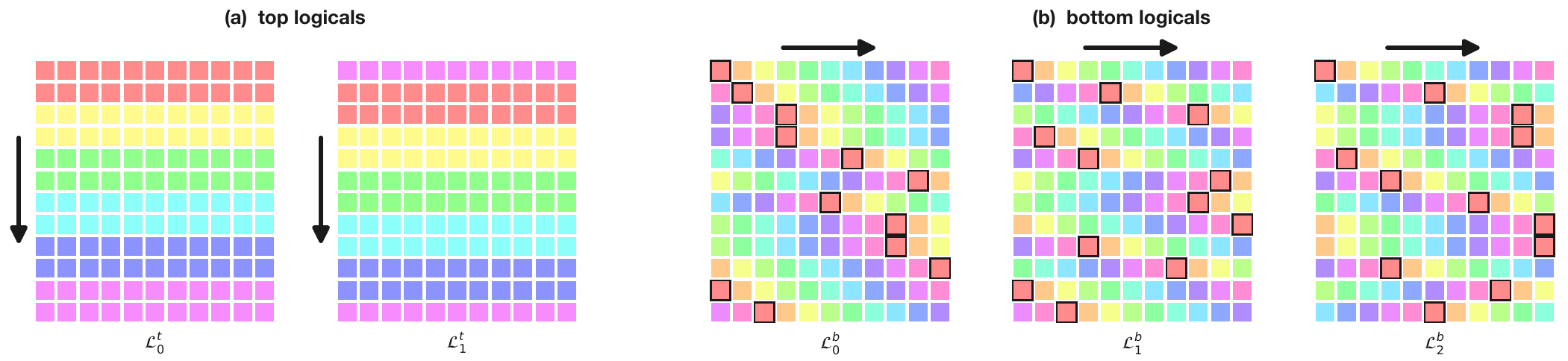}
\caption{The two logical families of the self-dual $[[132,30,12]]$ GALA code, drawn on the
$12\times11$ (block $\times$ $C_{11}$) grid of its $132$ qubits.  Within one chain the operators
are pairwise disjoint, one color each, and cover every qubit exactly once.
(a) The top logicals.  Each $\mathcal{L}^T_i$ is a pair of complete blocks of weight
$22$; taking every even/odd $i$ makes six of them disjoint, so the twelve modes fall into the two sectors $\mathcal{L}^{t}_{0}$ and $\mathcal{L}^{t}_{1}$, offset by one block. 
(b) The bottom logicals.  The outlined qubits correspond to the representative logical, with orbits under cyclic shift highlighted in different colors.}
\label{fig:case-structure}
\end{figure}

\subsection{Product structure of the chains}\label{ssec:case-products}

Each of the five chains partitions the $132$ qubits, so summing a whole chain gives the
all-ones vector, which is a stabilizer. One consequence is that long products \emph{inside} a chain never need a high-weight representative, since
$\sum_{j\neq j_0}\bar X^{(a)}_j=\mathbf 1+\bar X^{(a)}_{j_0}\equiv\bar X^{(a)}_{j_0}$ modulo
$\rs H$: the product of all but one member of a chain is again weight-$12$.  The redundancy of the third bottom chain extends this \emph{across} chains --- eliminating $\mathcal{L}^{b}_{2}$ against the other two gives
\begin{equation}\label{eq:case-cross}
    \bar X^{(2)}\ =\ (1+x^{8})\,\bar X^{(0)}
    \ +\ (x^{2}+x^{3}+x^{4}+x^{5}+x^{6})\,\bar X^{(1)} ,
\end{equation}
that is, $\bar X^{(2)}_j=\bar X^{(0)}_j+\bar X^{(0)}_{j+8}+\sum_{t=2}^{6}\bar X^{(1)}_{j+t}$ for every $j$, so this particular seven-fold product across $\mathcal{L}^{b}_{0}$ and $\mathcal{L}^{b}_{1}$ also has a weight-$12$ representative.  

\subsection{Transversal Clifford Operations}\label{ssec:case-fold}

With the chains in hand, the dirty cyclic shift can be implemented with a single rigid AOD translation from Section~\ref{sec:logic}; this shifts the bottom chains by one but  fixes the top chain.  The gate is ``dirty'' in the sense that the $11$ sites of a chain carry only $10$
logical qubits, so the $11$-cycle permutes an overcomplete label set rather than a basis.

Secondly, the ZX-duality of this code is $\tau=\mathrm{id}$, so the two fold-transversal operators become to depth-$1$ layers of single-qubit gates:
\begin{equation}\label{eq:case-fold}
    H_\tau=H^{\otimes 132},\qquad S_\tau=(S^{\dagger})^{\otimes 132},
\end{equation}
and $\tau$ is fault tolerant by definition since one qubit gate cannot spread errors. This is special to $\tau=\mathrm{id}$.  For a generic ZX-dual GALA code the rearrangement of Prop.~\ref{def:dual-gala} must actually be performed, and while the resulting $H_\tau$ still only permutes and rotates single qubits, the phase-type gate $S_\tau$ carries a
$CZ$ across every $\tau$-pair, which may drops
fault distance.

Both logical actions are read off from the gram matrix of the $X$-logical representatives under,
\begin{equation}\label{eq:case-alpha}
    \alpha_{ij}=\braket{\bar X_i,\bar X_j}
    =|\supp{\bar X_i}\cap\supp{\bar X_j}|\ \mathrm{mod}\ 2 .
\end{equation}
Since $\mathbf 1\in\rs H$, every logical has even weight, so $\alpha$ has zero diagonal; and $\alpha$ is nondegenerate, because a $v\in\ns H$ pairing trivially with all of $\ns H$ lies in $(\ns H)^{\perp}=\rs H$ and hence represents the trivial class.  Work in an orthonormal frame, $\braket{\bar X_i,\bar Z_j}=\delta_{ij}$, as supplied by $(\mathcal{L}_X^{\min},\mathcal{L}_Z')$ of Table~\ref{tab:case-bases}, and let $\alpha$ and $\beta$ be the Gram matrices of the $X$- and the $Z$-representatives.  Transversal Hadamard sends $X(v)\mapsto Z(v)$ on the same support, and naming the resulting class by pairing it against the frame gives
\begin{equation}\label{eq:case-H}
    \bar X_i\ \longmapsto\ \prod_j\bar Z_j^{\,\alpha_{ij}},\qquad
    \bar Z_i\ \longmapsto\ \prod_j\bar X_j^{\,\beta_{ij}},\qquad
    \beta_{ij}=\braket{\bar Z_i,\bar Z_j}.
\end{equation}
Dual bases of a nondegenerate symmetric form have mutually inverse Gram matrices, so
$\alpha\beta=I$: transversal $H$ is a logical Hadamard composed with the
$\mathrm{GL}(30,\mathbb{F}_2)$ basis change $\alpha$.  Transversal $S^\dagger$ fixes every
$\bar Z$ and sends
\begin{equation}\label{eq:case-S}
    \bar X(v)\ \longmapsto\ (-1)^{q(v)}\,\bar X(v)\,\bar Z(\alpha v),\qquad
    q(v)=\tfrac12\wt{v}\ \mathrm{mod}\ 2 ,
\end{equation}
the sign being the residual phase $(-i)^{\wt v}=(-1)^{\wt v/2}$ of the calculation above.  So
$S_\tau$ is the diagonal logical Clifford carrying a $\overline{CZ}$ on every $\alpha$-edge and a
single-qubit logical phase wherever $q=1$.  On the weight-$12$ basis of
Table~\ref{tab:case-bases} every $q_i=\tfrac{12}2\bmod 2=0$ and $S_\tau$ is a pure
$\overline{CZ}$ network of $198$ edges; on the chains, the weight-$22$ top modes have
$q=\tfrac{22}2\bmod2=1$ and do carry a phase gate.
\begin{figure}[t]
\centering
\includegraphics[width=0.95\linewidth]{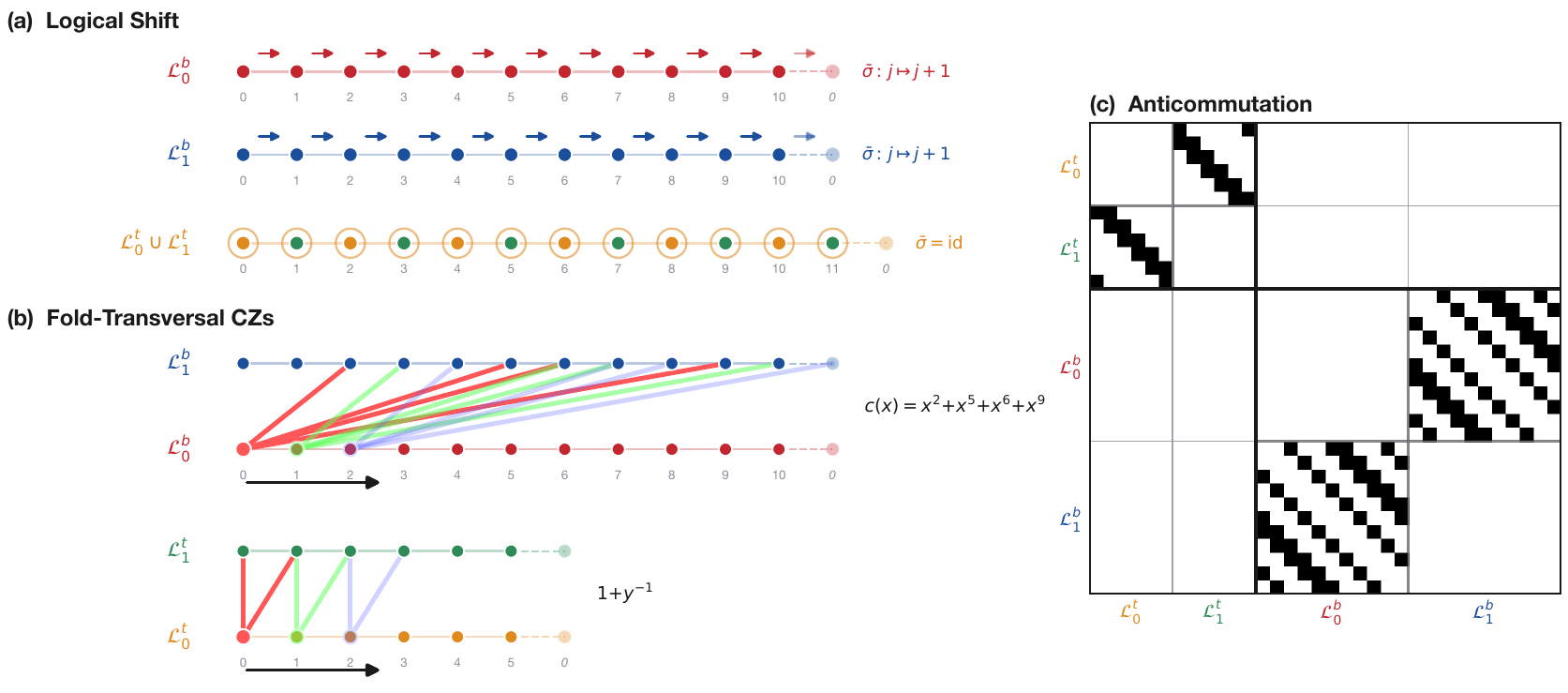}
\caption{Logical gates of the $[[132,30,12]]$ code.
(a) The $\mathbb Z_{11}$ shift: one rigid AOD translation induces
$\bar\sigma:j\mapsto j+1$ simultaneously on the bottom chains while leaving the top chains unchanged.
(b) The logical $\overline{CZ}$ circuit induced by $S_\tau=(S^{\dagger})^{\otimes132}$.  A bundle of four $CZ$s corresponding to the offsets $c(x)=x^2+x^5+x^6+x^9$ repeated at every site. 
On the top chain it is nearest-neighbor coupling corresponding to $12$-cycle
$\Lambda_{12}$.
(c) The anticommutation matrix $\alpha_{ij}=\braket{\bar X_i,\bar X_j}$ over the
$34=6+6+11+11$ generators of the four chains $\mathcal{L}^{t}_{0}$, $\mathcal{L}^{t}_{1}$,
$\mathcal{L}^{b}_{0}$, $\mathcal{L}^{b}_{1}$, that is, over the $45$ generators of
Section~\ref{ssec:case-chains} with the redundant third bottom chain $\mathcal{L}^{b}_{2}$
dropped; the remaining $34$ still have rank $30$ modulo $\rs H$.}
\label{fig:case-gates}
\end{figure}

The chains make the structure of $\alpha$ explicit (Figure~\ref{fig:case-gates}(b,c)), and in
the labeling of Section~\ref{ssec:case-chains} the two families come out with the same
shape.  Since $\sigma$ is a permutation it preserves the pairing, and the chain generators are
$\sigma$-orbits, so $\alpha_{(a,j),(b,j')}$ depends only on $j'-j$: every block of $\alpha$ is
circulant.  In the ordering
$(\mathcal{L}^{t}_{0},\mathcal{L}^{t}_{1},\mathcal{L}^{b}_{0},\mathcal{L}^{b}_{1})$
\begin{equation}\label{eq:case-alpha-blocks}
    \alpha\ =\
    \begin{pmatrix}
        0 & 1+y^{-1}\\ 1+y & 0
    \end{pmatrix}
    \ \oplus\
    \begin{pmatrix}
        0 & c(x)\\ c(x) & 0
    \end{pmatrix},
    \qquad
    c(x)=x^{2}+x^{5}+x^{6}+x^{9},
\end{equation}
where $y$ shifts the block index inside a top sector and $c$ is symmetric.  The top block is the adjacency matrix $\Lambda_{12}$ of the
$12$-site cycle, written in the two-sector ordering: $\bar D_b$ and $\bar D_{b'}$ share an
odd number of qubits only when $b'=b\pm1$, and the cycle alternates between
$\mathcal{L}^{t}_{0}$ and $\mathcal{L}^{t}_{1}$, so it is bipartite with the two offsets
$\{0,-1\}$.

Choosing an orthonormal basis instead of the chains turns the same gate into a perfect
matching: $\alpha$ decomposes into $15$ disjoint pairs, which is the $30$ logical
qubits, split as $10$ pairs across the two bottom chains and $5$ pairs across the two top sectors.  In that basis $S_\tau$ is $15$ disjoint $\overline{CZ}$ gates together with phases on $13$ of the $30$ logical qubits, and $H_\tau$ is $15$ copies of $(\bar H\otimes\bar H)\cdot\overline{\mathrm{SWAP}}$.  That basis is the legible one, but its representatives run from weight $12$ up to $24$; the chains keep every representative at $12$ or $22$, so they are the ones to compile against. See Figure~\ref{fig:gatesortho}.

\begin{figure}
    \centering
    \includegraphics[width=0.95\linewidth]{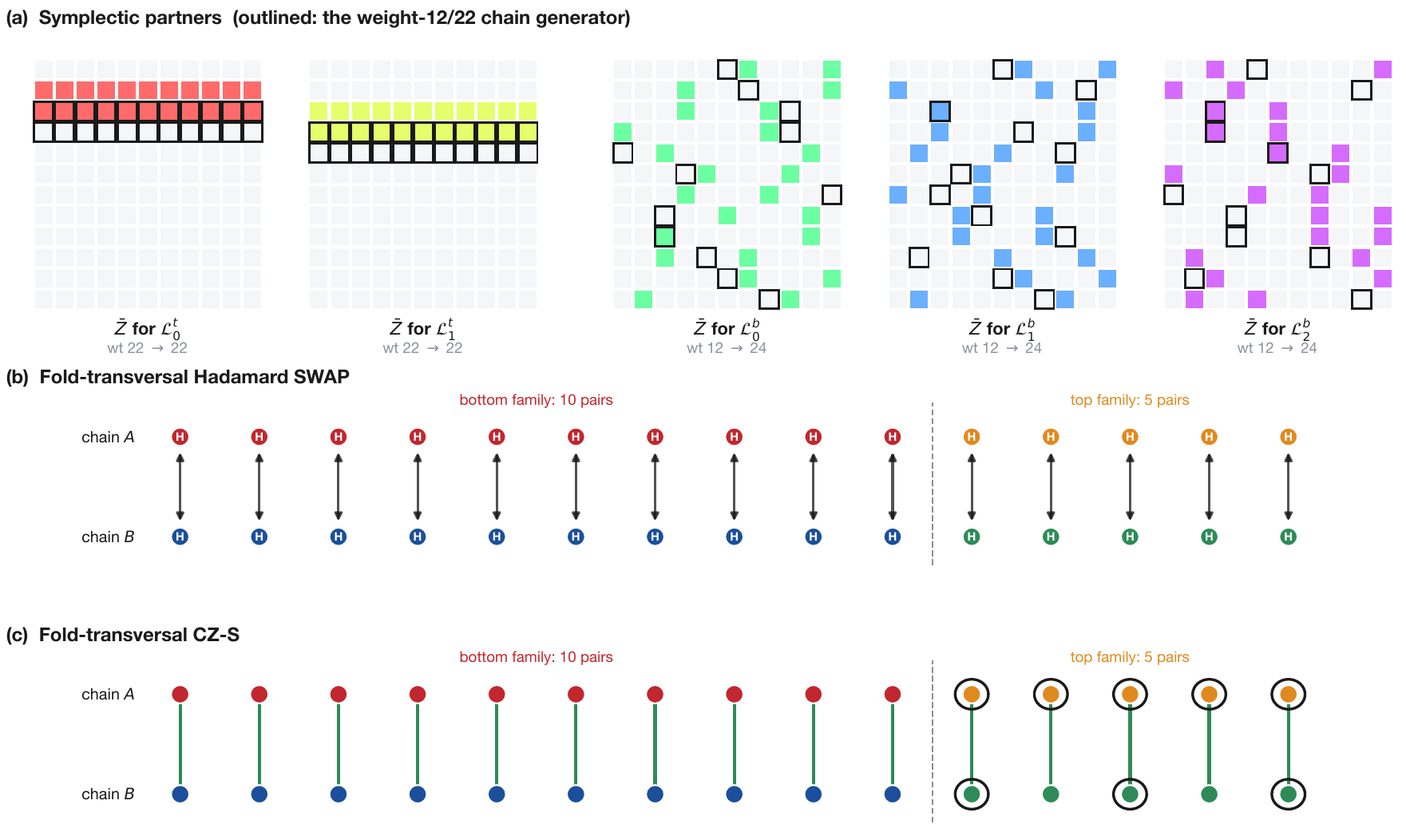}
    \caption{The logical basis with higher weight but simple transversal Cliffords. (a) the logical basis--the bottom logicals have higher weight compared to baseline, but the logical shifts have a clean description. (b) the transversal Hadamard implements a logical Hadamard gate followed by a logical SWAP between disjoint sectors. (c) transversal phase gate implements a CZ across disjoint sectors.}
    \label{fig:gatesortho}
\end{figure}
\clearpage
\begin{table}[t]
\centering
\scriptsize\linespread{1.35}\selectfont
\begin{tabular}{lcccc}
\toprule
parameters & rate & $L$ & $H_k\star C_m$ & generators \\
\midrule
$[[312, 156, 8]]$ & 0.500 & 8 & $S_{3} \times C_{13}$ & \parbox[t]{0.46\textwidth}{\raggedright $\mathcal{F}$: $x^{3}$, $\sigma_0\,x^{6}{+}\sigma_0\,x^{9}$, $x^{1}$, $x^{5}{+}x^{9}$ \\ $\mathcal{G}$: $x^{2}$, $\tau_1\,x^{4}{+}\sigma_0\,x^{10}$, $x^{1}$, $x^{2}{+}x^{10}$} \\
$[[408, 204, 8]]$ & 0.500 & 8 & $S_{3} \ltimes_{\mathcal{R}} C_{17}$ & \parbox[t]{0.46\textwidth}{\raggedright $\mathcal{F}$: $\sigma_0\,x^{[14,7,3]}$, $\sigma_0\,x^{[0,10,6]}{+}\sigma_0\,x^{[16,9,5]}$, $\sigma_0\,x^{[8,1,14]}$, $x^{8}{+}x^{6}$ \\ $\mathcal{G}$: $\sigma_0\,x^{[13,6,2]}$, $\sigma_1\,x^{[10,16,6]}{+}\sigma_1\,x^{[5,11,1]}$, $\sigma_1\,x^{[6,12,2]}$, $x^{6}{+}x^{12}$} \\
$[[304, 152, 8]]$ & 0.500 & 8 & $S_{2} \ltimes C_{19}^{2}$ & \parbox[t]{0.46\textwidth}{\raggedright $\mathcal{F}$: $x^{3}$, $x^{16}{+}x^{1}$, $\tau\,x^{[14,16]}$, $\tau\,x^{[2,4]}{+}\tau\,x^{[4,6]}$ \\ $\mathcal{G}$: $\tau\,x^{[7,9]}$, $x^{2}{+}x^{10}$, $x^{3}$, $\tau\,x^{[5,7]}{+}\tau\,x^{[8,10]}$} \\
$[[672, 340, 8]]$ & 0.506 & 12 & $S_{4} \ltimes C_{14}^{4}$ & \parbox[t]{0.46\textwidth}{\raggedright $\mathcal{F}$: $ba\,x^{[10,4,4,12]}$, $x^{[12,12,0,12]}$, $x^{[1,3,1,11]}$, $x^{[9,9,5,9]}$, $x^{[1,9,1,13]}$, $cba\,x^{[4,10,9,1]}$ \\ $\mathcal{G}$: $x^{13}$, $x^{[12,9,9,9]}$, $c^{2}ba\,x^{[9,11,5,5]}$, $aba\,x^{[3,0,6,13]}$, $x^{0}$, $x^{[7,13,13,13]}$} \\
$[[480, 240, 10]]$ & 0.500 & 8 & $S_{3} \times C_{2}\!\times\!C_{2}\!\times\!C_{5}$ & \parbox[t]{0.46\textwidth}{\raggedright $\mathcal{F}$: $x^{(0,0,1)}$, $\sigma_0\,x^{(1,1,4)}{+}\tau_1\,x^{(0,1,3)}$, $x^{(1,1,3)}$, $x^{(1,0,4)}{+}x^{(1,0,3)}$ \\ $\mathcal{G}$: $x^{(0,0,0)}$, $\sigma_0\,x^{(1,0,1)}{+}\sigma_0\,x^{(0,1,0)}$, $x^{(1,1,1)}$, $x^{(1,1,4)}{+}x^{(0,1,2)}$} \\
$[[1008, 508, 10]]$ & 0.504 & 12 & $S_{3} \times C_{4}\!\times\!C_{7}$ & \parbox[t]{0.46\textwidth}{\raggedright $\mathcal{F}$: $x^{(1,5)}$, $\sigma_0\,x^{(1,6)}$, $x^{(1,3)}$, $x^{(0,1)}$, $\tau_0\,x^{(2,0)}$, $x^{(1,4)}$ \\ $\mathcal{G}$: $x^{(3,2)}$, $x^{(1,0)}$, $\tau_0\,x^{(1,0)}$, $x^{(1,2)}$, $x^{(1,6)}$, $\sigma_0\,x^{(3,6)}$} \\
$[[560, 280, 10]]$ & 0.500 & 8 & $S_{2} \ltimes_{\mathcal{R}} C_{5}\!\times\!C_{7}$ & \parbox[t]{0.46\textwidth}{\raggedright $\mathcal{F}$: $x^{(2,3)}$, $x^{(0,3)}{+}x^{(4,0)}$, $\tau\,x^{[(0,6),(2,0)]}$, $\tau\,x^{[(2,5),(1,1)]}{+}\tau\,x^{[(1,3),(0,6)]}$ \\ $\mathcal{G}$: $\tau\,x^{[(2,2),(1,5)]}$, $x^{(2,5)}{+}x^{(3,5)}$, $x^{(1,3)}$, $\tau\,x^{[(4,1),(1,2)]}{+}\tau\,x^{[(0,0),(2,1)]}$} \\
$[[936, 472, 10]]$ & 0.504 & 12 & $S_{3} \ltimes_{\mathcal{R}} C_{2}\!\times\!C_{13}$ & \parbox[t]{0.46\textwidth}{\raggedright $\mathcal{F}$: $x^{(1,8)}$, $\sigma_0\,x^{[(1,9),(1,0),(0,9)]}$, $x^{(0,9)}$, $\sigma_1\,x^{[(1,12),(0,5),(1,1)]}$, $x^{(0,10)}$, $x^{(0,5)}$ \\ $\mathcal{G}$: $\sigma_1\,x^{[(1,3),(0,3),(0,12)]}$, $x^{(0,10)}$, $\sigma_1\,x^{[(0,5),(1,11),(0,7)]}$, $x^{(1,4)}$, $x^{(0,8)}$, $x^{(0,4)}$} \\
$[[528, 264, 10]]$ & 0.500 & 8 & $S_{2} \ltimes (C_{3}\!\times\!C_{11})^{2}$ & \parbox[t]{0.46\textwidth}{\raggedright $\mathcal{F}$: $\tau\,x^{[(1,8),(0,10)]}$, $\tau\,x^{[(2,8),(1,10)]}{+}\tau\,x^{[(2,1),(1,3)]}$, $\tau\,x^{[(1,7),(0,9)]}$, $x^{(1,9)}{+}x^{(1,6)}$ \\ $\mathcal{G}$: $\tau\,x^{[(1,7),(0,9)]}$, $\tau\,x^{[(2,2),(1,4)]}{+}\tau\,x^{[(0,1),(2,3)]}$, $\tau\,x^{[(2,1),(1,3)]}$, $x^{(2,9)}{+}x^{(1,4)}$} \\
$[[768, 388, 10]]$ & 0.505 & 12 & $S_{4} \ltimes C_{16}^{4}$ & \parbox[t]{0.46\textwidth}{\raggedright $\mathcal{F}$: $acb\,x^{[8,8,13,4]}$, $x^{[4,4,13,4]}$, $x^{[13,5,13,7]}$, $ba\,x^{[5,11,8,2]}$, $x^{[1,4,1,8]}$, $x^{[11,11,3,11]}$ \\ $\mathcal{G}$: $x^{[12,2,2,2]}$, $x^{14}$, $c^{2}ba\,x^{[15,2,1,10]}$, $x^{12}$, $x^{[12,13,13,13]}$, $aba\,x^{[12,7,15,1]}$} \\
$[[672, 336, 12]]$ & 0.500 & 8 & $C_{2 }\!\times\!C_{ 3 }\!\times\!C_{ 14}$ & \parbox[t]{0.46\textwidth}{\raggedright $\mathcal{F}$: $x^{(1,1,5)}$, $x^{(0,2,7)}{+}x^{(0,0,10)}$, $x^{(1,2,5)}$, $x^{(1,0,6)}{+}x^{(0,0,8)}$ \\ $\mathcal{G}$: $x^{(0,1,13)}$, $x^{(1,0,10)}{+}x^{(0,2,10)}$, $x^{(0,0,6)}$, $x^{(1,0,8)}{+}x^{(0,1,12)}$} \\
$[[840, 420, 12]]$ & 0.500 & 8 & $S_{3} \ltimes_{\mathcal{R}} C_{5}\!\times\!C_{7}$ & \parbox[t]{0.46\textwidth}{\raggedright $\mathcal{F}$: $x^{(1,2)}$, $\sigma_0\,x^{[(4,3),(1,6),(1,6)]}{+}\sigma_0\,x^{[(2,2),(4,5),(4,5)]}$, $\sigma_1\,x^{[(1,5),(3,1),(1,5)]}$, $\sigma_0\,x^{[(3,5),(0,1),(0,1)]}{+}\sigma_0\,x^{[(3,4),(0,0),(0,0)]}$ \\ $\mathcal{G}$: $\sigma_0\,x^{[(0,0),(2,3),(2,3)]}$, $\sigma_0\,x^{[(0,5),(2,1),(2,1)]}{+}\sigma_0\,x^{[(2,1),(4,4),(4,4)]}$, $x^{(3,0)}$, $\sigma_0\,x^{[(3,1),(0,4),(0,4)]}{+}\sigma_0\,x^{[(4,5),(1,1),(1,1)]}$} \\
$[[1080, 540, 12]]$ & 0.500 & 8 & $S_{3} \ltimes (C_{5}\!\times\!C_{9})^{3}$ & \parbox[t]{0.46\textwidth}{\raggedright $\mathcal{F}$: $x^{(1,6)}$, $x^{(2,3)}{+}x^{(0,1)}$, $\tau_0\,x^{[(0,6),(0,3),(1,0)]}$, $\sigma_0\,x^{[(3,0),(1,2),(1,0)]}{+}\sigma_1\,x^{[(3,6),(1,6),(3,4)]}$ \\ $\mathcal{G}$: $\sigma_0\,x^{[(1,1),(4,3),(4,1)]}$, $x^{(1,6)}{+}x^{(4,1)}$, $x^{(0,5)}$, $x^{[(2,6),(3,4),(3,4)]}{+}\tau_0\,x^{[(2,4),(4,0),(0,6)]}$} \\
$[[1752, 880, 14]]$ & 0.502 & 12 & $S_{2} \ltimes C_{73}^{2}$ & \parbox[t]{0.46\textwidth}{\raggedright $\mathcal{F}$: $x^{17}$, $x^{44}$, $x^{14}$, $x^{6}$, $\tau\,x^{[8,14]}$, $\tau\,x^{[42,40]}$ \\ $\mathcal{G}$: $x^{36}$, $x^{20}$, $x^{61}$, $x^{10}$, $\tau\,x^{[72,5]}$, $\tau\,x^{[22,20]}$} \\
$[[1764, 886, 14]]$ & 0.502 & 12 & $S_{3} \times C_{49}$ & \parbox[t]{0.46\textwidth}{\raggedright $\mathcal{F}$: $x^{18}$, $x^{6}$, $x^{22}$, $\sigma_0\,x^{21}$, $\tau_0\,x^{7}$, $x^{33}$ \\ $\mathcal{G}$: $\tau_0\,x^{17}$, $x^{48}$, $x^{20}$, $x^{32}$, $x^{47}$, $\sigma_0\,x^{11}$} \\
$[[1896, 952, 14]]$ & 0.502 & 12 & $S_{2} \ltimes_{\mathcal{R}} C_{79}$ & \parbox[t]{0.46\textwidth}{\raggedright $\mathcal{F}$: $x^{34}$, $\tau\,x^{[61,76]}$, $x^{34}$, $x^{47}$, $\tau\,x^{[13,62]}$, $x^{32}$ \\ $\mathcal{G}$: $x^{55}$, $x^{10}$, $\tau\,x^{[72,42]}$, $x^{40}$, $x^{35}$, $\tau\,x^{[37,52]}$} \\
$[[2232, 1120, 16]]$ & 0.502 & 12 & $S_{3} \times C_{62}$ & \parbox[t]{0.46\textwidth}{\raggedright $\mathcal{F}$: $\sigma_0\,x^{48}$, $\tau_1\,x^{28}$, $x^{61}$, $x^{30}$, $x^{3}$, $x^{39}$ \\ $\mathcal{G}$: $x^{38}$, $x^{51}$, $\sigma_0\,x^{8}$, $\tau_1\,x^{13}$, $x^{5}$, $x^{48}$} \\
$[[2328, 1168, 16]]$ & 0.502 & 12 & $S_{2} \ltimes C_{97}^{2}$ & \parbox[t]{0.46\textwidth}{\raggedright $\mathcal{F}$: $\tau\,x^{[37,14]}$, $x^{59}$, $x^{85}$, $x^{62}$, $\tau\,x^{[54,3]}$, $x^{22}$ \\ $\mathcal{G}$: $x^{87}$, $x^{18}$, $x^{42}$, $\tau\,x^{[32,78]}$, $x^{85}$, $\tau\,x^{[59,36]}$} \\
\bottomrule
\end{tabular}
\caption{Certified frontier GALA codes with stabilizer weight $w=12$, girth $\geq6$, rate $\geq1/2$ and distance $\geq8$. The generator column lists the lifts $\mathcal F=(F_0,\dots,F_{L/2-1})$ and $\mathcal G$, each entry a group-ring element $\sum_p h\,x^{s}$ with $h\in H_k$ with an identity $h$ omitted, and $x^{s}\in C_m$ are (products of) cyclic shifts. Each element in a tuple corresponds to one of the factors of $C_m$, and several tuples may be needed for semidirect lifts. $S_3$ elements use the $\sigma/\tau$ labels of Section~\ref{sec:construction} and $S_4$ elements are words in $\langle a,b,c \mid a^2=b^3=c^4=abc=e\rangle$ realized by $a=(2\,3)$, $b=(0\,1\,2)$, $c=(0\,2\,3\,1)$.}
\label{tab:frontier-nondual}
\end{table}

\begin{table}[t]
\centering
\scriptsize\linespread{1.35}\selectfont
\begin{tabular}{lccccc}
\toprule
parameters & rate & $L$ & $H_k\star C_m$ & duality & generators \\
\midrule
$[[576, 292, 8]]$ & 0.507 & 12 & $S_{3} \times C_{2}\!\times\!C_{8}$ & $r_1^{}$ & \parbox[t]{0.46\textwidth}{\raggedright $\mathcal{F}$: $x^{(1,0)}$, $\tau_2\,x^{(0,1)}$, $\sigma_1\,x^{(0,5)}$, $x^{(1,7)}$, $x^{(1,5)}$, $\sigma_0\,x^{(1,2)}$ \\ $\mathcal{G}$: $x^{(1,1)}$, $x^{(1,3)}$, $\sigma_1\,x^{(1,6)}$, $x^{(1,0)}$, $\tau_2\,x^{(0,7)}$, $\sigma_0\,x^{(0,3)}$} \\
$[[576, 294, 8]]$ & 0.510 & 12 & $S_{3} \times C_{2}\!\times\!C_{8}$ & $r_0^{}$ & \parbox[t]{0.46\textwidth}{\raggedright $\mathcal{F}$: $x^{(0,4)}$, $\sigma_1\,x^{(0,7)}$, $\tau_2\,x^{(1,6)}$, $x^{(1,4)}$, $\sigma_0\,x^{(0,6)}$, $\sigma_1\,x^{(1,6)}$ \\ $\mathcal{G}$: $x^{(0,4)}$, $\sigma_0\,x^{(0,1)}$, $\tau_2\,x^{(1,2)}$, $x^{(1,4)}$, $\sigma_1\,x^{(0,2)}$, $\sigma_0\,x^{(1,2)}$} \\
$[[720, 364, 10]]$ & 0.506 & 12 & $S_{3} \times C_{4}\!\times\!C_{5}$ & $r_0^{}$ & \parbox[t]{0.46\textwidth}{\raggedright $\mathcal{F}$: $x^{(2,4)}$, $\sigma_1\,x^{(0,2)}$, $x^{(0,2)}$, $x^{(3,3)}$, $\sigma_0\,x^{(0,1)}$, $\tau_0\,x^{(2,2)}$ \\ $\mathcal{G}$: $x^{(2,1)}$, $\sigma_0\,x^{(0,3)}$, $x^{(0,3)}$, $x^{(1,2)}$, $\sigma_1\,x^{(0,4)}$, $\tau_0\,x^{(2,3)}$} \\
$[[864, 436, 10]]$ & 0.505 & 12 & $S_{3} \times C_{2}\!\times\!C_{3}\!\times\!C_{4}$ & $r_1^{}$ & \parbox[t]{0.46\textwidth}{\raggedright $\mathcal{F}$: $x^{(1,0,2)}$, $\sigma_1\,x^{(0,0,3)}$, $\tau_0\,x^{(0,0,1)}$, $x^{(1,1,3)}$, $\sigma_0\,x^{(0,2,0)}$, $x^{(0,0,1)}$ \\ $\mathcal{G}$: $x^{(1,2,1)}$, $\sigma_1\,x^{(0,1,0)}$, $x^{(0,0,3)}$, $x^{(1,0,2)}$, $\sigma_0\,x^{(0,0,1)}$, $\tau_0\,x^{(0,0,3)}$} \\
$[[1008, 510, 10]]$ & 0.506 & 12 & $S_{3} \times C_{2}\!\times\!C_{2}\!\times\!C_{7}$ & $r_0^{}$ & \parbox[t]{0.46\textwidth}{\raggedright $\mathcal{F}$: $x^{(0,0,4)}$, $x^{(0,0,2)}$, $\sigma_1\,x^{(1,0,5)}$, $x^{(1,0,3)}$, $\tau_0\,x^{(0,0,5)}$, $\sigma_0\,x^{(0,1,0)}$ \\ $\mathcal{G}$: $x^{(0,0,3)}$, $x^{(0,0,5)}$, $\sigma_0\,x^{(1,0,2)}$, $x^{(1,0,4)}$, $\tau_0\,x^{(0,0,2)}$, $\sigma_1\,x^{(0,1,0)}$} \\
$[[1056, 532, 12]]$ & 0.504 & 12 & $S_{4} \times C_{2}\!\times\!C_{11}$ & $r_0^{}$ & \parbox[t]{0.46\textwidth}{\raggedright $\mathcal{F}$: $x^{(0,10)}$, $aca\,x^{(1,2)}$, $bacb\,x^{(0,3)}$, $x^{(0,4)}$, $c^{2}\,x^{(0,2)}$, $ba\,x^{(0,3)}$ \\ $\mathcal{G}$: $x^{(0,1)}$, $ba\,x^{(1,9)}$, $bacb\,x^{(0,8)}$, $x^{(0,7)}$, $c^{2}\,x^{(0,9)}$, $aca\,x^{(0,8)}$} \\
\bottomrule
\end{tabular}
\caption{Every certified \emph{ZX-dual} GALA code with stabilizer weight $w=12$, girth $\geq6$, rate $\geq1/2$ and distance $\geq8$. The duality column gives the sector involution of Prop.~\ref{def:dual-gala} in the notation of Eq.~\eqref{eq:zx-sector-maps}. The generator column lists the lifts $\mathcal F=(F_0,\dots,F_{L/2-1})$ and $\mathcal G$, similarly to Table~\ref{tab:frontier-nondual}.}
\label{tab:frontier-selfdual-good}
\end{table}

\begin{table}[t]
\centering
\scriptsize\linespread{1.35}\selectfont
\begin{tabular}{llcccccc}
\toprule
parameters & rate & $L$ & $J$ & $H_k\star C_m$ & $t_4$ & duality & generators \\
\midrule
$[[136, 36, 8]]$ & 0.265 & 8 & 3 & $C_{17}$ & 816 & $r_2$ & \parbox[t]{0.46\textwidth}{\raggedright $\mathcal{F}$: $x^{7}$, $x^{0}{+}x^{5}$, $x^{6}$, $x^{1}{+}x^{2}$ \\ $\mathcal{G}$: $x^{10}$, $x^{16}{+}x^{15}$, $x^{11}$, $x^{0}{+}x^{12}$} \\
$[[256, 68, 8]]$ & 0.266 & 8 & 3 & $S_{2} \ltimes_{\mathcal{R}} C_{16}$ & 1,536 & $r_2$ & \parbox[t]{0.46\textwidth}{\raggedright $\mathcal{F}$: $x^{9}$, $\tau\,x^{[9,4]}{+}\tau\,x^{[3,14]}$, $x^{13}$, $x^{12}{+}\tau\,x^{[5,0]}$ \\ $\mathcal{G}$: $x^{7}$, $x^{4}{+}\tau\,x^{[0,11]}$, $x^{3}$, $\tau\,x^{[12,7]}{+}\tau\,x^{[2,13]}$} \\
$[[132, 30, 12]]$ & 0.227 & 12 & 5 & $C_{11}$ & 660 & $r_2$ & \parbox[t]{0.46\textwidth}{\raggedright $\mathcal{F}$: $x^{2}$, $x^{4}$, $x^{3}$, $x^{6}$, $x^{3}$, $x^{9}$ \\ $\mathcal{G}$: $x^{9}$, $x^{2}$, $x^{8}$, $x^{5}$, $x^{8}$, $x^{7}$} \\
$[[136, 34, 12]]$ & 0.250 & 8 & 3 & $C_{17}$ & 544 & $r_2$ & \parbox[t]{0.46\textwidth}{\raggedright $\mathcal{F}$: $x^{2}$, $x^{1}$, $x^{3}{+}x^{16}$, $x^{13}{+}x^{12}$ \\ $\mathcal{G}$: $x^{15}$, $x^{4}{+}x^{5}$, $x^{14}{+}x^{1}$, $x^{16}$} \\
$[[168, 42, 12]]$ & 0.250 & 8 & 3 & $C_{3}\!\times\!C_{7}$ & 1,008 & $r_2$ & \parbox[t]{0.46\textwidth}{\raggedright $\mathcal{F}$: $x^{(0,1)}$, $x^{(0,3)}{+}x^{(0,2)}$, $x^{(2,4)}{+}x^{(0,4)}$, $x^{(1,3)}$ \\ $\mathcal{G}$: $x^{(0,6)}$, $x^{(2,4)}$, $x^{(1,3)}{+}x^{(0,3)}$, $x^{(0,4)}{+}x^{(0,5)}$} \\
$[[192, 40, 12]]$ & 0.208 & 12 & 5 & $C_{16}$ & 1,216 & $r_2$ & \parbox[t]{0.46\textwidth}{\raggedright $\mathcal{F}$: $x^{10}$, $x^{3}$, $x^{2}$, $x^{15}$, $x^{12}$, $x^{6}$ \\ $\mathcal{G}$: $x^{6}$, $x^{10}$, $x^{4}$, $x^{1}$, $x^{14}$, $x^{13}$} \\
$[[228, 46, 12]]$ & 0.202 & 12 & 5 & $C_{19}$ & 1,444 & $r_2$ & \parbox[t]{0.46\textwidth}{\raggedright $\mathcal{F}$: $x^{10}$, $x^{15}$, $x^{6}$, $x^{4}$, $x^{5}$, $x^{17}$ \\ $\mathcal{G}$: $x^{9}$, $x^{2}$, $x^{14}$, $x^{15}$, $x^{13}$, $x^{4}$} \\
$[[576, 104, 12]]$ & 0.181 & 12 & 5 & $S_{3} \times C_{16}$ & 2,112 & $r_0$ & \parbox[t]{0.46\textwidth}{\raggedright $\mathcal{F}$: $\sigma_1\,x^{8}$, $x^{10}$, $x^{8}$, $\sigma_1\,x^{9}$, $\sigma_0\,x^{8}$, $\sigma_1\,x^{7}$ \\ $\mathcal{G}$: $\sigma_0\,x^{8}$, $x^{6}$, $x^{8}$, $\sigma_0\,x^{7}$, $\sigma_1\,x^{8}$, $\sigma_0\,x^{9}$} \\
$[[272, 72, 12]]$ & 0.265 & 8 & 3 & $S_{2} \ltimes_{\mathcal{R}} C_{17}$ & 1,088 & $r_2$ & \parbox[t]{0.46\textwidth}{\raggedright $\mathcal{F}$: $\tau\,x^{[16,11]}$, $x^{2}{+}\tau\,x^{[8,3]}$, $\tau\,x^{[8,3]}$, $x^{4}{+}\tau\,x^{[2,14]}$ \\ $\mathcal{G}$: $\tau\,x^{[6,1]}$, $x^{13}{+}\tau\,x^{[3,15]}$, $\tau\,x^{[14,9]}$, $x^{15}{+}\tau\,x^{[14,9]}$} \\
$[[576, 112, 12]]$ & 0.194 & 12 & 5 & $S_{3} \ltimes C_{16}^{3}$ & 2,880 & $r_2$ & \parbox[t]{0.46\textwidth}{\raggedright $\mathcal{F}$: $\sigma_0\,x^{[9,6,8]}$, $x^{8}$, $\sigma_0\,x^{[14,11,13]}$, $\sigma_0\,x^{[8,5,7]}$, $x^{5}$, $\sigma_0\,x^{[4,1,3]}$ \\ $\mathcal{G}$: $\sigma_1\,x^{[8,7,10]}$, $\sigma_1\,x^{[13,12,15]}$, $x^{11}$, $\sigma_1\,x^{[9,8,11]}$, $\sigma_1\,x^{[3,2,5]}$, $x^{8}$} \\
$[[1320, 232, 12]]$ & 0.176 & 12 & 5 & $S_{2} \ltimes (C_{5}\!\times\!C_{11})^{2}$ & 1,320 & $r_0$ & \parbox[t]{0.46\textwidth}{\raggedright $\mathcal{F}$: $\tau\,x^{[(2,5),(1,4)]}$, $\tau\,x^{[(2,1),(1,0)]}$, $\tau\,x^{[(4,3),(3,2)]}$, $x^{(0,7)}$, $x^{(1,2)}$, $x^{(3,8)}$ \\ $\mathcal{G}$: $\tau\,x^{[(4,7),(3,6)]}$, $\tau\,x^{[(4,0),(3,10)]}$, $\tau\,x^{[(2,9),(1,8)]}$, $x^{(0,4)}$, $x^{(4,9)}$, $x^{(2,3)}$} \\
$[[248, 62, 12]]$ & 0.250 & 8 & 3 & $C_{31}$ & 0 & --- & \parbox[t]{0.46\textwidth}{\raggedright $\mathcal{F}$: $x^{15}$, $x^{22}{+}x^{10}$, $x^{7}{+}x^{27}$, $x^{17}$ \\ $\mathcal{G}$: $x^{25}{+}x^{1}$, $x^{25}$, $x^{3}{+}x^{12}$, $x^{26}$} \\
$[[280, 70, 12]]$ & 0.250 & 8 & 3 & $C_{35}$ & 0 & --- & \parbox[t]{0.46\textwidth}{\raggedright $\mathcal{F}$: $x^{27}{+}x^{20}$, $x^{4}$, $x^{26}$, $x^{29}{+}x^{12}$ \\ $\mathcal{G}$: $x^{18}$, $x^{6}{+}x^{29}$, $x^{22}{+}x^{31}$, $x^{26}$} \\
$[[328, 82, 12]]$ & 0.250 & 8 & 3 & $C_{41}$ & 0 & --- & \parbox[t]{0.46\textwidth}{\raggedright $\mathcal{F}$: $x^{5}$, $x^{6}$, $x^{40}{+}x^{25}$, $x^{13}{+}x^{38}$ \\ $\mathcal{G}$: $x^{21}{+}x^{29}$, $x^{18}$, $x^{40}{+}x^{39}$, $x^{35}$} \\
$[[336, 84, 12]]$ & 0.250 & 8 & 3 & $C_{3}\!\times\!C_{14}$ & 0 & --- & \parbox[t]{0.46\textwidth}{\raggedright $\mathcal{F}$: $x^{(0,5)}$, $x^{(0,3)}$, $x^{(2,12)}{+}x^{(2,10)}{+}x^{(0,11)}$, $x^{(1,11)}$ \\ $\mathcal{G}$: $x^{(0,8)}{+}x^{(2,13)}$, $x^{(2,5)}{+}x^{(2,6)}$, $x^{(0,4)}$, $x^{(1,5)}$} \\
\bottomrule
\end{tabular}
\caption{Certified GALA codes with $J> \frac L4\implies r< \frac12,\ d\leq w$. $t_4$ counts the $4$-cycles of the $H_X$ Tanner graph, so $t_4=0$ is exactly girth $\geq6$. The duality column gives the sector involution of Prop.~\ref{def:dual-gala} in the notation of Eq.~\eqref{eq:zx-sector-maps}. The generator column lists the lifts $\mathcal F=(F_0,\dots,F_{L/2-1})$ and $\mathcal G$, similarly to Table~\ref{tab:frontier-nondual}.}
\label{tab:frontier-other}
\end{table}

\end{document}